\newtheorem{theorem}{Theorem}          	
\newaliascnt{lemma}{theorem}				
\newtheorem{lemma}[lemma]{Lemma}              	
\newaliascnt{conjecture}{theorem}			
\newaliascnt{remark}{theorem}				
\newaliascnt{corollary}{theorem}			
\newtheorem{corollary}[corollary]{Corollary}      
\newaliascnt{definition}{theorem}			
\newtheorem{definition}[definition]{Definition}    
\newaliascnt{proposition}{theorem}			
\newtheorem{proposition}[proposition]{Proposition}  
\newaliascnt{example}{theorem}			
\newtheorem{example}[example]{Example}  	
\newaliascnt{problem}{theorem}			
\newtheorem{question}{Question}
\newcounter{result}
\newsavebox{\coloredbgbox}
\newenvironment{result}{
	\begin{lrbox}{\coloredbgbox}
		\begin{minipage}{\dimexpr1\linewidth-2\fboxsep-2\fboxrule\relax}
			\refstepcounter{result}
			\emph{Result \theresult.}
			\itshape
		}
  		{
		\end{minipage}
	\end{lrbox}
	\begin{center}
		\fcolorbox{black}{black!05}{\usebox{\coloredbgbox}}
	\end{center}
}
\newcommand{\specificref}[2]{\hyperref[#2]{#1~\ref*{#2}}}			
\newcommand{\specialref}[3]{\hyperref[#2]{#1~\ref*{#2}~#3}}
\newcommand{\figref}[2]{\hyperref[#1]{Fig.~\ref*{#1}#2}}	
\newcommand{\Figref}[2]{\hyperref[#1]{Figure~\ref*{#1}#2}}	
\newcommand{\figuresref}[2]{\hyperref[#1]{Fig.~\ref*{#1} to \ref*{#2}}}	
\newcommand{\Figuresref}[2]{\hyperref[#1]{Figures~\ref*{#1} to \ref*{#2}}}
\newcommand{\PP}[1]{{{\mathbb{P}}}\!\left( #1 \right)}       
\newcommand{\PPP}{{{\mathbb{P}}}}       
\newcommand{\ADom}{{\mathit{ADom}}}
\renewcommand{\epsilon}{\varepsilon}
\newcommand{\smallsection}[1]{\vspace{5mm}\noindent\textbf{#1.}}	
\newcommand{\introparagraph}[1]{\textbf{#1.}}        
\definecolor{gray}{rgb}{0.5,0.5,0.5}
\definecolor{niceblue}{rgb}{.8,.85,1}
\newcommand{\sql}[1]{\textup{\textsf{\small#1}}}
\newcommand{\datarule}{{\,:\!\!-\,}}			
\newcommand{\Var}{\textup{\texttt{Var}}}		
\newcommand{\HVar}{\textup{\texttt{HVar}}} 		
\newcommand{\JVar}{\textup{\texttt{JVar}}}
\newcommand{\SepVar}{\textup{\texttt{SVar}}}
\newcommand{\EVar}{\textup{\texttt{EVar}}}  	
\newcommand{\score}{\textit{score}}  			
\newcommand{\TopSet}{\textup{\texttt{MinCuts}}}  
\newcommand{\MinCuts}{\textup{\texttt{MinCuts}}}  
\newcommand{\MinPCuts}{\textup{\texttt{MinPCuts}}}
\newcommand{\at}{\textup{\texttt{at}}}  		
\newcommand{\joinp}[2]{\Join^p\!\! \big[#2\big]} 
\newcommand{\joind}[2]{\Join_{#1}\!\! \big[#2\big]}			
\newcommand{\projp}[1]{\pi^p_{\!-\!#1}}						
\newcommand{\projpd}[1]{\pi^p_{#1}}						 	
\newcommand{\projd}[1]{\pi_{\!-\!#1}}						
\newcommand{\projdd}[1]{\pi_{#1}}							
\newcommand{\minp}[1]{\min \! \left[ #1 \right]}
\newcommand{\avg}{\textrm{avg}}			
\renewcommand{\vec}[1]{\boldsymbol{\mathbf{#1}}}
    \def\url@leostyle{
        \@ifundefined{selectfont}{\def\UrlFont{\sf}}{\def\UrlFont{\scalefont{0.9}\ttfamily}}}
\newcommand{\set}[1]{\{#1\}}
\newcommand{\makeop}[2]                         
  {\ifx#2.\def\next##1{}\else\escapechar=-1     
  \def\next##1{\escapechar=92\def#2{#1}}        
  \expandafter\next\expandafter{\string#2}      
  \let\next\makeop\fi\next{#1}}                 
\def \var(#1){{\bf #1}}
\newcommand{\silentreminder}[1]{}
\newcommand{\eat}[1]{}
\newcommand{\bibpath}{propagation} 		
\newcommand\xqed[1]{%
  \leavevmode\unskip\penalty9999 \hbox{}\nobreak\hfill
  \quad\hbox{#1}}
\newcommand\markend{\xqed{$\blacksquare$}} 		
\newif\ifqed
\renewcommand{\@opargbegintheorem}[3]{%
    \parskip 0pt 
    \trivlist
    \item[%
    	\hskip 0\p@										
        \hskip \labelsep
        {\sc #1\ #2\             %
   \setbox\@tempboxa\hbox{(#3)}  %
        \ifdim \wd\@tempboxa>\z@ %
            \hskip 0\p@\relax    
            \box\@tempboxa       %
        \fi.}%
    ]
    \it
}
\def\@begintheorem#1#2{%
    \parskip 0pt 
    \trivlist
    \item[%
    	\hskip 0\p@										
        \hskip \labelsep
        {{\sc #1}\hskip 5\p@\relax#2.}%
    ]
    \it
}
\let\orgautoref\autoref                         		
\providecommand{\Autoref}[1]{
    \def\equationautorefname{Equation}
    \def\figureautorefname{Figure}%
	\def\subfigureautorefname{Figure}%
    \def\lemmaautorefname{Lemma}%
    \def\conjectureautorefname{Conjecture}%
    \def\remarkautorefname{Remark}%
    \def\propositionautorefname{Proposition}%
    \def\corollaryautorefname{Corollary}%
    \def\definitionautorefname{Definition}%
    \def\sectionautorefname{Section}%
    \def\subsectionautorefname{Section}%
    \def\subsubsectionautorefname{Section}%
    \def\exampleautorefname{Example}%
    \def\lessonautorefname{Result}%
    \def\resultautorefname{Result}%
    \orgautoref{#1}%
}
\renewcommand{\autoref}[1]{
    \def\figureautorefname{Fig.\!}%
    \def\subfigureautorefname{Fig.\!}
    \def\lemmaautorefname{Lemma}%
    \def\conjectureautorefname{Conjecture}%
    \def\remarkautorefname{Remark}%
    \def\propositionautorefname{Prop.}%
    \def\algorithmautorefname{Algorithm}%
    \def\sectionautorefname{Section}%
    \def\subsectionautorefname{Section}%
    \def\exampleautorefname{Example}%
    \def\lessonautorefname{Result}%
    \def\resultautorefname{Result}%
    \orgautoref{#1}%
}
\global\conf{}
\global\confinfo{}
\global\copyrightetc{}
\global\copyrightetc{Pre-print for a paper appearing in VLDB 2015}
\begin{document}
\title{Approximate Lifted Inference with Probabilistic Databases}

\author{
	Wolfgang Gatterbauer\\
	\affaddr{\normalsize{Carnegie Mellon University}}\\
	\textsf{gatt@cmu.edu}
 	\and 
	Dan Suciu \\
	\affaddr{\normalsize{University of Washington}}\\
	\textsf{suciu@cs.washington.edu}
}
\maketitle

\sloppy
\begin{abstract}
This paper proposes a new approach for approximate evaluation of \#P-hard queries with probabilistic databases. In our approach, every query is evaluated entirely in the database engine by evaluating a fixed number of query plans, each providing an upper bound on the true probability, then taking their minimum. 
We provide an algorithm that takes into account important schema information to enumerate only the minimal necessary plans among all possible plans.
Importantly, this algorithm 
is a strict generalization of all known results of PTIME self-join-free conjunctive queries: A query is safe if and only if our algorithm returns one single plan.
We also apply three relational query optimization techniques to evaluate all minimal safe plans 
very fast.
We give a detailed experimental evaluation of our approach and, in the process,
provide a new way of thinking about the value of probabilistic methods over non-probabilistic methods for ranking query answers.

\end{abstract}

\section{Introduction}\label{sec:intro}

\noindent
Probabilistic inference over large data sets is becoming a central
data management problem. Recent large knowledge bases, such as
Yago~\cite{DBLP:journals/ai/HoffartSBW13},
Nell~\cite{DBLP:conf/aaai/CarlsonBKSHM10}, DeepDive~\cite{deepdive},
or Google's Knowledge Vault~\cite{knoweldge-vault-kdd-2014}, have
millions to billions of uncertain tuples.  Data sets with missing
values are often ``completed'' using inference in graphical
models~\cite{DBLP:conf/sigmod/ChenW14,DBLP:conf/icde/StoyanovichDMT11}
or sophisticated low rank matrix factorization
techniques~\cite{bouchard-uai2014,DBLP:conf/kdd/SinghG08}, which
ultimately results in a large, probabilistic database. 
Data sets that use crowdsourcing are also 
uncertain~\cite{DBLP:conf/dasfaa/AmarilliAM14}.
And, very recently, probabilistic databases have been applied to
bootstrapping over samples of data~\cite{DBLP:conf/sigmod/ZengGMZ14}.

However, probabilistic inference is known to be \#P-hard in the size of the
database, even for some very simple
queries~\cite{DBLP:journals/vldb/DalviS07}.  Today's state of the art
inference engines use either sampling-based methods or are based on
some variant of the DPLL algorithm for Weighted Model Counting.  For example,
Tuffy~\cite{DBLP:journals/pvldb/NiuRDS11}, a popular implementation of
Markov Logic Networks (MLN) over relational databases, uses Markov Chain Monte Carlo methods (MCMC).
Gibbs sampling can be significantly improved by adapting some
classical relational optimization
techniques~\cite{DBLP:conf/sigmod/ZhangR13}.  For another example,
MayBMS~\cite{DBLP:conf/icde/AntovaKO07a} and its successor
Sprout~\cite{DBLP:conf/icde/OlteanuHK09} use query plans to guide a
DPLL-based algorithm for Weighted Model
Counting~\cite{DBLP:series/faia/GomesSS09}.  While both approaches
deploy some advanced relational optimization techniques, at their core
they are based on general purpose probabilistic inference techniques,
which either run in exponential time (DPLL-based algorithms have been
proven recently to take exponential time even for queries computable
in polynomial time~\cite{DBLP:conf/icdt/BeameLRS14}), 
or require many
iterations until convergence.

In this paper, we propose a different approach to query evaluation with
probabilistic databases.  In our approach, \emph{every query is
evaluated entirely in the database engine}.  Probability computation is
done at query time, using simple arithmetic operations and aggregates.
Thus, probabilistic inference is entirely reduced to a standard query
evaluation problem with aggregates. There are no iterations and no
exponential blowups. All benefits of relational engines (such as
cost-based optimizations, multi-core query processing, shared-nothing
parallelization) are directly available to queries over
probabilistic databases.  To achieve this, we compute approximate rather than exact probabilities, with a one-sided
guarantee: The probabilities are guaranteed to be upper bounds to the
true probabilities, which we show is \emph{sufficient to rank the top query
answers with high precision}.
Our approach consists of approximating the true query probability by
evaluating a fixed number of ``safe queries'' (the number depends on the query), each providing an upper
bound on the true probability, then taking their minimum.  

We briefly review ``safe queries,'' which are queries whose data
complexity is in PTIME.  They can be evaluated using safe query
plans~\cite{DBLP:journals/vldb/DalviS07,FinkO:PODS2014dichotomy,DBLP:series/synthesis/2011Suciu},
which are related to a technique called \emph{lifted inference} in the
AI literature~\cite{DBLP:conf/ijcai/BroeckTMDR11,jaeger-broeck-2012};
the entire computation is pushed inside the database engine and is thus efficient.
For example, the query $q_1(z)
\datarule R(z,x),S(x,y), K(x,y)$ has the safe query plan
$P_1 = \pi_z (R \Join_x (\pi_x (S \Join_{x,y} K)))$,
where every join
operator multiplies the probabilities, and every projection with
duplicate elimination treats probabilistic events as independent.
The literature describes several classes of safe
queries~\cite{DBLP:journals/jacm/DalviS12,FinkO:PODS2014dichotomy}
and shows that they can be evaluated very efficiently.
However, most queries are ``unsafe:" They are provably
\#P-hard and do not admit safe plans.

In this paper, we prove that
every conjunctive query without self-joins can be approximated by a
fixed number of safe queries, called ``\emph{safe dissociations}'' of the
original query.  
Every safe dissociation is guaranteed to return an upper
bound on the true probability and can be evaluated in PTIME data complexity.  
The number of safe dissociations depends
only on the query and not the data. 
Moreover, we show how to find ``\emph{minimal safe dissociations}'' which are sufficient to find the best
approximation to the given query.  
For example,
the unsafe query 
$q_2(z) \datarule R(z,x),S(x,y),T(y)$ has two minimal safe dissociations,
$q_2'(z) \datarule R(z,x),S(x,y), T'(x,y)$ and 
$q_2''(z) \datarule R'(z,x,y),S(x,y),T(y)$.
Both queries are safe and, by setting the probability of every tuple
$R'(z,x,y)$ equal to that of $R(z,x)$ and similarly for $T'$, they
return an upper bound for the probabilities of each answer tuple from
$q_2(z)$.  One benefit of our approach is that, if the query happens
to be safe, then it has a unique minimal safe dissociation, and our algorithm finds it.

\introparagraph{Contributions} 
(1) We show that there exists a
1-to-1 correspondence between the safe dissociations of a
self-join-free conjunctive query and its query plans.  One simple
consequence is that \emph{every} query plan computes an upper
bound of the true probability.  For example, the two safe
dissociations above correspond to the plans
$P_2' = \pi_z(R \Join_x (\pi_x (S \Join_{x,y}  T)))$, and
$P_2''= \pi_z((\pi_{zy}(R \Join_x S)) \Join_{y} T)$.
We give an
intuitive system R-style algorithm~\cite{DBLP:conf/sigmod/SelingerACLP79} for enumerating all minimal safe dissociations of a
query $q$.  
Our algorithm takes into account important schema-level
information: functional dependencies 
and whether a relation is deterministic or probabilistic.  We prove
that our algorithm has several desirable properties that make it a
\emph{strict generalization of previous algorithms} described in the
literature: If $q$ is safe 
then the algorithm returns
only one safe plan that computes $q$ exactly; 
and if $q$ happens to
be safe on the particular database instance (e.g., the data happens
to satisfy a functional dependency), then one of the minimal safe dissociations
will compute the query exactly.  
(2) We 
use
relational optimization techniques 
to compute all minimal safe
dissociations of a query efficiently in the database engine.  Some
queries may have a large number of dissociations; e.g., a
8-chain query has 
4279 safe dissociations,
of which 429 are minimal.
Computing 429 queries sequentially in the database engine would still be
prohibitively expensive.  Instead, we tailor three relational query optimization techniques to dissociation:
($i$) combining all minimal plans into \emph{one single query}, 
($ii$) reusing \emph{common subexpressions} with views, and 
($iii$) performing \emph{deterministic semi-join reductions}.
(3) We conduct an experimental validation of our technique,
showing that, with all our optimizations enabled, computing hard
queries over probabilistic databases incurs only a modest penalty over
computing the same query on a deterministic database: For example, the
8-chain query runs only a factor of $<10$ slower than on a deterministic
database.  We also show that the dissociation-based technique has high
precision for ranking query answers based on their output
probabilities.

In summary, our three main contributions are:

\begin{enumerate}[nolistsep,label=(\arabic*)]
\item 
We describe an efficient algorithm for finding all minimal safe dissociations for self-join-free conjunctive queries in the presence of schema knowledge.  If the
  query is safe, then our algorithm returns a single minimal plan,
  which is the safe plan for the query (\autoref{sec:dissociation}).

\item We show how to apply three traditional query optimization techniques to dramatically improve the performance of the dissociation (\autoref{sec:optimizations}).

\item We perform a detailed experimental validation of our approach, showing 
  both its effectiveness in terms of query performance, and the quality
  of returned rankings. Our experiments also include
 a novel comparison between deterministic and probabilistic ranking approaches (\autoref{sec:experiments}).

\end{enumerate}

\emph{All proofs for this submission} together with additional illustrating examples are available in our technical report on arXiv~\cite{arxivDissociation:2013}.

\section{Background}\label{sec:background}

\introparagraph{Probabilistic Databases} We fix a relational vocabulary
$\sigma = (R_1, \ldots, R_m)$.  A probabilistic database $D$ is a
database plus a function $p(t) \in [0,1]$ associating a probability to
each tuple $t \in D$.  
A {\em possible world} is a subset of $D$ generated by independently
including each tuple $t$ in the world with probability $p(t)$.  Thus,
the database $D$ is \emph{tuple-independent}.  
We use bold notation (e.g., $\vec x$) to denote
sets or tuples.
A \emph{self-join-free
  conjunctive query} is a first-order formula $q(\vec y) = \exists
x_1\ldots \exists x_k.(a_1 \wedge \ldots \wedge a_m)$ where each atom $a_i$
represents a relation $R_i(\vec x_i)$\footnote{We assume w.l.o.g.\ that
  $\vec x_i$ is a tuple of only variables without constants.}, the
variables $x_1, \ldots, x_k$ are called {\em existential variables},
and $\vec y$ are called the \emph{head variables} (or free
  variables).  The term ``self-join-free'' means that the atoms refer
to distinct relational symbols. We assume therefore w.l.o.g.\ that
every relational symbol $R_1, \ldots, R_m$ occurs exactly once in the
query.  Unless otherwise stated, a \emph{query} in this paper denotes
a self-join-free conjunctive query.  As usual, we abbreviate the query
by $q(\vec y) \datarule a_1, \ldots, a_m$,
and write $\HVar(q) = \vec y$, $\EVar(q) = \set{x_1, \ldots, x_k}$
and $\Var(q) = \HVar(q) \cup \EVar(q)$ for the set of head variables,
existential variables, and all variables of $q$. If $\HVar(q) =
\emptyset$ then $q$ is called a {\em Boolean} query.  We also write
$\Var(a_i)$ for the variables in atom $a_i$ and $\at(x)$ for the set of atoms that
  contain variable $x$. 
 The \emph{active
  domain} of a variable $x_i$ is denoted $\ADom_{x_i}$,\footnote{Defined formally as
  $\ADom_{x_i} = \bigcup_{j: x_i \in \Var(R_j)} \pi_{x_i}(R_j)$.}
 and the active domain of the entire database is $\ADom =
\bigcup_i \ADom_{x_i}$.  The focus of probabilistic query evaluation is
to compute $\PP{q}$; i.e.\ the probability that the query is true
in a randomly chosen world.

\introparagraph{Safe queries, safe plans} It is known that the data complexity of
any query $q$ is either in PTIME or \#P-hard. The former are called
\emph{safe queries} and are characterized precisely by a syntactic
property called \emph{hierarchical
  queries}~\cite{DBLP:journals/vldb/DalviS07}. We briefly review these
results:

\begin{definition}[Hierarchical query]\label{def:hierq}
Query $q$ is called hierarchical iff for any 
$x, y \in \EVar(q)$, 
one of the following three conditions
  hold: 
	$\at(x) \subseteq \at(y)$, 
	$\at(x) \cap \at(y) = \emptyset$, or
  	$\at(x) \supseteq \at(y)$.
\end{definition}

\noindent
For example, the query $q_1\datarule R(x,y), S(y,z), T(y,z,u)$ is
hierarchical, while $q_2 \datarule R(x,y), S(y,z), T(z,u)$ is not, as
neither of the three conditions holds for the variables $y$ and $z$.  

\begin{theorem}[Dichotomy~\cite{DBLP:journals/vldb/DalviS07}]\label{th:dichotomy}
  If $q$ is hierarchical, then
  $\PP{q}$ can be computed in PTIME in the size of $D$.
  Otherwise, computing $\PP{q}$ is \#P-hard
  in the size of $D$.
\end{theorem}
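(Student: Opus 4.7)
The plan is to prove the two directions separately, constructing a polynomial-time algorithm for hierarchical queries and reducing a known \#P-hard counting problem to the evaluation of non-hierarchical queries.

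For the PTIME direction, I would give a recursive algorithm that builds a safe plan. The recursion alternates between two rules. First, if the query is disconnected (its primal graph on existential variables splits into components $q = q_1 \wedge q_2$ with disjoint variables), then $\PP{q} = \PP{q_1} \cdot \PP{q_2}$ because the two subqueries depend on disjoint sets of tuples and are hence independent. Second, if $q$ is connected, the hierarchical property guarantees the existence of a \emph{root variable} $x \in \EVar(q)$ such that $\at(x) = \set{a_1,\ldots,a_m}$, i.e., $x$ appears in every atom; otherwise two existential variables $x, y$ with $\at(x) \not\subseteq \at(y)$, $\at(x) \not\supseteq \at(y)$, $\at(x) \cap \at(y) \neq \emptyset$ would exist, contradicting the hypothesis. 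For such a root $x$, the events $q[a/x]$ for distinct $a \in \ADom_x$ involve disjoint tuples and are independent, hence $\PP{q} = 1 - \prod_{a \in \ADom_x}(1 - \PP{q[a/x]})$. Each subquery $q[a/x]$ has one fewer variable and remains hierarchical, so the recursion terminates. Because each level makes at most $|\ADom|$ recursive calls and the recursion depth is bounded by the number of variables of $q$ (a constant in data complexity), the algorithm runs in PTIME.

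For the \#P-hard direction I would reduce from counting satisfying assignments of positive partitioned 2-DNF formulas $\phi = \bigvee_{(i,j) \in E} u_i \wedge v_j$, a problem known to be \#P-hard. If $q$ is non-hierarchical, pick variables $x, y \in \EVar(q)$ witnessing the failure of the hierarchical property; then there exist atoms $a \in \at(x) \setminus \at(y)$, $b \in \at(y) \setminus \at(x)$, and $c \in \at(x) \cap \at(y)$. Given a bipartite graph $(U, V, E)$, construct a probabilistic database as follows: populate $a$ with tuples encoding each $u_i \in U$ with probability $1/2$, populate $b$ with tuples encoding each $v_j \in V$ with probability $1/2$, and populate $c$ deterministically with a tuple for each edge $(i, j) \in E$. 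Remaining atoms and variables are filled with a single deterministic constant that makes them always true. A direct calculation then shows $\PP{q} = \#\phi \,/\, 2^{|U|+|V|}$, so any PTIME algorithm for $\PP{q}$ would solve \#PP2DNF.

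The main obstacle is the hardness reduction: the query may involve many other atoms and variables besides $a, b, c$ and $x, y$, and each of those must be instantiated so that they contribute a factor of exactly $1$ to $\PP{q}$ and do not introduce unwanted correlations. The careful bookkeeping --- assigning a single constant to each extraneous variable, padding each extraneous atom with a deterministic witness tuple, and showing that every satisfying world of the query corresponds bijectively to a satisfying assignment of $\phi$ --- is the delicate part, and is precisely where the non-hierarchical witness $(x, y, a, b, c)$ is used to embed the bipartite structure of $\phi$ into $q$.
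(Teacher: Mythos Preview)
The paper does not give its own proof of this theorem: it is stated in the Background section as a known result cited from Dalvi and Suciu~\cite{DBLP:journals/vldb/DalviS07}, and the paper relies on it without reproving it. So there is nothing in the paper to compare your argument against line by line.

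That said, your proposal is essentially the original Dalvi--Suciu proof, and it is correct in outline. A few remarks. For the PTIME direction, your two-rule recursion (independent join on disconnected components, independent projection on a root variable) is exactly the characterization the paper records as \autoref{lemma:hierarchical}; the paper calls your ``root variable'' a \emph{separator variable} and uses it to define the unique safe plan. Your justification that a connected hierarchical query always has such a variable is right: the sets $\at(x)$ form a laminar family, so distinct maximal ones are disjoint, which would disconnect the query. For the hardness direction, your reduction from \#PP2DNF via the witness pattern $a \in \at(x)\setminus\at(y)$, $b \in \at(y)\setminus\at(x)$, $c \in \at(x)\cap\at(y)$ is the standard one, and your ``main obstacle'' paragraph correctly identifies the only real work: fixing every extraneous variable to a constant and padding every extraneous atom with a single deterministic tuple so that the lineage of $q$ collapses to $\bigvee_{(i,j)\in E} a_i \wedge b_j$. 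One small point worth making explicit in a full write-up is that the atoms $a$, $b$, $c$ may themselves contain extra variables beyond $x$ and $y$; these too get fixed to constants, which is why $a$ effectively becomes a unary relation on $x$, $b$ a unary relation on $y$, and $c$ a binary relation on $(x,y)$.
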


We next give an equivalent, recursive characterization of hierarchical
queries, for which we need a few definitions.  
We write $\SepVar(q)$ for the \emph{separator variables} (or root variables); i.e.\ the set of existential variables that appear in every atom.
$q$ is disconnected if its atoms can be partitioned into two
non-empty sets that do not share any existential variables 
(e.g., $q \datarule R(x,y),S(z,u),T(u,v)$ is disconnected and has
two connected components: ``$R(x,y)$'' and ``$S(z,u),T(u,v)$'').
For every set of
variables $\vec x$, denote $q - {\vec x}$ the query obtained by
removing all variables $\vec x$ (and decreasing the arities of the
relation symbols that contain variables from $\vec x$).

\begin{lemma}[Hierarchical queries] \label{lemma:hierarchical}
  $q$ is hierarchical iff either: (1) $q$ has a single atom; (2) $q$ has $k
  \geq 2$ connected components all of which are hierarchical; or (3)
  $q$ has a separator variable $x$ and $q-x$ is hierarchical.
\end{lemma}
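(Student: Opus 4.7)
The plan is to prove both directions of the equivalence. The backward direction is essentially mechanical and I would dispatch it first: case (1) is immediate from \autoref{def:hierq}; for case (2), any two existential variables $x,y$ that lie in the same component fall under the hierarchical condition for that component, while if they lie in different components, then by definition of ``connected component'' the atoms containing $x$ are disjoint from those containing $y$, so $\at(x) \cap \at(y) = \emptyset$; for case (3), the key observation is that since the separator variable $x$ appears in every atom, removing $x$ preserves the set $\at(y)$ for every other existential variable $y$, so the hierarchical condition among variables $\neq x$ transfers verbatim from $q-x$ to $q$, and for pairs involving $x$ itself we have $\at(x)$ equal to the set of all atoms, hence $\at(y) \subseteq \at(x)$ trivially.

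For the forward direction, I would assume $q$ is hierarchical and split on the structure. If $q$ has one atom, we are done. If $q$ has $\geq 2$ atoms and is disconnected, then each connected component $q_i$ consists of a subset of the atoms, and any two existential variables inside $q_i$ satisfy the hierarchical condition in $q$, hence also in $q_i$ (their atom-sets are the same); so case (2) holds.

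The remaining case is where $q$ is connected with $\geq 2$ atoms, and here I need to produce a separator variable. The main obstacle — and the only nontrivial step in the proof — is showing that connectedness plus hierarchicality forces the existence of an existential variable appearing in every atom. My approach is to pick an existential variable $x$ such that $\at(x)$ is maximal under set inclusion, and argue by contradiction that $\at(x)$ is the set of all atoms. If some atom $a$ lies outside $\at(x)$, then by connectedness there is a chain of atoms from $\at(x)$ to $a$ where consecutive atoms share an existential variable; this yields two atoms $b \in \at(x)$ and $c \notin \at(x)$ sharing some existential variable $z$. Then $b \in \at(x) \cap \at(z)$ (nonempty), while $c \in \at(z) \setminus \at(x)$. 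The hierarchical condition on $x,z$ forces either $\at(z) \subseteq \at(x)$ (contradicting $c \in \at(z) \setminus \at(x)$) or $\at(x) \subsetneq \at(z)$ (contradicting maximality of $\at(x)$). Hence $x$ is a separator variable.

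Finally, to establish case (3), I would observe that $q-x$ is hierarchical: for any two existential variables $y,z$ of $q-x$, their atom-sets are unchanged by deleting $x$ (since $x$ appears in all atoms, deletion merely shrinks arities), so the hierarchical condition inherited from $q$ still holds. This completes the forward direction and the proof.
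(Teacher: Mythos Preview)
Your proposal is correct. The paper itself does not include a proof of this lemma in the main text (it defers all proofs to the companion technical report~\cite{arxivDissociation:2013}), so there is no in-paper argument to compare against directly. That said, your argument is the standard one for this well-known characterization: the backward direction is routine, and for the forward direction in the connected multi-atom case you pick an existential variable $x$ with $\at(x)$ maximal and use connectedness together with the hierarchical condition to rule out any atom outside $\at(x)$. One small point worth making explicit for completeness: in the connected case with $\geq 2$ atoms, you rely on $\EVar(q) \neq \emptyset$ to pick such an $x$; this holds because connectedness in the paper is defined via shared \emph{existential} variables, so a query with $\geq 2$ atoms and no existential variables would be disconnected and handled under case~(2).
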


\begin{definition}[Query plan]\label{def:queryPlans}
  Let $R_1, \ldots, R_m$ be a relational vocabulary. A query
    plan $P$ is given by the grammar
  \begin{align*}
   P 	::=&\, R_i(\vec x)
   \,\,\,|\,\, 	\projdd{\vec x} P
   \,\,\,|\,\,\! 	\joind{}{P_1, \ldots, P_k} 
 \end{align*}
 where $R_i(\vec x)$ is a relational atom containing the variables
 $\vec x$ and constants, $\projdd{\vec x}$ is the \emph{project operator with
 duplicate elimination}, and $\joind{}{\ldots}$ is the \emph{natural join} in prefix notation, which we
 allow to be $k$-ary, for $k \geq 2$.  We require that joins and
 projections alternate in a plan.  We do not distinguish between join
 orders, i.e.\ $\joind{}{P_1, P_2}$ is the same as $\joind{}{P_2,P_1}$.
\end{definition}

\noindent
We write $\HVar(P)$ for the head variables of $P$ (defined as the variables
$\vec x$ of the top-most projection $\projdd{\vec x}$,
or the union of
  the top-most projections if the last operation is a join).  
Every plan $P$ represents a query $q_P$
defined by taking all atoms mentioned in $P$ and setting
$\HVar(q_P) = \HVar(P)$.
For notational convenience, we also use the ``project-away'' notation,
by writing $\projd{\vec y}(P)$ instead of $\projdd{\vec x}(P)$, where
$\vec y$ are the variables being projected away; i.e. $\vec y =
\HVar(P) - \vec x$.

Given a probabilistic database $D$ and a plan $P$, each output tuple
$t \in P(D)$ has a 
$\score(t)$, defined
inductively on the structure of $P$ as follows: 
If $t \in R_i(\vec
x)$, then $\score(t)= p(t)$, i.e.\ its probability in $D$;
if $t \in \,
\joind{}{P_1(D), \ldots, P_k(D)}$ where $t = \joind{}{t_1, \ldots,
  t_k}$, then $\score(t) = \prod_{i=1}^k\score(t_i)$; and
if $t \in
\projdd{\vec x}(P(D))$, and $t_1, \ldots, t_n \in P(D)$ are all the
tuples that project into $t$, then $\score(t) = 1 - \prod_{i=1}^n
(1-\score(t_i))$.  
In other words, $\score$ computes a probability by
assuming that all tuples joined by $\Join$ are independent, and
all duplicates eliminated by $\projdd{}$ are also independent. If
these conditions hold, then $\score$ is the correct query probability,
but in general the score is different from the probability. Therefore, 
\emph{$\score$ is not equal to the probability, in general}, and is also called an
extensional semantics~\cite{DBLP:journals/tois/FuhrR97,Pearl:1988pb}.
For a Boolean plan $P$, we get one single score,
which we denote $\textit{score}(P)$.  

The requirement that joins and projections alternate is w.l.o.g.\
because nested joins like $\joind{}{\joind{}{R_1, R_2}, R_3}$ can
be rewritten into $\joind{}{R_1, R_2, R_3}$ while keeping the
same probability score.  For the same reason we do not distinguish
between different join orders.

\begin{definition}[Safe plan]\label{def:safePlan}
  A plan $P$ is called safe iff, for any join operator
  $\joinp{}{P_1, \ldots, P_k}$, all subplans have the same head variables:
  $\HVar(P_i) = \HVar(P_j)$ for all $1 \leq i, j \leq k$.
\end{definition}

The recursive definition of \autoref{lemma:hierarchical} gives us
immediately a safe plan for a hierarchical query.  Conversely, every
safe plan defines a hierarchical query.  The following summarizes our
discussion:

\begin{proposition}[Safety \cite{DBLP:journals/vldb/DalviS07}]\label{prop:uniqueSafePlan}
  (1) Let $P$ be a plan for the query $q$. Then $\textit{score}(P) =
  \PP{q}$ for any probabilistic database iff $P$ is safe. (2)
  Assuming \#P$\neq$PTIME, a query $q$ is safe (i.e.\ $\PP{q}$ has
  PTIME data complexity) iff it has a safe plan $P$; in that case the
  safe plan is unique, and $\PP{q} = \score(P)$.

\end{proposition}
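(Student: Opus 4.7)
The plan is to prove the two parts in turn, using induction on plan structure for (1) and the dichotomy theorem together with the recursive characterization of hierarchical queries in \autoref{lemma:hierarchical} for (2).

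For part (1), the \emph{if} direction, I would proceed by structural induction on $P$. The base case $P=R_i(\vec x)$ is immediate from the definition of $\score$. For the join case $P = \joind{}{P_1,\ldots,P_k}$, the safety condition $\HVar(P_i)=\HVar(P_j)$ together with self-join-freeness guarantees that the Boolean events corresponding to the subqueries $q_{P_i}$ on any fixed output tuple are defined over disjoint sets of relational atoms, hence independent; so the product-of-scores rule coincides with $\PPP$ of a conjunction of independent events. For the projection case $P = \projdd{\vec x}(P')$, the variables projected away play the role of ``separator variables'' in the subquery $q_{P'}$ (otherwise safety of $P'$ is violated higher up), and the lineages of the distinct ground instantiations are again disjoint, making the $1-\prod(1-\score(t_i))$ formula the correct disjunction-of-independents probability. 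The \emph{only if} direction is the main obstacle: I need to exhibit, for any unsafe plan $P$, a database on which $\score(P)\neq \PP{q}$. I would locate the lowest unsafe operator in $P$ — either a join where some head variable appears in one subplan but not another, or (after the alternation convention) the projection below it — and build a small witness instance where the atoms participating in that operator share a tuple, so that the extensional rule double-counts correlated events. A careful bookkeeping argument, using that all operators below are safe (and therefore compute true probabilities on the sub-relations), then propagates the discrepancy to the output.

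For part (2), the \emph{if} direction is easy: given a safe plan $P$, part (1) yields $\PP{q}=\score(P)$, and $\score(P)$ is computed bottom-up by a constant number of relational operators plus arithmetic, so the data complexity is PTIME. For the \emph{only if} direction, assume $q$ is safe (PTIME); by \autoref{th:dichotomy}, $q$ is hierarchical, and by \autoref{lemma:hierarchical} it decomposes recursively either as a single atom, as a union of hierarchical connected components, or via a separator variable $x$ with $q-x$ hierarchical. I would turn this recursion into a plan: a single atom becomes a leaf, a disconnected query becomes a top-level join of the plans for its components (all with the same empty set of shared existential variables, hence satisfying the safety condition after projecting away the right variables), and a query with separator $x$ becomes $\projdd{\vec y}$ applied to the recursively built plan for $q-x$ (with $x\in\vec y$ if $x\notin\HVar(q)$). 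A straightforward induction shows the resulting plan is safe and represents $q$.

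Finally, for uniqueness I would argue by induction on the same hierarchical decomposition. Because plans are considered up to join reordering and because joins and projections must alternate, the top-most operator of any safe plan for $q$ is forced: it is a projection exactly when $q$ has a non-head separator variable $x$ (and then $x$ must be projected away at the top, as otherwise $\HVar$ mismatches below would break safety), and it is a $k$-ary join exactly when $q$ is disconnected with $k$ components (whose partition into sub-plans is determined by the connected components, since no safe plan can split atoms of the same component across different join children without violating the equal-$\HVar$ condition). Applying the induction hypothesis to each sub-plan yields that the safe plan is unique. Combining this with part (1) gives $\PP{q}=\score(P)$ for that unique plan, completing the proof.
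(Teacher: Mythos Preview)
The paper does not prove \autoref{prop:uniqueSafePlan}; it is quoted as a known result from~\cite{DBLP:journals/vldb/DalviS07} and simply summarizes the preceding discussion. So there is no proof in the paper to compare against, and your proposal is effectively a reconstruction of the original Dalvi--Suciu argument.

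Your outline is the standard one and is essentially correct, but two points deserve tightening. First, in the projection case of part~(1), your parenthetical ``otherwise safety of $P'$ is violated higher up'' is backwards: the reason every projected-away variable $y$ must occur in \emph{every} atom of the subplan is the safety condition on the joins \emph{below} the projection (each join forces equal $\HVar$ on its children, so $y$ propagates down to every leaf), not anything above it. Second, in the uniqueness argument you speak of ``a non-head separator variable $x$,'' but the top projection of a safe plan removes \emph{all} of $\SepVar(q)$ at once (this is exactly what the paper notes just before \autoref{prop:algorithmSound}); your induction should therefore fix the entire set $\SepVar(q)$, not a single variable, and then argue that the alternation convention forces the next operator to be the join over the connected components of $q-\SepVar(q)$.

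The ``only if'' direction of part~(1) is, as you say, the real obstacle. Your plan to locate the lowest unsafe join and build a small witness is the right strategy, but the step ``propagates the discrepancy to the output'' is where the work lies: you must ensure the operators above do not accidentally cancel the error (e.g., by choosing the instance so that all other branches contribute probability~$1$ or are empty). This is routine but not automatic, and a complete proof would spell it out.
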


\introparagraph{Boolean Formulas} Consider a set of Boolean variables $\mathbf{X}
= \set{X_1, X_2, \ldots}$ and a probability function $p : \mathbf{X}
\rightarrow [0,1]$.  Given a Boolean formula $F$, denote $\PP{F}$ the
probability that $F$ is true if each variable $X_i$ is independently true
with probability $p(X_i)$.  In general, computing
$\PP{F}$ is \#P-hard in the number of variables $\mathbf{X}$.  If $D$
is a probabilistic database then we interpret every tuple $t \in D$ as
a Boolean variable and denote the lineage of a Boolean $q \datarule
g_1,\ldots,g_m$ on $D$ as the Boolean DNF formula $F_{q,D} =
\bigvee_{\theta: \theta \models q}\theta(g_1)\wedge\cdots\wedge
\theta(g_m)$, where $\theta$ 
ranges over all assignments of $\EVar(q)$ that satisfy $q$ on $D$.
It is well known that $\PP{q} = \PP{F_{q,D}}$. In other
words the probability of a Boolean query is the same as the
probability of its lineage formula.

\begin{example}[Lineage]\label{ex:simple} 
	If $F= XY \vee XZ$ then $\PP{F} =
  p(1\!-\!(1\!-\!q)(1\!-\!r))
 = pq + pr - pqr$, 
where $p = p(X), q = p(Y)$, and $r =
  p(Z)$.  Consider now the query $q \datarule R(x), S(x,y)$ over the
database
$D = \set{R(1),R(2),S(1,4),S(1,5)}$.  Then the
  lineage formula is $F_{q,D} = R(1)\wedge S(1,4) \vee R(1) \wedge
  S(1,5)$, i.e.\ same as $F$, up to variable renaming. It is now easy to
  see that $\PP{q} = \PP{F_{q,D}}$.
\end{example}

A key technique that we use in this paper is the following result
from~\cite{DBLP:journals/tods/GatterbauerS14}:  Let $F, F'$ be two
Boolean formulas with sets of variables $\mathbf{X}$ and $\mathbf{X}'$,
respectively.  We say that $F'$ is a {\em dissociation} of $F$ if
there exists a substitution $\theta: \mathbf{X}' \rightarrow
\mathbf{X}$ such that $F'[\theta]=F$.  If $\theta^{-1}(X) =
\set{X',X'', \ldots}$ then we say that the variable $X$ {\em
  dissociates into} $X', X'', \ldots$; if $|\theta^{-1}(X)| = 1$ then
we assume w.l.o.g.\ that $\theta^{-1}(X) = X$ (up to variable renaming)
and we say that $X$ does not dissociate.  Given a probability function
$p : \mathbf{X} \rightarrow [0,1]$, we extend it to a probability
function $p' : \mathbf{X}' \rightarrow [0,1]$ by setting $p'(X') =
p(\theta(X'))$.  Then, we have shown:

\begin{theorem}[Oblivious DNF bounds~\cite{DBLP:journals/tods/GatterbauerS14}] \label{th:bool:dissoc} Let
  $F'$ be a monotone DNF formula that is a dissociation of $F$ through the
  substitution $\theta$.  Assume that for any variable $X$, no two
  distinct dissociations $X', X''$ of $X$ occur in the same prime
  implicant of $F'$.  Then (1) $\PP{F} \leq \PP{F'}$, and (2) if all
  dissociated variables $X \in \mathbf{X}$ are deterministic (meaning:
  $p(X)=0$ or $p(X)=1$) then $\PP{F}=\PP{F'}$.
\end{theorem}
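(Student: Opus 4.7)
The plan is to reduce everything to the case of a \emph{single} variable being dissociated into exactly two copies, and then prove that case by a direct conditional expansion. Since the substitution $\theta$ collapses each fiber $\theta^{-1}(X) = \{X', X'', \ldots\}$ back to $X$, we can perform the dissociation one step at a time: pick a variable $X \in \mathbf{X}$ whose fiber has size $k \geq 2$, split off just one copy, and iterate. Each step preserves monotonicity, the DNF form, and the no-co-occurrence hypothesis for the remaining duplicated variables, so it suffices to prove both claims for a single dissociation $X \mapsto (X_1, X_2)$ where $X_1, X_2$ do not appear together in any prime implicant of $F'$.

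For this base step, I use the no-co-occurrence condition to decompose
\[
F' \;=\; (X_1 \wedge A_1) \;\vee\; (X_2 \wedge A_2) \;\vee\; B,
\]
where $A_1, A_2, B$ are monotone DNF formulas over $\mathbf{X}' \setminus \{X_1, X_2\}$. Applying $\theta$ gives $F = (X \wedge (A_1 \vee A_2)) \vee B$. Writing $p = p(X) = p'(X_1) = p'(X_2)$ and conditioning on $X$ (resp.\ on $X_1, X_2$), a direct computation yields
\[
\PP{F} - \PP{F'} \;=\; p(1-p)\bigl[\PP{A_1 \vee A_2 \vee B} + \PP{B} - \PP{A_1 \vee B} - \PP{A_2 \vee B}\bigr].
\]
The bracketed expression equals $\PP{A_2 \wedge \neg(A_1 \vee B)} - \PP{A_2 \wedge \neg B}$, which is non-positive since $\neg(A_1 \vee B)$ implies $\neg B$. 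This gives $\PP{F} \leq \PP{F'}$ and establishes part~(1).

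For part~(2), observe that the factor $p(1-p)$ vanishes exactly when $p \in \{0,1\}$. So if the variable being dissociated at each inductive step is deterministic, every step contributes zero gap and $\PP{F} = \PP{F'}$.

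The main obstacle I anticipate is the clean bookkeeping in the inductive reduction: one must check that after peeling off a single copy, the resulting intermediate formula is still a monotone DNF dissociation of $F$ satisfying the no-co-occurrence hypothesis for all \emph{remaining} duplicated variables. This follows because splitting one copy only renames some occurrences of $X$ in $F$ to a fresh variable and leaves every other prime implicant and every other variable structurally untouched, but care is needed to argue that the hypothesis ``no two distinct dissociations of the same variable in one prime implicant'' is preserved throughout the induction.
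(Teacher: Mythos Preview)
The paper does not contain a proof of this theorem: it is quoted from prior work~\cite{DBLP:journals/tods/GatterbauerS14} (the paper states ``Then, we have shown:'' before the theorem and later remarks that ``Theoretical upper and lower bounds for dissociation of Boolean formulas, including \autoref{th:bool:dissoc}, were proven in~\cite{DBLP:journals/tods/GatterbauerS14}''). So there is nothing in the present paper to compare your argument against.

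That said, your argument is sound and is essentially the standard one. A few remarks on the details you flagged as potential obstacles. First, the decomposition $F' = (X_1 \wedge A_1) \vee (X_2 \wedge A_2) \vee B$ is justified by passing to the canonical DNF of prime implicants: since $F'$ is monotone it equals the disjunction of its prime implicants, and the hypothesis guarantees each prime implicant contains at most one of $X_1, X_2$, so the grouping is immediate. Second, the inductive bookkeeping is indeed routine: peeling off a single copy just renames certain occurrences of one variable to a fresh symbol, which does not create any new prime implicant and does not merge two occurrences of any \emph{other} duplicated variable into a common prime implicant; hence the no-co-occurrence hypothesis is inherited at each step. Third, for part~(2) you need that at each inductive step the variable being split is deterministic, which is exactly the assumption that \emph{all} dissociated variables are deterministic; your last paragraph states this correctly.

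Your identity for the bracketed term checks out: $\PP{A_2 \wedge \neg(A_1 \vee B)} = \PP{A_1 \vee A_2 \vee B} - \PP{A_1 \vee B}$ and $\PP{A_2 \wedge \neg B} = \PP{A_2 \vee B} - \PP{B}$, whence the difference is exactly the bracket, and monotonicity of measure under the inclusion $\neg(A_1 \vee B) \subseteq \neg B$ gives the sign.
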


Intuitively, a dissociation $F'$ is obtained from a formula $F$ by
taking different occurrences of a variable $X$ and replacing them with
fresh variables $X', X'', \ldots$; in doing this, the probability of
$F'$ may be easier to compute, giving us an upper bound for $\PP{F}$.

\begin{example}[\autoref{ex:simple} cont.]
$F' = X'Y \vee X''Z$ is a dissociation of $F = XY \vee XZ$, and its probability is $\PP{F'} =
1\!-\!(1\!-\!pq)(1\!-\!pr) = pq + pr - p^2qr$.  Here, only the variable $X$
dissociates into $X', X''$.  It is easy to see that $\PP{F} \leq
\PP{F'}$. Moreover, if $p = 0$ or 1, then $\PP{F}=\PP{F'}$.
The condition that no two dissociations of the same variable occur in
a common prime implicant is necessary: for example, $F'= X'X''$ is a
dissociation of $F = X$. However $\PP{F} = p$, $\PP{F'} = p^2$,  and
we do not have $\PP{F} \leq \PP{F'}$.
\end{example}

\section{Dissociation of queries}\label{sec:dissociation}
\noindent
This section introduces our main technique 
for approximate query processing.
After defining dissociations (\autoref{sec:queryDissociationIntro}), we show
that some of them are in 1-to-1 correspondence with query plans, then derive our
first algorithm for approximate query processing (\autoref{sec:plans}).  Finally, we
describe two extensions
in the presence of 
deterministic relations or functional
dependencies (\autoref{sec:schemaKnowledge}).

\subsection{Query dissociation}\label{sec:queryDissociationIntro}

\begin{definition}[Dissociation]\label{def:Dissociation}
	Given a Boolean query $q \datarule R_1(\vec x_1), \ldots, R_m(\vec x_m)$
	and a probabilistic database $D$.
  Let $\Delta = (\vec y_1, \ldots, \vec y_m)$ be a collection of sets
  of variables with $\vec y_i \subseteq \Var(q)- \Var(g_i)$ for every relation $R_i$.
The
  \emph{dissociation} defined by $\Delta$ has then two components:
\begin{enumerate}[nolistsep,label=(\arabic*)]

\item the \emph{dissociated query}:
$q^{\Delta} \datarule R_1^{\vec y_1}(\vec x_1, \vec y_1), \ldots, R_m^{\vec y_m}(\vec x_m, \vec y_m)$,
where each $R_i^{\vec y_i}(\vec x_i, \vec
y_i)$ is a new relation of arity $|\vec x_i| + |\vec y_i|$.

\item the \emph{dissociated database instance} $D^\Delta$ consisting of
  the tables over the vocabulary $\sigma^\Delta$ obtained by
  evaluating (deterministically) the following queries over the
  instance $D$:
  \begin{align*}
    R_i^{\vec y_i}(\vec x_i, \vec y_i) & \datarule R_i(\vec x_i),  \ADom_{y_{i1}}(y_{i1}), \ldots,
  	\ADom_{y_{ik}}(y_{ik})
  \end{align*}
  where $\vec y_i = (y_{i1}, \ldots, y_{ik_i})$. For each tuple $t' \in R_i^{\vec y_i}$, its probability is defined
  as $p'(t') = p(\pi_{\vec x_i}(t'))$, i.e.\ 
  the probability of $t$ in the database $D$.
\end{enumerate}
\end{definition}

Thus, a dissociation acts on both the query expression and the
database instance: It adds some variables $\vec y_i$ to each
relational symbol $R_i$ of the query expression, and it computes a new
instance for each relation $R_i^{\vec y_i}$ by copying every record
$t \in R_i$ once for every tuple in the cartesian product
$\ADom_{y_{i1}} \times \cdots \times \ADom_{y_{ik}}$.  When $\vec y_i =
\emptyset$ then we abbreviate $R_i^\emptyset$ with $R_i$.  
We give a simple example:

\begin{example}[\autoref{ex:simple} cont.]
Consider $q \datarule R(x),S(x,y)$. Then $\Delta = (\set{y}, \emptyset)$ defines the
following dissociation: $q^\Delta = R^{y}(x,y),S(x,y)$, and the new
relation $R^y$ contains the tuples $R^y(1,4), R^y(1,5), R^y(2,4),
R^y(2,5)$.  Notice that the lineage of the dissociated query $q^\Delta$ is
$F_{q^{\Delta},D^{\Delta}} = R^y(1,4),S(1,4) \vee R^y(1,5),S(1,5)$ and
is the same (up to variable renaming) as the dissociation of the
lineage of query $q$: $F'= X'Y \vee X''Z$.  
\end{example}

\begin{theorem}[Upper query bounds]\label{th:upperBounds}
  For every dissociation $\Delta$ of $q$: $\PPP(q) \leq \PPP(q^{\Delta})$.
\end{theorem}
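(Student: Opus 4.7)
The plan is to reduce the theorem to the already-stated Boolean result \autoref{th:bool:dissoc} by showing that the lineage of the dissociated query, over the dissociated instance, is a Boolean dissociation of the lineage of the original query, and that it satisfies the prime-implicant hypothesis of that theorem.

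First, I would set up the variable substitution. The Boolean variables of $F_{q^\Delta,D^\Delta}$ are the tuples of the relations $R_i^{\vec y_i}$, while the Boolean variables of $F_{q,D}$ are the tuples of the relations $R_i$. Define $\theta$ on these variables by $\theta\bigl(R_i^{\vec y_i}(\vec a,\vec b)\bigr) = R_i(\vec a)$, which matches the construction of $D^\Delta$ (every tuple $t' \in R_i^{\vec y_i}$ has $p'(t') = p(\pi_{\vec x_i}(t'))$). Then I would check that $F_{q^\Delta,D^\Delta}[\theta] = F_{q,D}$. For this, since $\vec y_i \subseteq \Var(q)$, the two queries have the same existential variables and the same head variables. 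A valuation $\nu$ of these variables satisfies $q$ on $D$ iff each $R_i(\nu(\vec x_i))\in D$; it satisfies $q^\Delta$ on $D^\Delta$ iff each $R_i^{\vec y_i}(\nu(\vec x_i),\nu(\vec y_i))\in D^\Delta$, which by definition of $D^\Delta$ unfolds to $R_i(\nu(\vec x_i))\in D$ together with $\nu(y_{ij})\in \ADom_{y_{ij}}$. The latter domain conditions are automatic whenever $\nu$ satisfies $q$, because each variable $y$ appears in some atom that is itself satisfied. Hence the two queries have the same satisfying valuations, and the prime-implicant-by-valuation structure of the DNF lineages is identical modulo $\theta$.

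Next, I would verify the hypothesis of \autoref{th:bool:dissoc}: no two distinct dissociations of a single variable co-occur in a prime implicant of $F_{q^\Delta,D^\Delta}$. Because $q$ (and hence $q^\Delta$) is self-join-free, every term in the DNF lineage uses exactly one atom from each relation $R_i^{\vec y_i}$, so no two atoms in one term can map under $\theta$ to the same tuple of some $R_i$. One small formal step here is to argue that these terms are indeed the prime implicants of $F_{q^\Delta,D^\Delta}$; this again follows from self-join-freeness, since any term contains exactly one atom per relation symbol and so cannot be strictly contained in another term.

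Finally, applying \autoref{th:bool:dissoc} gives $\PPP(F_{q,D})\leq \PPP(F_{q^\Delta,D^\Delta})$, and since the query probability equals the lineage probability, this is exactly $\PPP(q)\leq \PPP(q^\Delta)$. The main obstacle I anticipate is the prime-implicant bookkeeping in step three, where self-join-freeness must be used carefully to rule out both intra-term collisions (two atoms in one term mapping to the same variable) and, more subtly, the possibility that the DNF terms collapse into smaller prime implicants spanning atoms from different original tuples; everything else is a straightforward unfolding of the definitions.
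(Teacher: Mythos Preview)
Your proposal is correct and follows exactly the paper's approach: reduce to \autoref{th:bool:dissoc} via the substitution $\theta(t') = \pi_{\vec x_i}(t')$ on tuples of $D^\Delta$. The paper's proof is a one-sentence invocation of this reduction, whereas you spell out the verification that the lineages correspond under $\theta$ and that the prime-implicant hypothesis holds via self-join-freeness; this extra care is sound and fills in details the paper leaves implicit.
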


\begin{proof}
  \Autoref{th:upperBounds} follows immediately from \autoref{th:bool:dissoc} by
  noting that the lineage $F_{q^\Delta,D^\Delta}$ is a dissociation of
  the lineage $F_{q,D}$ through the substitution $\theta : D^\Delta
  \rightarrow D$ defined as follows: for every tuple $t' \in R_i^{\vec
    y_i}$, $\theta(t') = \pi_{\vec x_i}(t')$.
\end{proof}

\begin{definition}[Safe dissociation]
  A dissociation $\Delta$ of a query $q$ is called \emph{safe} 
if the dissociated query ${q^{\Delta}}$ is safe.
\end{definition}

By \autoref{th:dichotomy}, a dissociation is safe  (i.e.\ its probability can be evaluated in PTIME) iff $q^\Delta$ is hierarchical.
Hence, amongst all dissociations, we are interested in those that are easy to evaluate
and use them as a technique to approximate the probabilities of queries
that are
hard to compute. The idea is simple: 
Find a safe dissociation $\Delta$, compute $\PP{q^\Delta}$, and thereby obtain an upper bound on $\PP{q}$.  In fact,
we will consider \emph{all} safe dissociations and take the
minimum of their probabilities, since this gives an even better upper bound
on $\PP{q}$ than that given by a single dissociation.  We call this
quantity the \emph{propagation score}\footnote{We chose the name ``propagation'' for our method because of 
  similarities with efficient belief propagation algorithms in
  graphical models. See~\cite{arxivDissociation:2013} for a
  discussion on how query dissociation generalizes relevance propagation
 from graphs to hypergraphs, and \cite{LinBP:arxiv} for a recent approach for speeding up  belief propagation even further.} of the query $q$.

\begin{definition}[Propagation]
  The \emph{propagation score $\rho(q)$ for a query $q$} is the minimum score of all safe dissociations: $\rho(q) = \min_{\Delta} \PPP(q^{\Delta})$ with $\Delta$ ranging over all safe dissociations.
\end{definition}

The difficulty in computing $\rho(q)$ is the number of
dissociations that is large even for relatively small queries: If $q$ has
$k$ existential variables and $m$ atoms, then $q$ has $2^{|K|}$
possible dissociations with $K = \sum_{i=1}^m \big( k- |
\texttt{Var}(g_i) | \big)$ forming a partial order in the shape of a
power set lattice 
(see \autoref{Fig_PartialDissociationOrderExample_a}).  
Therefore, our next step is to
prune the space of dissociations and to examine only the 
minimum number necessary.
We start by defining a
partial order on dissociations:

\begin{definition}[Partial dissociation order]
	We define the partial order on the dissociations of a query as:
	$$
		\Delta \preceq \Delta'   \,\,\Leftrightarrow\,\,  \forall i: \vec y_i \subseteq \vec y_i'
	$$
\end{definition}

Whenever $\Delta \preceq \Delta'$, then $q^{\Delta'}, D^{\Delta'}$ is
a dissociation of $q^{\Delta}, D^{\Delta}$ (given by $\Delta'' =
\Delta' - \Delta$).  Therefore, we obtain immediately:

\begin{corollary}[Partial dissociation order]\label{cor:partialDissociationOrder}
  If $\Delta \preceq \Delta'$ then $\PPP(q^\Delta) \leq
  \PPP(q^{\Delta'})$.
\end{corollary}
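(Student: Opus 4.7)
The plan is essentially sketched in the paragraph preceding the corollary: reduce it to \autoref{th:upperBounds} by exhibiting $q^{\Delta'}$ as a dissociation of the already-dissociated query $q^{\Delta}$. I would make this composition argument precise.

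First, I would define the ``difference'' collection $\Delta'' = (\vec y_1'', \ldots, \vec y_m'')$ by $\vec y_i'' = \vec y_i' \setminus \vec y_i$. Since $\vec y_i \subseteq \vec y_i' \subseteq \Var(q) - \Var(g_i)$, this $\Delta''$ is a well-defined collection of variable sets on the vocabulary $\sigma^\Delta$ of the dissociated schema, with each $\vec y_i''$ disjoint from the variables already present in $R_i^{\vec y_i}(\vec x_i, \vec y_i)$. Hence $\Delta''$ satisfies the requirements of \autoref{def:Dissociation} applied to the query $q^\Delta$ over the database $D^\Delta$.

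Next, I would verify that the dissociation of $q^\Delta$ by $\Delta''$ coincides, up to renaming of relational symbols, with $q^{\Delta'}$, and that the induced database $(D^\Delta)^{\Delta''}$ coincides with $D^{\Delta'}$ with matching tuple probabilities. On the query side this is immediate: replacing $R_i^{\vec y_i}(\vec x_i, \vec y_i)$ by $R_i^{\vec y_i'}(\vec x_i, \vec y_i, \vec y_i'')$ with $\vec y_i \cup \vec y_i'' = \vec y_i'$ yields exactly $q^{\Delta'}$. On the database side, expanding $R_i$ first by the active domains of $\vec y_i$ and then by the active domains of $\vec y_i''$ produces the same cartesian product expansion as going directly to $\vec y_i'$; and in both cases the probability of every produced tuple equals $p$ of its projection onto $\vec x_i$, i.e.\ $p(\pi_{\vec x_i}(t))$, matching the definition of $p'$ for $D^{\Delta'}$.

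Having established that $q^{\Delta'}, D^{\Delta'}$ is a dissociation of $q^\Delta, D^\Delta$, I would invoke \autoref{th:upperBounds} with the ``base'' query $q^\Delta$ and the dissociation $\Delta''$ to conclude
\[
\PPP(q^\Delta) \;\leq\; \PPP\!\left((q^\Delta)^{\Delta''}\right) \;=\; \PPP(q^{\Delta'}),
\]
which is exactly the claim. The only step requiring any care is the verification that composition of dissociations behaves as expected on both the schema expansion and the probabilities; I do not anticipate genuine difficulty there, since the active-domain expansion in \autoref{def:Dissociation} is associative in the obvious sense and probabilities are defined by projecting onto the original key $\vec x_i$ in both constructions.
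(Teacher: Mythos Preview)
Your proposal is correct and follows exactly the approach the paper itself sketches: exhibit $q^{\Delta'}, D^{\Delta'}$ as the dissociation of $q^\Delta, D^\Delta$ via $\Delta'' = \Delta' - \Delta$, then invoke \autoref{th:upperBounds}. The only addition is that you spell out the verification that composition of dissociations behaves as expected on both the schema and the probabilities, which the paper leaves implicit.
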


\begin{example}[Partial dissociation order]\label{ex:partialDissociation}
	Consider the query $q \datarule R(x), S(x), T(x,y), U(y)$.
	It is unsafe and allows $2^3 = 8$ dissociations which are shown in \autoref{Fig_PartialDissociationOrderExample_a} with the help of an ``augmented incidence matrix'':
each row represents one relation and each column one variable:
An empty circle ($\circ$) indicates that a relation contains a variable;
a full circle ($\bullet$) indicates that a relation is dissociated on a variable (the reason for using two separate symbols becomes clear when we later include domain knowledge).
	Among those 8 dissociations, 5 are safe, shaded in green, and have the hierarchy among variables highlighted. Furthermore, 2 of the 5 safe dissociations are minimal:
$q^{\Delta_3}	 \datarule  R(x), S(x), T(x,y), U^x(x,y) $, and
$q^{\Delta_4}	 \datarule  R^y(x, y), S^y(x, y), T(x,y), U(y)$ .
To illustrate that these dissociations are upper bounds, consider a database with 
$R=T=U =\{1,2\}$, 
$S=\{(1,1),(1,2),(2,2)\}$, 
and the probability of all tuples $=\frac{1}{2}$. 
Then $q$ has probability  $\frac{83}{2^9} \approx 0.161$, 
while $q^{\Delta_3}$ has probability $\frac{169}{2^{10}} \approx 0.165$,
and $q^{\Delta_4}$ has probability $\frac{353}{2^{11}} \approx 0.172$, both of which are upper bounds.	
The propagation score is the minimum score of all minimal safe dissociations and thus $\approx 0.165$.
\markend
\end{example}

\begin{figure}[t]
	\vspace{-3mm}
    \centering
	\subfloat[]{
		\label{Fig_PartialDissociationOrderExample_a}
		\hspace{-4mm}
		\includegraphics[scale=0.43]{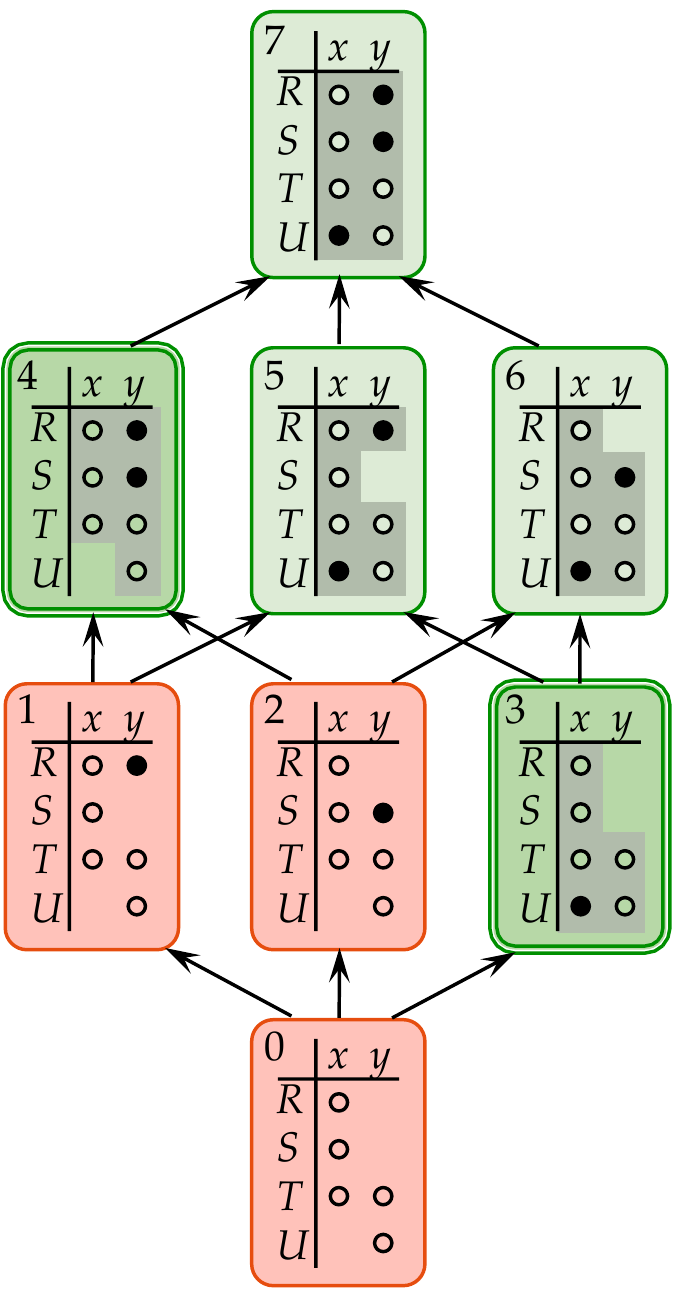}
		\hspace{-3mm}}
	\hspace{4mm}	
	\subfloat[]{
		\includegraphics[scale=0.53]{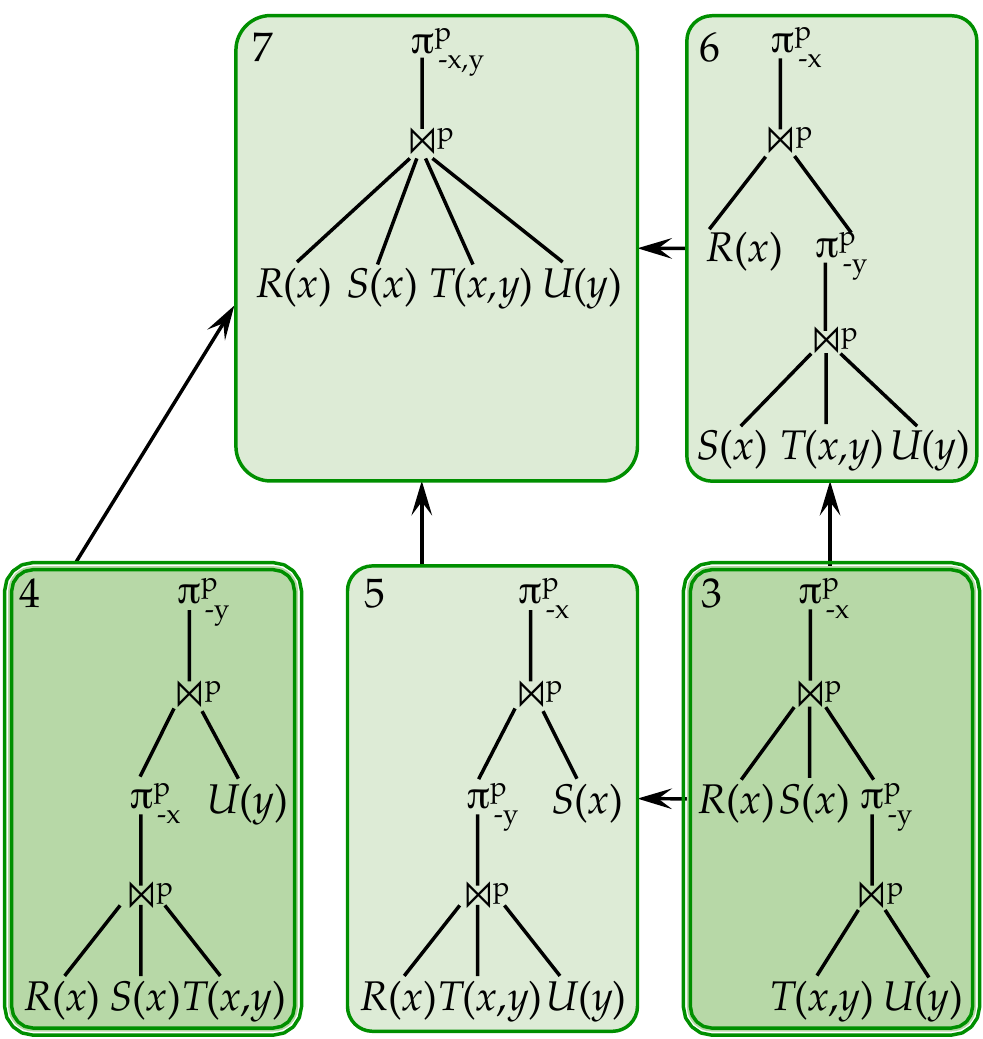}
		\hspace{-2mm}
		\label{Fig_PartialDissociationOrderExample_b}}
	\hspace{0mm}
	\caption{
		\autoref{ex:partialDissociation} 
		(a):
		Partial dissociation order for $q \datarule R(x),S(x),$ $T(x,y), U(y)$.
		Safe dissociations are green and have the hierarchies between variables shown (3 to 7), 
		\emph{minimal safe dissociations} are dark green and double-lined (3 and 4). 
		(b): 
		All 5 query plans for $q$ 
		and their correspondence to safe dissociations
		 (3 to 7).
		}
	\label{Fig_PartialDissociationOrderExample}
	\vspace{-1mm}
\end{figure}

In general, the set of dissociations forms a lattice, with the
smallest element $\Delta_\bot = (\emptyset, \ldots, \emptyset)$
($q^{\Delta_\bot}=q$) and the largest element $\Delta_\top =
(\Var(q)-\Var(g_1), \ldots, \Var(q)-\Var(g_m))$ ($q^{\Delta_\top}$ is
safe, since every atom contains all variables).  As we move up in the
lattice the probability increases, but the safe/unsafe status may
toggle arbitrarily from safe to unsafe and back.  For example $q \datarule
R(x),S(x),T(y)$ is safe, its dissociation $q' \datarule R(x),S^y(x,y),T(y)$ is
unsafe, yet the next dissociation $q'' \datarule R(x),S^y(x,y),T^x(x,y)$ is safe
again.

This suggests the following naive algorithm for computing $\rho(q)$:
Enumerate all dissociations $\Delta_1, \Delta_2, \ldots$ by traversing
the lattice breadth-first, bottom up (i.e.\ whenever $\Delta_i \prec
\Delta_j$ then $i < j$).  For each dissociation $\Delta_i$, check if
$q^{\Delta_i}$ is safe. If so, then first update $\rho \leftarrow \min(\rho,
\PPP(q^{\Delta_i}))$, then remove from the list all dissociations
$\Delta_j \succ \Delta_i$.  However, this algorithm is inefficient for
practical purposes for two reasons: ($i$) we need to iterate over
many dissociations in order to discover those that are safe; and
($ii$) computing
$\PPP(q^{\Delta_i})$ requires computing a new database instance
$D^{\Delta_i}$ for each safe dissociation $\Delta_i$.
We show in the next section how to avoid both sources
of inefficiency by 
exploiting the lattice structure and by
iterating over query plans instead of safe
dissociations.

\subsection{Dissociations and Plans}\label{sec:plans}

\noindent
We prove here that the safe dissociations $q^\Delta$ are in 1-to-1
correspondence with query plans of the original query $q$. This allows us
to ($i$) efficiently find safe dissociations (by iterating over
query plans instead of all dissociations), and to ($ii$) compute $\PPP(q^\Delta)$ without
having to materialize the dissociated database $D^\Delta$.

We next describe the 1-to-1 mapping.  Consider a safe dissociation
$q^{\Delta}$ and denote its corresponding unique safe plan
$P^\Delta$. This plan uses dissociated relations, hence each relation
$R^{\vec y_i}_i(\vec x_i, \vec y_i)$ has extraneous variables
$\vec y_i$.  Drop all variables $\vec y_i$ from the relations and
all operators using them: This transforms $P^\Delta$ into a
regular, generally unsafe plan $P$ for $q$.  For a trivial example,
the plan corresponding to the top dissociation $\Delta_\top$ of a
query $q$ is $\projd{\Var(q)} (\joind{}{P_1, \ldots, P_k})$: It
performs all joins first, followed by all projections.

Conversely, consider any plan $P$ for $q$.  We define its
corresponding safe dissociation $\Delta^P$ as follows.  For each join
operation $\joinp{}{P_1,\ldots,P_k}$, let its \emph{join variables}
$\JVar$ be the union of the head variables of all subplans: $\JVar =
\bigcup_j \HVar(P_j)$. For every relation $R_i$ occurring in $P_j$,
add the missing variables $\JVar - \HVar(P_j)$ to $\vec y_i$.  For
example, consider $\joinp{}{R(x), T(x,y), U(y)}$ (this is the lower
join in query plan 5 of
\autoref{Fig_PartialDissociationOrderExample_b}).  Here, $\JVar =
\set{x,y}$, and the corresponding safe dissociation of this subplan is
$q^\Delta(x,y) \datarule R^y(x, y),T(x,y), U^x(x,y)$.  Note that while
there is a one-to-one mapping between safe dissociations and query
plans, unsafe dissociations do not correspond to plans.

\begin{theorem}[Safe dissociation]\label{th:safeDissociation}
  Let $q$ be a conjunctive query without self-joins. (1) The mappings
  $\Delta \mapsto P^\Delta$ and $P \mapsto \Delta^P$ are inverses of
  each other. 
(2) For every safe dissociation
  $\Delta$, $\PPP(q^{\Delta}) = \score(P^{\Delta})$.
\end{theorem}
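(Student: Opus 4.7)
The plan is to prove (1) by structural induction exploiting the recursive characterization of hierarchical queries from \autoref{lemma:hierarchical} and the corresponding recursive structure of safe plans implied by \autoref{prop:uniqueSafePlan}, and then derive (2) almost immediately from \autoref{prop:uniqueSafePlan}.

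For part (1), I would first verify that the two mappings are well-defined: given a safe dissociation $\Delta$, the dissociated query $q^\Delta$ is hierarchical, so by \autoref{prop:uniqueSafePlan} it has a unique safe plan over the vocabulary $\sigma^\Delta$; dropping the extraneous variables $\vec y_i$ from each $R_i^{\vec y_i}$ and from the project/join operators yields a syntactically legal plan $P^\Delta$ for $q$. Conversely, given any plan $P$ for $q$, I would check that $\Delta^P$ is indeed a safe dissociation by noting that after augmentation every join $\joinp{}{P_1,\ldots,P_k}$ in the augmented plan has $\HVar(P_i)=\JVar$ for all $i$, so the augmented plan is safe by \autoref{def:safePlan}, which implies $q^{\Delta^P}$ is hierarchical.

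Next I would prove that the mappings are inverses by induction on the structure of the plan (equivalently, on the hierarchical decomposition of \autoref{lemma:hierarchical}). Base case: a single atom $R_i(\vec x_i)$ — both mappings act as identity. Inductive step for projection: if $P = \projdd{\vec x}P'$, then $P^{\Delta^P}$ projects away the same separator variable used in the recursive call, and the induction hypothesis handles $P'$. Inductive step for join: if $P = \joind{}{P_1,\ldots,P_k}$, then $\JVar = \bigcup_j \HVar(P_j)$, and the construction of $\Delta^P$ adds exactly the missing join variables to every relation in every subplan; stripping them via $\Delta \mapsto P^\Delta$ restores the subplans, to which the induction hypothesis applies. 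The symmetric direction $\Delta^{P^\Delta} = \Delta$ follows analogously: in the safe plan $P^\Delta$ for $q^\Delta$, the head variables of each subplan coincide with the dissociated variables present in each atom by safety, so recomputing $\JVar$ at every join recovers exactly the variable sets $\vec y_i$ of $\Delta$.

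For part (2), since $\Delta$ is a safe dissociation, $q^\Delta$ is hierarchical; by \autoref{prop:uniqueSafePlan} it admits a unique safe plan whose score equals $\PPP(q^\Delta)$ on the dissociated instance $D^\Delta$. That safe plan is precisely $P^\Delta$ by construction, so $\PPP(q^\Delta) = \score(P^\Delta)$.

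The main obstacle will be handling part (1) cleanly: the recursive construction of a safe plan from a hierarchical query is standard but involves several cases (separator variable, connected components, single atom), and I must argue carefully that the set $\JVar$ at each join in $P$ matches the set of variables $\vec y_i$ added by $\Delta^P$ for each $R_i$ lying in that subplan. Making the induction hypothesis precise — namely that, at every internal node of $P$, the subplan rooted there is itself the safe plan of the corresponding sub-dissociation — is the delicate bookkeeping step; everything else is either a direct appeal to \autoref{lemma:hierarchical} and \autoref{prop:uniqueSafePlan} or routine unwinding of definitions.
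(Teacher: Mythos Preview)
The paper itself does not include a proof of this theorem; it defers all proofs to the technical report~\cite{arxivDissociation:2013}. So there is no in-paper argument to compare against directly. Your plan for part~(1) is reasonable and is the natural approach: the structural induction along the recursive decomposition of \autoref{lemma:hierarchical} is exactly how one shows the two mappings undo each other.

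However, your treatment of part~(2) has a genuine gap. You write that ``that safe plan is precisely $P^\Delta$ by construction,'' but this is not correct: the unique safe plan of $q^\Delta$ is a plan over the dissociated vocabulary $\sigma^\Delta$ and is evaluated on $D^\Delta$, whereas $P^\Delta$ (as it appears in the theorem statement, being the image of $\Delta$ under the mapping of part~(1)) is the plan for $q$ over the \emph{original} vocabulary, obtained by \emph{dropping} the extra variables $\vec y_i$, and its score is computed on $D$. Applying \autoref{prop:uniqueSafePlan} gives you only that the safe plan of $q^\Delta$ on $D^\Delta$ computes $\PPP(q^\Delta)$; it says nothing about $P^\Delta$ on $D$. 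The entire point of part~(2), as the paper emphasizes just before the theorem, is that one can compute $\PPP(q^\Delta)$ \emph{without materializing} $D^\Delta$, and that is exactly the step you have skipped.

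What is missing is an inductive argument (on the structure of the safe plan) that dropping the extraneous variables $\vec y_i$ from both the plan and the instance preserves the score at every operator. Concretely, you must show that for every intermediate node of the safe plan of $q^\Delta$ on $D^\Delta$, each output tuple $t'$ has the same score as its projection $\pi_{\vec x}(t')$ in the corresponding node of $P^\Delta$ on $D$. The key observation is that the extra columns of $R_i^{\vec y_i}$ range over the full active domain and carry no new probability mass, so at each join the duplicated copies behave exactly like the original tuple joined against the extra join variables; this is where the construction $R_i^{\vec y_i}(\vec x_i,\vec y_i)\datarule R_i(\vec x_i),\ADom_{y_{i1}}(y_{i1}),\ldots$ with $p'(t')=p(\pi_{\vec x_i}(t'))$ is actually used. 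Without this step, part~(2) is asserted rather than proved.
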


\begin{corollary}[Upper bounds]\label{cor:queryPlanBounds}
  Let $P$ be any plan for a Boolean query $q$.  Then $\PPP(q) \leq
  \score(P)$.
\end{corollary}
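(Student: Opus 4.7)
The proof is essentially a composition of the two results just established. My plan is to apply the $P \mapsto \Delta^P$ half of \autoref{th:safeDissociation}(1) to reduce the claim about an arbitrary plan $P$ to a claim about a safe dissociation, then invoke \autoref{th:upperBounds}.

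Concretely, given an arbitrary plan $P$ for the Boolean query $q$, I would first note that by \autoref{th:safeDissociation}(1), $P$ corresponds to a safe dissociation $\Delta^P$ of $q$ whose unique safe plan $P^{\Delta^P}$ is $P$ itself (modulo the renaming of relations $R_i \mapsto R_i^{\vec y_i}$ introduced by the dissociation). Applying \autoref{th:safeDissociation}(2) to this dissociation gives $\score(P) = \score(P^{\Delta^P}) = \PPP(q^{\Delta^P})$. Then, since $\Delta^P$ is a dissociation of $q$, \autoref{th:upperBounds} yields $\PPP(q) \leq \PPP(q^{\Delta^P})$, and chaining the two equalities/inequalities gives $\PPP(q) \leq \score(P)$, as required.

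The only subtle point, which I would spell out briefly, is that the mapping $P \mapsto \Delta^P$ is well-defined for \emph{every} plan, not just safe ones: the construction described before \autoref{th:safeDissociation} (walk over each join node $\joinp{}{P_1,\ldots,P_k}$, compute $\JVar = \bigcup_j \HVar(P_j)$, and append the missing variables $\JVar - \HVar(P_j)$ to each $\vec y_i$ of every atom $R_i$ occurring in $P_j$) makes sense regardless of whether $P$ is safe, and the resulting dissociated query $q^{\Delta^P}$ is safe by construction since $P$ viewed over the dissociated relations is a safe plan. Thus the corollary is truly a statement about \emph{any} plan.

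I do not expect any real obstacle: the corollary is just the combination of the plan/dissociation bijection with the dissociation-is-an-upper-bound theorem, and the entire proof reduces to roughly two lines.
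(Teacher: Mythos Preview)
Your proposal is correct and matches the paper's own proof essentially line for line: the paper states that the corollary follows immediately from $\PPP(q) \leq \PPP(q^{\Delta^P})$ (\autoref{th:upperBounds}) and $\PPP(q^{\Delta^P}) = \score(P)$ (\autoref{th:safeDissociation}). Your additional remark that the construction $P \mapsto \Delta^P$ applies to \emph{any} plan, not just safe ones, is a useful clarification that the paper leaves implicit.
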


The proof follows immediately from $\PPP(q) \leq \PPP(q^{\Delta^P})$
(\autoref{th:upperBounds}) and $\PPP(q^{\Delta^P}) = \score(P)$
(\autoref{th:safeDissociation}).  In other words, \emph{any} plan for
$q$ computes a probability score that is guaranteed to be an upper
bound on the correct probability $\PPP(q)$.

\noindent

\autoref{th:safeDissociation} suggests the following improved
algorithm for computing the propagation score $\rho(q)$ of a query:
Iterate over all plans $P$, compute their scores, and retain the
minimum score $\min_P [\score(P)]$. Each plan $P$ is evaluated
directly on the original probabilistic database, and there no need to
materialize the dissociated database instance. However, this approach
is still inefficient because it computes several plans that
correspond to non-minimal dissociations.  For example, in
\autoref{Fig_PartialDissociationOrderExample} plans 5, 6, 7 correspond
to non-minimal dissociations, since plan 3 is safe and below them.

\introparagraph{Enumerating minimal safe dissociations}
Call a plan $P$ \emph{minimal} if $\Delta^P$ is minimal in the set of
safe dissociations.  For example, in \autoref{ex:partialDissociation},
the minimal plans are 3 and 4.  The propagation score is thus the
minimum of the scores of the two minimal plans: $\rho(q) = \min_{i \in
  \{3,4\}} \big[ \score\big(P^{(i)}\big) \big]$.  Our improved
algorithm will iterate only over minimal plans, by relying on a
connection between plans and sets of variables that disconnect a query:
A \emph{cut-set} is a set of existential
variables $\vec x \in \EVar(q)$ s.t.\ $q - {\vec x}$ is disconnected. 
A \emph{min-cut-set} (for minimal cut-set) is a cut-set for which no strict subset is a cut-set.
We
denote $\MinCuts(q)$ the set of all min-cut-sets.  
Note that $q$ is
disconnected iff $\MinCuts(q) = \set{\emptyset}$.

The connection between $\MinCuts(q)$ and query plans is given by two
observations: 
(1)  Let $P$ be any plan for $q$. If $q$ is connected, then
the last operator in $P$ is a projection, i.e.\  $P = \projd{\vec x}
(\joind{}{P_1, \ldots, P_k})$, and the projection variables $\vec x$
are the join variables ${\vec x} = \JVar$
because $q$ is Boolean so the plan must project away all
variables.  We claim that $\vec x$ is a cut-set
for $q$ and that $q- {\vec x}$ has $k$ connected
components corresponding to $P_1, \ldots, P_k$.  
Indeed, if $P_i, P_j$
share any common variable $y$, then they must join on $y$, hence $y
\in \JVar$. Thus, 
\emph{cut-sets are
in 1-to-1 correspondence with the top-most projection operator} of a
plan.  
(2) Now suppose that $P$ corresponds to a safe dissociation
$\Delta^P$, and let $P' = \projd{\vec x} (\joind{}{P_1', \ldots,
  P_k'})$ be its unique safe plan.  
Then 
${\vec x} = \SepVar(q^{\Delta^P})$; i.e.\ the top-most project
operator removes all separator variables.\footnote{This
  follows from the recursive definition of the unique safe plan of a query in \autoref{lemma:hierarchical}: the top
  most projection consists precisely of its separator variables.}  
Furthermore, if $\Delta
\succeq \Delta^P$ is a larger dissociation, then
$\SepVar(q^{\Delta}) \supseteq \SepVar(q^{\Delta^P}) $ 
(because any
separator variable of a query continues to be a separator variable in
any dissociation of that query).  Thus, 
\emph{minimal plans correspond to min-cut-sets}; in other words,
$\MinCuts(q)$ is in 1-to-1 correspondence with the top-most projection
operator of {\em minimal} plans.

Our discussion leads immediately to \autoref{alg:basicAlgorithm} for computing the
propagation score $\rho(q)$. 
It also applies to non-Boolean queries
by treating the head variables as constants, hence 
ignoring them when
computing connected components.
The algorithm proceeds recursively.  If $q$ is a single
atom then it is safe and we return its unique safe plan.  If the query
has more than one atom, then we consider two cases, when
$q-\HVar(q)$ is disconnected or connected.  In the first case, every
minimal plan is a join, where the subplans are minimal plans of the
connected components.  In the second case, a minimal plan results from a
projection over min-cut-sets.
Notice that recursive calls of the algorithm will alternate between
these two cases, until they reach a single atom.

\begin{algorithm2e}[t]
\scriptsize
\caption{generates all minimal query plans for a given query $q$.}\label{alg:basicAlgorithm}
\SetKwInput{Algorithm}{Recursive algorithm}
\SetKwFunction{MP}{MP}
\SetKwFor{ForAll}{forall}{do}{endfch}	
\Algorithm{\FuncSty{MP (EnumerateMinimalPlans)}}
\KwIn{Query $q(\vec x) \datarule R_1(\vec x_1), \ldots, R_m(\vec x_m)$}	
\KwOut{Set of all minimal query plans $\mathcal{P}$}
\BlankLine
\lIf{$m=1$}{$\mathcal{P} \leftarrow \{\projdd{\vec x} R_1(\vec x_1)\}$}\label{alg1:line1} 
\Else{
	Set $\mathcal{P} \leftarrow \emptyset$ \;
	\uIf{$q$ is disconnected}{\label{alg1:disconnected}
		Let $q = q_1, \ldots, q_k$ be the connected components  of $q-\HVar(q)$\;
		\lForEach{$q_i$}{Let $\HVar(q_i) \leftarrow \HVar(q) \cap \Var(q_i)$} 
		\ForEach{$(P_1, \ldots, P_k) \in \MP(q_1) \times \cdots \times \MP(q_k)$} {
			$\mathcal{P} \leftarrow \mathcal{P} \cup \{\joinp{}{P_1, \ldots, P_k}\}$ \;
		}
	}
	\label{alg1:connected}\Else{
		\ForEach{$\vec y \in \MinCuts(q-\HVar(q))$ \label{alg1:line10}}{
			Let $q' \leftarrow q$ with $\HVar(q') \leftarrow \HVar(q) \cup \vec y$ \;			
			\lForEach{$P \in \MP(q')$}{
				$\mathcal{P} \leftarrow \mathcal{P} \cup \{\projd{\vec y} \, P\}$ \label{alg1:line12}
			}
		}	
	}	
}
\end{algorithm2e}

\begin{theorem}[\Autoref{alg:basicAlgorithm}]\label{prop:algorithmSound}
  \Autoref{alg:basicAlgorithm} computes the set of all minimal query
  plans.
\end{theorem}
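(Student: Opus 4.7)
The plan is to prove both soundness (every plan returned is minimal) and completeness (every minimal plan is returned) simultaneously by strong induction on $(|\EVar(q)|, m)$ ordered lexicographically. The base case $m = 1$ is immediate: the single atom gives the unique plan $\projdd{\vec x} R_1(\vec x_1)$ (line~\ref{alg1:line1}), which corresponds to the trivial (and hence vacuously minimal) safe dissociation $\Delta_\bot$. The inductive step splits on whether $q - \HVar(q)$ is disconnected or connected, mirroring lines~\ref{alg1:disconnected} and~\ref{alg1:connected}.

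In the disconnected case, I invoke \autoref{th:safeDissociation} to identify plans with safe dissociations. Because atoms from distinct components of $q - \HVar(q)$ share no existential variables beyond $\HVar(q)$, every plan $P$ for $q$ must factor as $\joind{}{P_1, \ldots, P_k}$ with one $P_i$ per component $q_i$, and the corresponding dissociation $\Delta^P$ restricts independently on each component to $\Delta^{P_i}$. Minimality of $\Delta^P$ in the lattice of safe dissociations of $q$ is therefore equivalent to minimality of every $\Delta^{P_i}$ in the lattice for $q_i$, and since each $q_i$ has strictly fewer atoms, the inductive hypothesis gives exactly the minimal plans of each component. The Cartesian-product construction in the algorithm then yields precisely the minimal plans of $q$.

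In the connected case, the two observations preceding the algorithm show that every plan factors as $P = \projd{\vec y}(\joind{}{P_1, \ldots, P_k})$, where $\vec y$ is a cut-set of $q - \HVar(q)$ and $\vec y = \SepVar(q^{\Delta^P})$. To show that minimality of $P$ forces $\vec y$ to be a min-cut-set, suppose a strictly smaller cut-set $\vec y' \subsetneq \vec y$ existed; then promoting $\vec y'$ to the top projection yields a plan $P'$ with $\Delta^{P'} \prec \Delta^P$, because the variables in $\vec y \setminus \vec y'$ need no longer be replicated across atoms on each side of the $\vec y'$-cut. Conversely, for each $\vec y \in \MinCuts(q-\HVar(q))$, the query $q'$ built at line~\ref{alg1:line10} has the same atoms but strictly fewer existential variables and is disconnected, so the inductive hypothesis (reducing immediately to the disconnected case at smaller depth) returns exactly the minimal plans of $q'$; prefixing each by $\projd{\vec y}$ in line~\ref{alg1:line12} produces the minimal plans of $q$ whose top-cut is $\vec y$, and ranging over $\MinCuts$ exhausts all minimal plans.

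The main obstacle is the strict-inequality argument in the connected case: one must verify that replacing $\vec y$ by a proper subset $\vec y'$ genuinely removes at least one variable from some dissociation parameter $\vec y_i$ of $\Delta^P$, rather than having those variables reintroduced as $\JVar$ contributions of a deeper join. This reduces to tracking atom-by-atom how the recursive $\JVar$ computation accumulates dissociation variables, and confirming that a variable $z \in \vec y \setminus \vec y'$ that is no longer at the top level can at worst be reintroduced on the atoms sharing a deeper $z$-cut, which is a strictly smaller set than the atoms on which $z$ was dissociated under $\Delta^P$. A minor additional subtlety is the interplay between the recursive adjustment of $\HVar(q_i)$ in the disconnected case and the fact that a head variable shared by multiple components never contributes to dissociation parameters, which needs to be stated carefully to make the independence argument for $\Delta^P$ fully rigorous.
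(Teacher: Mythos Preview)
Your inductive structure mirrors the paper's two observations and is the right skeleton. However, there is a genuine gap in both inductive cases that stems from the same overstatement. In the disconnected case you assert that ``every plan $P$ for $q$ must factor as $\joind{}{P_1, \ldots, P_k}$ with one $P_i$ per component,'' but this is false: for $q \datarule R(x), S(y)$ the plan $\projd{x,y}\joind{}{R(x),S(y)}$ is a legitimate (non-factored) plan, corresponding to the safe dissociation that adds $y$ to $R$ and $x$ to $S$. What is true is that every \emph{minimal} plan factors, and this requires an argument: given a safe dissociation $\Delta$ that adds a variable from component $C_j$ to an atom in a different component, you must show that dropping all such cross-component additions yields a strictly smaller dissociation $\Delta'$ that is still safe. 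The key fact you need (and never invoke) is that \emph{restricting a hierarchical query to a subset of its atoms, and then deleting some variables, preserves the hierarchical property}; this lets you verify that each $q_j^{\Delta'|_{C_j}}$ is hierarchical, hence $q^{\Delta'}$ is.

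The connected case suffers from the same missing ingredient. Your claim that ``promoting $\vec y'$ to the top projection yields a plan $P'$ with $\Delta^{P'} \prec \Delta^P$'' is the heart of the matter, and your own ``main obstacle'' paragraph concedes it is not established. The monotonicity of $\SepVar$ from observation~(2) only tells you that a smaller dissociation has a smaller separator set, which handles the \emph{converse} direction (if $P'$ is minimal for $q'$ then $\projd{\vec y}P'$ is minimal for $q$) cleanly, but does not by itself produce a smaller dissociation when $\vec y$ is non-minimal. The explicit construction you need is: given a min-cut-set $\vec y' \subsetneq \vec y = \SepVar(q^{\Delta^P})$, set $\Delta'_i = \Delta^P_i \cap (\Var(C_i) \cup \vec y')$ where $C_i$ is the component of $q - \vec y'$ containing $R_i$. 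Then $\Delta' \prec \Delta^P$ strictly (any $z \in \vec y \setminus \vec y'$ gets removed from atoms outside its component), and $q^{\Delta'}$ is hierarchical because each component $q^{\Delta'}|_{C_j}$ is obtained from the hierarchical $q^{\Delta^P}$ by the atom-restriction and variable-deletion operations mentioned above, and $\vec y' \subseteq \SepVar(q^{\Delta'})$. Once you supply this construction and the preservation-of-hierarchy lemma, both cases close.
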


\introparagraph{Conservativity} Some probabilistic database systems first check
if a query $q$ is safe, and in that case compute the exact probability
using the safe plan, otherwise use some approximation technique.  
We show that \autoref{alg:basicAlgorithm}
is conservative, in the sense that, if $q$ is safe,
then $\rho(q) = \PP{q}$.
Indeed, in that
case $\MP(q)$ returns a single plan, namely the safe $P$ for $q$,
because the empty dissociation, $\Delta_\bot = (\emptyset, \ldots,
\emptyset)$, is safe, and it is the bottom of the dissociation
lattice, making it the unique minimal safe dissociation.

\introparagraph{Score Quality} 
We show
here that the approximation of $\PP{q}$ by $\rho(q)$ becomes tighter
as the input probabilities in $D$ decrease. Thus, the smaller the probabilities in the database, the closer does the
ranking based on the propagation score approximate the ranking by the actual probabilities.

\begin{proposition}[Small probabilities]\label{prop:smallProbabilities}
Given a query $q$ and database $D$.
Consider the operation of scaling down the probabilities of all tuples in $D$ with a factor $f<1$. Then the relative error of approximation of $\PPP(q)$ by the propagation score $\rho(q)$ decreases as $f$ goes to 0:
$\lim_{f \rightarrow 0} \frac{\rho(q) - \PPP(q)}{\PPP(q)} \rightarrow 0$.
\end{proposition}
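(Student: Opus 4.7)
The plan is to expand both $\PPP(q)$ and $\PPP(q^\Delta)$ as polynomials in the scaling factor $f$ via inclusion--exclusion on their lineage formulas and show that they agree to leading order $f^m$, with all differences being $O(f^{m+1})$. Throughout I treat an individual answer tuple so that $q$ is effectively Boolean.

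\textbf{Step 1 (Matching clause structure).} First I would observe that $q$ and any dissociation $q^\Delta$ share the same set of variables, and, by the definition of $D^\Delta$, the same set of satisfying assignments $\theta$ on $D$ (any $\theta$ that satisfies all $R_i(\vec x_i)$ automatically places $\theta(y_{ij})$ into $\ADom_{y_{ij}}$, since $y_{ij}$ already appears in some atom of $q$). Hence both lineages are DNFs with exactly one clause per satisfying assignment, each clause involving exactly $m$ Boolean variables, and by construction $p'(R_i^{\vec y_i}(\theta(\vec x_i,\vec y_i))) = p(R_i(\theta(\vec x_i)))$.

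\textbf{Step 2 (Leading-order expansion).} Next I would apply inclusion--exclusion to each lineage:
\[
   \PPP(F) \;=\; \sum_{\theta} \PPP(C_\theta) \;-\; \sum_{\theta_1 < \theta_2} \PPP(C_{\theta_1}\wedge C_{\theta_2}) \;+\;\cdots
\]
After scaling all tuple probabilities by $f$, each size-$m$ clause contributes $f^m \prod_i p(R_i(\theta(\vec x_i)))$. Two distinct satisfying assignments $\theta_1 \neq \theta_2$ must differ on the tuple chosen for at least one atom, so $C_{\theta_1}\wedge C_{\theta_2}$ mentions at least $m+1$ distinct Boolean variables, contributing $O(f^{m+1})$; the same bound holds term-by-term for all higher-order inclusion--exclusion summands. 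The key point is that the leading $f^m$-term is identical for $F_{q,D}$ and $F_{q^\Delta,D^\Delta}$ because corresponding clauses have matching variable-probability products. Therefore
\[
   \PPP(q^\Delta) - \PPP(q) \;=\; O(f^{m+1}), \qquad \PPP(q) \;=\; \Theta(f^m),
\]
assuming $q(D)$ is nonempty (otherwise $\PPP(q)=0$ and the statement is vacuous since $\rho(q)=0$ as well).

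\textbf{Step 3 (From one dissociation to the minimum).} Since $\rho(q) = \min_\Delta \PPP(q^\Delta)$ and each $\PPP(q^\Delta) \geq \PPP(q)$ by \autoref{th:upperBounds}, the relative error satisfies
\[
   0 \;\le\; \frac{\rho(q) - \PPP(q)}{\PPP(q)} \;\le\; \frac{\PPP(q^\Delta) - \PPP(q)}{\PPP(q)} \;=\; \frac{O(f^{m+1})}{\Theta(f^m)} \;=\; O(f)
\]
for any fixed safe $\Delta$, which tends to $0$ as $f\to 0$.

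The main obstacle I expect is the careful bookkeeping in Step~2: one must verify that every higher-order inclusion--exclusion term is truly $O(f^{m+1})$ in both lineages and that the leading coefficient is strictly positive (so division by $\PPP(q)$ is safe in the limit). The rest is a direct comparison of polynomial expansions enabled by the fact that dissociation preserves the number of satisfying assignments, the size of each minterm, and the per-variable probabilities.
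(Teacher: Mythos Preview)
Your argument is correct and is the natural one: both $\PPP(q)$ and any $\PPP(q^\Delta)$ expand as $c\,f^m + O(f^{m+1})$ with the \emph{same} leading coefficient $c=\sum_\theta\prod_i p(R_i(\theta(\vec x_i)))$, because dissociation preserves the set of satisfying assignments, the clause size $m$, and the per-variable probabilities; the sandwich in Step~3 then finishes it. The paper itself does not include a proof of this proposition in the main text (all proofs are deferred to the companion technical report), but the polynomial-in-$f$ comparison you outline is precisely the intended mechanism, and your bookkeeping in Steps~1--2 (self-join-freeness forces $m$ distinct tuples per clause; distinct $\theta$'s differ on some atom, giving $\geq m{+}1$ tuples in any pairwise conjunction) is exactly what is needed.
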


\introparagraph{Number of Dissociations} While the number of minimal safe dissociations
is exponential in the size of the query, recall that it is 
independent of the size of the database.
\Autoref{table:numberMinimalQueryPlans} gives an overview of the
number of minimal query plans, total query plans, and all
dissociations for $k$-star and $k$-chain queries (which are later used
in \autoref{sec:experiments}).
Later \autoref{sec:optimizations} gives optimizations that allow us to evaluate a large number of plans efficiently.

\begin{figure}[t]
\centering	
\small
\renewcommand{\tabcolsep}{0.9mm}
\renewcommand{\arraystretch}{0.95}	
	\begin{tabular}[t]{@{\hspace{1pt}} >{$}c<{$} | >{$}r<{$} >{$}r<{$} >{$}r<{$} || 
		>{$}r<{$} | >{$}r<{$} >{$}r<{$} >{$}r<{$} @{\hspace{1pt}}}
		 	\multicolumn{4}{c || }{$k$-star query}	&\multicolumn{4}{c }{$k$-chain query}\\
		k 	&\multicolumn{1}{c}{\#MP}	&\multicolumn{1}{c}{\#P}	&\multicolumn{1}{c||}{$\#\Delta$}	
		&k	&\multicolumn{1}{c}{\#MP}	&\multicolumn{1}{c}{\#P}	&\multicolumn{1}{c}{$\#\Delta$}	\\
		\hline
		1	&1 		&1		 &1			&2  &1  		&1		&1		   \\ 
		2	&2 		&3		 &4			&3  &2  		&3		&4		   \\ 
		3	&6 		&13		 &64		&4  &5  		&11		&64	       \\ 
		4	&24 	&75		 &4096		&5  &14  		&45		&4096	   \\ 
		5	&120	&541	 &>10^6		&6  &42  		&197	&>10^6	   \\ 
		6	&720	&4683	 &>10^9		&7  &132  		&903	&>10^9	   \\ 
		7	&5040	&47293	 &>10^{12}	&8  &429  		&4279	&>10^{12}   \\ 					
		\hline 
		\textrm{seq}	&k!		&\href{http://oeis.org/classic/A000670}{A000670} 	 &2^{k(k\!-\!1)}
		& \textrm{seq}			
			&\href{http://oeis.org/classic/A000108}{A000108}  
			&\href{http://oeis.org/classic/A001003}{A001003}	&2^{(k\!+\!1)k}	   \\ 		
	\end{tabular}
\caption{Number of minimal plans, total plans, and total dissociations for star and chain queries (A are  OEIS sequence numbers~\protect\cite{oeis}).}
\label{table:numberMinimalQueryPlans}
\end{figure}

\subsection{Minimal plans with schema knowledge}\label{sec:schemaKnowledge}

\noindent Next, we show how knowledge of \emph{deterministic relations} (i.e.\ all tuples have
probability $= 1$), and \emph{functional dependencies} can reduce the number of plans needed to
calculate the propagation score.

\subsubsection{Deterministic relations (DRs)}\label{sec:deterministicOptimization}

\noindent Notice that we can treat deterministic relations (DRs) just like probabilistic
relations, and \autoref{cor:queryPlanBounds} with $\PP{q} \leq score(P)$ still holds for any
plan $P$. Just as before, our goal is to find a minimum number of plans that compute the
minimal score of \emph{all plans}: $\rho(q) = min_P score(P)$. It is known that an unsafe query
$q$ can become safe (i.e., $\PP{q}$ can be calculated in PTIME with one single plan) if we
consider DRs. Thus, in particular, we would still like an improved algorithm that returns one
single plan if a query with DRs is safe. The following lemma will help us achieve this goal:

\begin{lemma}[Dissociation and DRs]\label{lemma:DetDissociation}
Dissociating a deterministic relation does not change the probability.
\end{lemma}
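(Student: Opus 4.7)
The plan is to reduce the statement to part (2) of \autoref{th:bool:dissoc} (the lineage-level dissociation theorem), which is the natural tool since we already used its part (1) to establish \autoref{th:upperBounds}. The bridge between the query-level dissociation and the formula-level dissociation is the substitution $\theta : D^\Delta \rightarrow D$ defined in the proof of \autoref{th:upperBounds}: for every tuple $t' \in R_i^{\vec y_i}$, set $\theta(t') = \pi_{\vec x_i}(t')$. Under this $\theta$, the lineage $F_{q^\Delta, D^\Delta}$ is a dissociation of $F_{q,D}$, and a Boolean variable $X \in \mathbf{X}$ (i.e.\ a tuple $t \in R_i$) dissociates into the copies $\theta^{-1}(t) = \{t'\in R_i^{\vec y_i} : \pi_{\vec x_i}(t') = t\}$.

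First I would make precise what ``dissociating only a deterministic relation'' means at the level of $\Delta$: we have a dissociation $\Delta = (\vec y_1, \ldots, \vec y_m)$ in which $\vec y_i = \emptyset$ for every probabilistic relation, so only tuples of deterministic relations $R_i$ can possibly have $|\theta^{-1}(t)| > 1$. Tuples of probabilistic relations (where $\vec y_i = \emptyset$) satisfy $\theta^{-1}(t) = \{t\}$, and by the convention in the statement of \autoref{th:bool:dissoc} these variables do not dissociate.

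Next I would verify the two hypotheses of \autoref{th:bool:dissoc} part (2). (a) The no-common-prime-implicant condition: $F_{q^\Delta,D^\Delta}$ is a monotone DNF, and any prime implicant corresponds to a single valuation $\theta' \models q^\Delta$. For a fixed $\theta'$, the atom $R_i^{\vec y_i}(\vec x_i,\vec y_i)$ contributes exactly one tuple, so at most one dissociated copy of any original variable appears in any prime implicant. (b) All dissociated variables are deterministic: by construction only tuples of deterministic $R_i$ are dissociated, and $p'(t') = p(\pi_{\vec x_i}(t')) = 1$ for every such tuple, so the original variable $X = t$ has $p(X) = 1$ as required.

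With both hypotheses checked, part (2) of \autoref{th:bool:dissoc} gives $\PPP(F_{q,D}) = \PPP(F_{q^\Delta,D^\Delta})$, and since the probability of a Boolean query equals the probability of its lineage, this is exactly $\PPP(q) = \PPP(q^\Delta)$. The main subtlety, and the only thing that really needs care, is checking the no-common-prime-implicant condition (a); everything else is essentially bookkeeping on top of \autoref{th:upperBounds}. I do not expect any real obstacle beyond being explicit that each valuation $\theta'$ of the dissociated query picks out a single witness tuple from each atom, ruling out two distinct dissociated copies of the same original tuple co-occurring in one implicant.
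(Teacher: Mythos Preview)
Your proposal is correct and takes essentially the same approach as the paper: the paper's proof is a one-liner that invokes \autoref{th:bool:dissoc}(2) and notes that dissociating tuples in deterministic relations corresponds exactly to dissociating variables $X$ with $p(X)=1$. Your version simply spells out more carefully the bridge via $\theta$ from the proof of \autoref{th:upperBounds} and explicitly verifies the no-common-prime-implicant hypothesis, which the paper leaves implicit.
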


\begin{proof}
\Autoref{lemma:DetDissociation} follows immediately from \specialref{Theorem}{th:bool:dissoc}{(2)} and noting that dissociating tuples in DRs corresponds exactly to dissociating variables $\vec X$ with $p{(X_i)} = 1$.
\end{proof}

We thus define a new \emph{probabilistic dissociation preorder $\preceq^p$} by:
\begin{align*}
  \Delta \preceq^p \Delta' \Leftrightarrow \forall i, R_i \mbox{ probabilistic}: {\vec y}_i \subseteq {\vec y}_i'
\end{align*}

\noindent In other words, $\Delta \preceq^p \Delta'$ still implies $\PPP(q^\Delta) \leq
\PPP(q^{\Delta'})$, but $\preceq^p$ is defined on probabilistic relations only. Notice, that
for queries without DRs, the relations $\preceq^p$ and $\preceq$ coincide. However, for queries
with DRs, $\preceq^p$ is a preorder, not an order. Therefore, there exist distinct
dissociations $\Delta$, $\Delta'$ that are equivalent under $\preceq^p$ (written as $\Delta
\equiv^p \Delta'$), and thus have the same probability: $\PPP(q^\Delta) = \PPP(q^{\Delta'})$.
As a consequence, using $\preceq^p$ instead of $\preceq$, allows us to further reduce the
number of minimal safe dissociations.

\begin{example}[DRs]\label{ex:DRsSimple}
Consider $q \datarule R(x),S(x,y),T^d(y)$ 
where a $d$-exponent indicates a DR. 
This query is known to be safe.
We thus expect our
definition of $\rho(q)$ to find that $\rho(q) = \PP{q}$.  
Ignore that $T^d$ 
is deterministic, then  $\preceq$ has two minimal plans:
$q^{\Delta_1} \datarule R^y(x,y),S(x,y),T^d(y)$, and 
$q^{\Delta_2} \datarule R(x),S(x,y),T^{d x}(x,y)$.
Since $\Delta_2$ dissociates only $T^d$, we now know from \autoref{lemma:DetDissociation} that $\PP{q} = \PP{q^{\Delta_2}}$. 
Thus, by using $\preceq$ as before,
we still get the correct answer.
However, evaluating the plan $P^{\Delta_1}$ 
is always unnecessary since
$\Delta_2 \preceq^p \Delta_1$. 
In contrast, without information about DRs, $\Delta_2 \not \preceq^p \Delta_1$, and we would thus have to evaluate both plans. 

\Autoref{Fig_PartialDissociationOrderSimple} illustrates this with augmented incidence matrices: dissociated variables in DRs are now marked with empty circles ($\circ$) instead of full circles ($\bullet$),
and the preorder $\preceq^p$ is determined entirely by full circles (representing dissociated variables in probabilistic relations). 
However, as before, the correspondence to plans (as implied by the hierarchy between all variables) is still determined 
by empty and full circles.
\Autoref{Fig_PartialDissociationOrderSimple_b} shows that 
$\rho(q) = \PP{q^{\Delta_2}} = \PP{q}$
since
$\Delta_0 \equiv^p \Delta_2 \preceq^p \Delta_1 \equiv \Delta_3$.
Thus, the query is safe, and it suffices to evaluate only $P^{\Delta_2}$.
Notice that $q$ is not hierarchical, but still safe since it is in an equivalence class with a query that is hierarchical: $\Delta_0 \equiv^p \Delta_2$.
\Autoref{Fig_PartialDissociationOrderSimple_c} shows that, with $R^d$ and $T^d$ being deterministic, all three possible query plans (corresponding to $\Delta_1$,  $\Delta_2$, and  $\Delta_3$) form a ``\emph{minimal equivalence class}'' in $\preceq^p$ with $\Delta^0$, and thus give the exact probability.
We, therefore, want to modify our algorithm to return just one plan from each ``\emph{minimal safe equivalence class}.'' Ideally, we prefer the plan corresponding to $\Delta_3$ (or more generally, the top plan in $\preceq$ 
for each minimum equivalence class)
since $P^{\Delta_3}$ least constrains the join order between tables.
\end{example}

\begin{figure}[t]
	\vspace{-3mm}
    \centering
	\subfloat[]{
		\label{Fig_PartialDissociationOrderSimple_a}
		\includegraphics[scale=0.43]{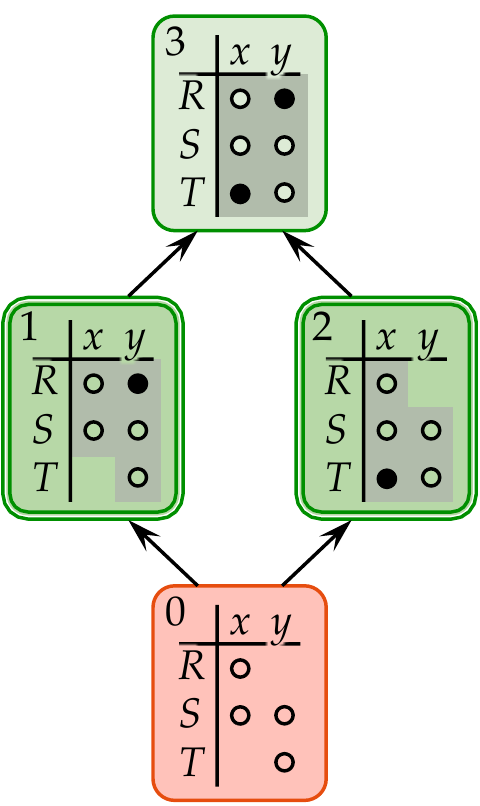}}
	\hspace{3mm}	
	\subfloat[{$T^d$}]{
		\includegraphics[scale=0.43]{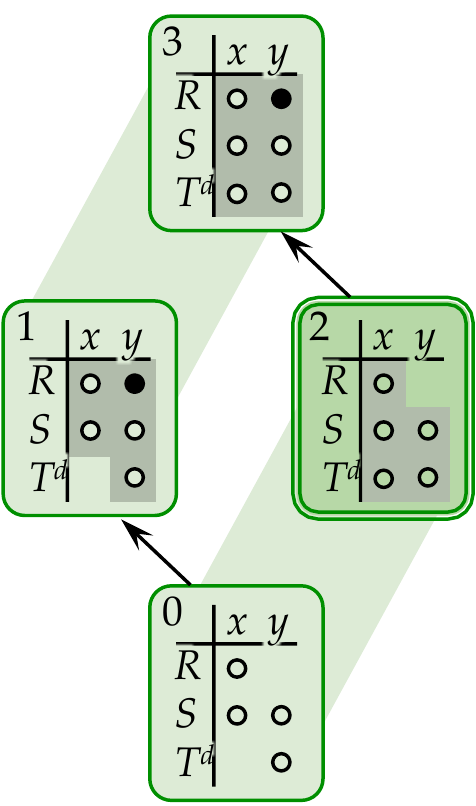}
		\label{Fig_PartialDissociationOrderSimple_b}}
	\hspace{3mm}	
	\subfloat[$R^d$ and $T^d$]{
		\includegraphics[scale=0.43]{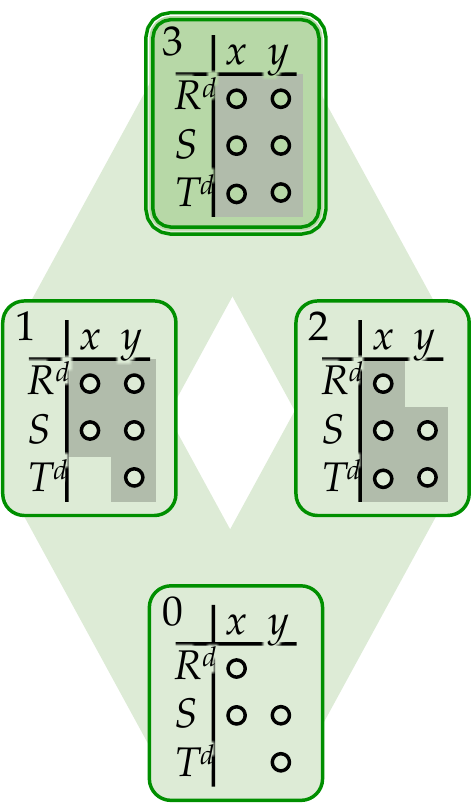}
		\label{Fig_PartialDissociationOrderSimple_c}}		
	\caption{\Autoref{ex:DRsSimple}:
	The presence of DRs $R^d$ and $T^d$ in (b) and (c) changes the original partial dissociation order for $q\datarule R(x),S(x,y),T(y)$ in (a):
	Several dissociations now have the same probability (shown with shaded areas instead of arrows). Our modified algorithm now returns, for \emph{each minimal safe equivalence class}, the query plan for the top most dissociation (shown in dark green and double-lined).
		}
	\label{Fig_PartialDissociationOrderSimple}
	\vspace{-1mm}
\end{figure}

We now explain two simple modifications to \autoref{alg:basicAlgorithm} that achieve exactly our desired optimizations described above:
\begin{enumerate}[nolistsep,label=(\arabic*)]

\item Denote with $\MinPCuts(q)$ the set of minimal cut-sets that disconnect the query into at least two connected components with probabilistic tables. 
Replace $\MinCuts(q)$ in \autoref{alg1:line10} with $\MinPCuts(q)$.

\item 
Denote with $m_p$ the number of probabilistic relations in a query.
Replace the stopping condition in \autoref{alg1:line1} with:
\textbf{if~}{$m^p \leq 1$} \textbf{then~}{$\mathcal{P} \leftarrow \{\projdd{\vec x} \joinp{}{R_1(\vec x_1), \ldots, R_m(\vec x_m)}\}$}. In other words, if a query has maximal one probabilistic relation, than 
join all relations followed by projecting on the head variables.

\end{enumerate}

\begin{theorem}[\autoref{alg:basicAlgorithm} with DRs]\label{prop:algorithm2}
	\Autoref{alg:basicAlgorithm} with above 2 modifications
	returns a minimum number of plans to calculate $\rho(q)$ given schema knowledge about DRs.
\end{theorem}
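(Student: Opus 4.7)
The plan is to lift the correctness argument of Theorem (Algorithm 1) from the order $\preceq$ to the refined preorder $\preceq^p$ induced by deterministic relations. Concretely, I would show that the modified algorithm outputs exactly one representative from each minimal equivalence class of safe dissociations under $\equiv^p$, which by Lemma (Dissociation and DRs) suffices to compute $\rho(q) = \min_\Delta \PPP(q^\Delta)$. As in Theorem (\autoref{prop:algorithmSound}), the argument is by induction on the query structure following the recursion of \autoref{alg:basicAlgorithm}.

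First I would justify Modification 2. If a subquery $q$ reached in the recursion satisfies $m_p \leq 1$, then any two safe dissociations of $q$ differ only in how they dissociate deterministic relations (the unique probabilistic relation, if present, cannot be meaningfully dissociated against relations none of which contribute probabilistic variables to the lineage). By Lemma (Dissociation and DRs) applied tuple-by-tuple, all such dissociations have identical probability, so they form a single equivalence class under $\equiv^p$. Hence emitting the single full-join plan $\pi_{\vec x}\,\joind{}{R_1(\vec x_1),\ldots,R_m(\vec x_m)}$ is both sound and minimum: it represents that class, and no fewer plans could suffice.

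Second I would justify Modification 1 (replacing $\MinCuts$ by $\MinPCuts$). Suppose $\vec y \in \MinCuts(q-\HVar(q)) \setminus \MinPCuts(q-\HVar(q))$, so $q-\vec y$ decomposes into components where all probabilistic atoms concentrate in a single component $q_1$ while $q_2,\ldots,q_k$ contain only deterministic atoms. The plan produced with $\pi_{\vec y}$ at the top attaches the variables of $\vec y$ as extra dissociated coordinates on the probabilistic atoms inside $q_1$ (through the subplan for $q_1$) and on the deterministic atoms inside $q_2,\ldots,q_k$. By Lemma (Dissociation and DRs) the latter are free, so the resulting dissociation is $\equiv^p$ to one in which $q_2,\ldots,q_k$ are merely joined into $q_1$'s subproblem and projected together; that latter dissociation is produced by the recursion once it descends far enough to trigger Modification 2 or to find a proper cut-set inside $\MinPCuts$. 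Thus no minimal $\equiv^p$-class is missed by restricting to $\MinPCuts$, and every class that used to be generated multiple times (once per deterministic-only split) is now generated exactly once.

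Combining these two steps inductively with the original correspondence between plans and safe dissociations (\autoref{th:safeDissociation}) gives the 1-to-1 correspondence between the plans returned and the minimal equivalence classes of safe dissociations under $\preceq^p$, which establishes the theorem. The main obstacle I anticipate is the second step: one must carefully verify that every plan whose top projection uses a non-p-min-cut-set is $\equiv^p$-dominated by a plan the modified algorithm still enumerates, and that no two enumerated plans land in the same class. This hinges on showing that deterministic components attached via $\vec y$ can always be absorbed inward without changing the probability score, which is where Lemma (Dissociation and DRs) does the real work and where a careful inductive hypothesis on the recursive calls is needed.
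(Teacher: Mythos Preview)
The paper defers the proof to its technical report, so there is no in-paper argument to compare against. Your overall strategy---induct on the recursion and show that the modified algorithm enumerates exactly one plan per minimal $\equiv^p$-class---is the natural one and matches how the paper sets up the preorder $\preceq^p$.

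However, your justification of Modification~2 contains a genuine gap. You assert that when $m_p \leq 1$, ``any two safe dissociations of $q$ differ only in how they dissociate deterministic relations.'' This is false whenever an existential variable appears only in deterministic atoms. Take $q \datarule R^d(x),\, S(x),\, T^d(x,y)$ with $S$ the sole probabilistic relation. The query is hierarchical, so $\Delta_\bot$ is safe; but $\Delta_\top$ dissociates $S$ on $y$, and $\Delta_\bot \not\equiv^p \Delta_\top$. On the instance $R=\{1\}$, $T=\{(1,1),(1,2)\}$, $S=\{1\}$ with $p(S(1))=\tfrac12$, we get $\PP{q}=\rho(q)=\tfrac12$, while the full-join plan $\pi_\emptyset\,\joind{}{R^d,S,T^d}$ returned by Modification~2 has score $1-(1-\tfrac12)^2=\tfrac34$. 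So the emitted plan does not lie in the minimal $\equiv^p$-class and does not compute $\rho(q)$. Your parenthetical that the probabilistic relation ``cannot be meaningfully dissociated'' is exactly where the argument fails: dissociating $S$ on a variable that lives only in deterministic atoms \emph{does} change the score. A correct treatment must either add the hypothesis that every existential variable occurs in some probabilistic atom, or replace the flat join by a plan that first projects the deterministic atoms onto $\Var(R_p)$ before joining with the single probabilistic atom $R_p$; only then does the emitted plan correspond to a dissociation that leaves $R_p$ untouched and hence sits in the minimal class.
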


For example, for $q\datarule R(x),S(x,y),T^d(y)$, $\MinCuts(q) = \{\{x\}, \{y\}\}$, while $\MinPCuts(q) = \{\{x\}\}$. 
Therefore, the modified algorithm returns $P^{\Delta_2}$ as single plan.
For $q\datarule R^d(x),S(x,y),T^d(y)$, the stopping condition is reached (also, $\MinPCuts(q) = \{ \emptyset \}$) and the algorithm returns $P^{\Delta_3}$ as single plan (see \autoref{Fig_PartialDissociationOrderSimple_c}).

\subsubsection{Functional dependencies (FDs)}\label{sec:FDOptimization}

\noindent
Knowledge of functional dependencies (FDs), such as keys, can also restrict
the number of necessary minimal plans.
A well known example is the query $q \datarule R(x), S(x,y), T(y)$ from \autoref{ex:DRsSimple}; it becomes safe 
if we know that $S$ satisfies the FD $\Gamma: x \rightarrow y$ and has a unique safe plan that corresponds to dissociation $\Delta_2$. In other words, we would like our modified algorithm to take $\Gamma$ into account and to not return the plan corresponding to dissociation $\Delta_1$.

Let $\bm{\Upgamma}$ be the set of FDs on $\Var(q)$
consisting of the union of FDs on every atom
$R_i$ in $q$.  As usual, denote ${\vec x}_i^+$ the closure of a set of
attributes ${\vec x}_i$, and denote $\Delta_{\bm{\Upgamma}} = ({\vec y}_1, \ldots,
{\vec y}_m)$ the dissociation defined as follows: for every atom
$R_i({\vec x}_i)$ in $q$, ${\vec y}_i = {\vec x}_i^+ \setminus
{\vec x}_i$. Then we show:

\begin{lemma}[Dissociation and FDs]\label{lemma:FDs1}
Dissociating a table $R_i$ on any variable $y \in {\vec x}_i^+$ does not change the probability.
\end{lemma}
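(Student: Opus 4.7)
My strategy is to prove something stronger than the stated equality of probabilities: the two lineage formulas $F_{q,D}$ and $F_{q^{\Delta},D^{\Delta}}$ are identical up to a probability-preserving bijective renaming of their Boolean variables. Since $\PP{F}$ depends only on the formula modulo such a renaming, this will immediately yield $\PP{q^{\Delta}}=\PP{q}$. The essential observation is that although the dissociated table $R_i^y$ contains $|\ADom_y|$ distinct copies $R_i^y(t,v)$ of every original tuple $R_i(t)\in R_i$, only one specific copy can ever appear in any clause of $F_{q^{\Delta},D^{\Delta}}$.

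The crux is the following lemma: for every assignment $\theta$ that satisfies $q$ on $D$ (equivalently, satisfies $q^{\Delta}$ on $D^{\Delta}$), the value $\theta(y)$ is a function of $\theta(\vec x_i)$ alone. I would prove this by induction on a derivation of $y\in\vec x_i^+$ through the FDs in $\bm{\Upgamma}$. At each step we apply some FD $\vec u\to v$ that originates from a single atom $R_k(\vec x_k)$ of $q$ with $\vec u\cup\{v\}\subseteq\vec x_k$; because $\theta$ maps $R_k(\vec x_k)$ to an actual tuple of $R_k$ and $R_k$ satisfies the FD, $\theta(v)$ is determined by $\theta(\vec u)$, which by the inductive hypothesis is already determined by $\theta(\vec x_i)$. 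Writing the resulting function as $t\mapsto v_t$, we obtain $\theta(y)=v_t$ whenever $\theta(\vec x_i)=t$.

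Given the lemma, the proof concludes quickly. Each clause of $F_{q^{\Delta},D^{\Delta}}$ comes from some satisfying assignment $\theta$ of $q^{\Delta}$, and its $R_i^y$-atom must be exactly $R_i^y(t,v_t)$ with $t=\theta(\vec x_i)$. Consequently, among the dissociated copies $\{R_i^y(t,v):v\in\ADom_y\}$, only the single copy $R_i^y(t,v_t)$ ever occurs in $F_{q^{\Delta},D^{\Delta}}$; the remaining copies are ``dead'' Boolean variables that never appear in the formula and hence do not affect its probability. Renaming each occurring $R_i^y(t,v_t)$ back to $R_i(t)$ and leaving every other (non-dissociated) atom of $q^{\Delta}$ untouched therefore turns $F_{q^{\Delta},D^{\Delta}}$ into $F_{q,D}$; because $p'(R_i^y(t,v_t))=p(R_i(t))$ holds by the definition of $D^{\Delta}$, probabilities are preserved under this renaming, giving $\PP{q^{\Delta}}=\PP{q}$.

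The principal technical obstacle is the closure-induction step: the FDs in $\bm{\Upgamma}$ live on query variables but are derived from FDs on individual relations of $D$, and one must verify carefully that a chain of single-atom FDs composes into a correct global constraint on every $\theta\models q$. A secondary point worth stating explicitly is that dissociated variables absent from the formula contribute nothing to $\PP{F_{q^{\Delta},D^{\Delta}}}$; this is immediate but closes the argument. Note that an appeal to \specialref{Theorem}{th:bool:dissoc}{(2)} does \emph{not} suffice here, since the dissociated variables are not in general deterministic; the FD-based argument above is what replaces the determinism hypothesis.
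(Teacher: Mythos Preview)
Your proof is correct. The paper itself does not give a detailed argument for this lemma; it only remarks that ``this lemma is similar to \autoref{lemma:DetDissociation}'' and defers full proofs to the technical report. Your approach---proving by induction along a closure derivation that $\theta(y)$ is functionally determined by $\theta(\vec x_i)$ for every satisfying assignment, and concluding that only the single copy $R_i^y(t,v_t)$ ever occurs in the dissociated lineage---is precisely the natural way to make that hint rigorous, and your observation that \specialref{Theorem}{th:bool:dissoc}{(2)} cannot be invoked directly (since the dissociated tuples need not be deterministic) is well taken.

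One small point worth tightening: your bijective renaming is a bijection only on the Boolean variables that \emph{actually occur} in the two lineage formulas, not on the full variable sets $D$ and $D^{\Delta}$; you implicitly use this when you discard the ``dead'' copies $R_i^y(t,v)$ with $v\neq v_t$. It would be cleanest to state explicitly that the set of satisfying assignments of $q$ on $D$ coincides with that of $q^{\Delta}$ on $D^{\Delta}$ (immediate since $R_i^y=R_i\times\ADom_y$), so that the two DNFs have the same index set of clauses, and that the clause indexed by $\theta$ is identical after the renaming $R_i^y(t,v_t)\mapsto R_i(t)$.
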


\noindent
This lemma is similar to \autoref{lemma:DetDissociation}. We can thus further refine our probabilistic dissociation preorder ${{\preceq^p}'}$ by:
\begin{align*}
  	\Delta {\preceq^p}' \Delta \Leftrightarrow \forall i, R_i \mbox{ probabilistic}: 
	{\vec y}_i \setminus \vec x_i^+ \subseteq {\vec y}_i' \setminus \vec x^+_i
\end{align*}

\noindent
As a consequence, using ${\preceq^p}'$ instead of $\preceq^p$, allows us to further reduce the number of minimal safe equivalence classes. 
We next state a result by \cite{DBLP:conf/icde/OlteanuHK09} in our notation:

\begin{proposition}[Safety and FDs~\protect{\cite[Prop.~IV.5]{DBLP:conf/icde/OlteanuHK09}}] A query $q$
  is safe iff $q^{\Delta_{\bm{\Upgamma}}}$ is hierarchical.
\end{proposition}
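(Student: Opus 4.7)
The plan is to reduce the claim to the FD-free dichotomy (\autoref{th:dichotomy}) using \autoref{lemma:FDs1}. Iterating \autoref{lemma:FDs1} once per variable in ${\vec x}_i^+ \setminus {\vec x}_i$ for every atom $R_i$, one obtains the key equality $\PPP(q) = \PPP(q^{\Delta_{\bm{\Upgamma}}})$ on every database $D$ that satisfies $\bm{\Upgamma}$, with $|D^{\Delta_{\bm{\Upgamma}}}|$ polynomial in $|D|$. Thus the two problems are PTIME-equivalent on FD-compliant instances, and the safety of $q$ becomes a question about the hierarchical status of $q^{\Delta_{\bm{\Upgamma}}}$.

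The easy direction ($\Leftarrow$) then follows immediately: if $q^{\Delta_{\bm{\Upgamma}}}$ is hierarchical, \autoref{th:dichotomy} yields a PTIME algorithm for $\PPP(q^{\Delta_{\bm{\Upgamma}}})$, which transfers to $\PPP(q)$ via the equality above; hence $q$ is safe.

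For the hard direction ($\Rightarrow$), I would argue contrapositively. Assume $q^{\Delta_{\bm{\Upgamma}}}$ is not hierarchical. By \autoref{th:dichotomy}, computing $\PPP(q^{\Delta_{\bm{\Upgamma}}})$ is \#P-hard, and its standard hardness proof reduces $\#\textup{PP2DNF}$ to it using a witness pair $x, y \in \EVar(q^{\Delta_{\bm{\Upgamma}}})$ that violates \autoref{def:hierq}. The main obstacle is that the hard instances produced by this reduction are arbitrary databases over the dissociated vocabulary, and such an instance is generally not of the form $D^{\Delta_{\bm{\Upgamma}}}$ for any FD-compliant $D$: dissociation imposes a rigid cross-product structure on the instance and forces all copies of a source tuple to share a single probability. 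To bypass this, I would re-examine the reduction and show that its hard instances can be realized as the dissociation of an FD-compliant database for $q$. The point is that the added attributes all lie in $\vec x_i^+$, so the FDs in $\bm{\Upgamma}$ pin down exactly the tuple-collapsing behaviour that the dissociation imposes, and the source probabilities of $D$ can be chosen to match the copies in $D^{\Delta_{\bm{\Upgamma}}}$.

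The subtle step, and the main technical burden, will be arguing that the hierarchy violation in $q^{\Delta_{\bm{\Upgamma}}}$ reflects a genuine probabilistic obstruction in $q$ rather than an artifact introduced by adding closure variables. This should follow from the fact that closure attributes are functionally determined, so the witnessing atoms for the violation in $q^{\Delta_{\bm{\Upgamma}}}$ correspond, after tracing back the substitution $\theta$ used in defining the dissociation, to a genuine non-hierarchical substructure in $q$ modulo $\bm{\Upgamma}$.
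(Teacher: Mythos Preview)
The paper does not provide its own proof of this proposition; it is quoted verbatim from \cite{DBLP:conf/icde/OlteanuHK09} (the paper introduces it with ``We next state a result by \cite{DBLP:conf/icde/OlteanuHK09} in our notation''). So there is no in-paper argument to compare against, and I can only comment on your proposal on its own merits.

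Your easy direction ($\Leftarrow$) is correct: iterating \autoref{lemma:FDs1} gives $\PPP(q) = \PPP(q^{\Delta_{\bm{\Upgamma}}})$ on every $\bm{\Upgamma}$-compliant $D$, and since $|D^{\Delta_{\bm{\Upgamma}}}|$ is polynomial in $|D|$, PTIME for the hierarchical dissociated query transfers back to $q$.

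The hard direction ($\Rightarrow$) has a real gap. You propose to take the hard instances $D'$ produced by the \#PP2DNF reduction for $q^{\Delta_{\bm{\Upgamma}}}$ and realize each one as $D^{\Delta_{\bm{\Upgamma}}}$ for some $\bm{\Upgamma}$-compliant $D$. But $D^{\Delta_{\bm{\Upgamma}}}$ is, by Definition~\ref{def:Dissociation}, a full Cartesian product of each $R_i$ with the active domains of the added variables, with all copies of a source tuple sharing one probability. An arbitrary instance over the dissociated vocabulary does not have this shape, and nothing in your sketch shows that the specific instances produced by the standard reduction do. The claim that ``the FDs in $\bm{\Upgamma}$ pin down exactly the tuple-collapsing behaviour that the dissociation imposes'' is the heart of the matter, but as stated it is an intuition, not an argument; and the last paragraph (``a genuine non-hierarchical substructure in $q$ modulo $\bm{\Upgamma}$'') invokes a notion that is never defined.

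The cleaner route, and the one the cited paper takes, is to bypass the detour through instances of $q^{\Delta_{\bm{\Upgamma}}}$ altogether and build $\bm{\Upgamma}$-compliant hard instances for $q$ directly. The hierarchy violation in $q^{\Delta_{\bm{\Upgamma}}}$ gives you witnessing variables $x,y$ and atoms $A,B,C$ with $x \in \vec x_A^+ \setminus \vec x_B^+$, $y \in \vec x_B^+ \setminus \vec x_A^+$, and $x,y \in \vec x_C^+$ (since the variables of each dissociated atom are exactly the closure of the original atom's variables). The fact that $y \notin \vec x_A^+$ is precisely what lets you vary $y$ independently of the values populating $A$ while still satisfying every FD in $\bm{\Upgamma}$, and symmetrically for $x$ and $B$; this freedom is what you need to encode \#PP2DNF in an FD-compliant instance of $q$. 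Your sketch gestures toward this at the very end, but the construction needs to be carried out explicitly rather than reduced to the unconstrained hardness of $q^{\Delta_{\bm{\Upgamma}}}$.
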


This justifies our third modification to \autoref{alg:basicAlgorithm} for computing 
$\rho(q)$ of a query $q$ over a database that
satisfies $\bm{\Upgamma}$: First compute ${\Delta_{\bm{\Upgamma}}}$, then run $q^{\Delta_{\bm{\Upgamma}}}$ on our previously modified \autoref{alg:basicAlgorithm}.

\begin{theorem}[\autoref{alg:basicAlgorithm} with FDs]\label{prop:allQueryPlansFDs}
	\Autoref{alg:basicAlgorithm} with above 3 modifications
	returns a minimum number of plans to calculate $\rho(q)$ given schema knowledge about DRs and FDs.
\end{theorem}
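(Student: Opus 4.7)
The plan is to build on the preceding \autoref{prop:algorithm2} (\autoref{alg:basicAlgorithm} with the first two modifications already handles DRs correctly) and to show that the third modification, namely preprocessing $q$ into $q^{\Delta_{\bm{\Upgamma}}}$, preserves correctness of $\rho(q)$ while further collapsing the space of minimal plans under the refined preorder ${\preceq^p}'$. The correctness half follows almost directly from \autoref{lemma:FDs1}: dissociating any atom $R_i$ along variables in $\vec x_i^+ \setminus \vec x_i$ does not change the probability of the query, so $\PPP(q) = \PPP(q^{\Delta_{\bm{\Upgamma}}})$, and more generally $\PPP(q^\Delta) = \PPP(q^{\Delta \cup \Delta_{\bm{\Upgamma}}})$ for every $\Delta$. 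Hence running the (already DR-aware) algorithm on $q^{\Delta_{\bm{\Upgamma}}}$ produces the same score $\rho(q^{\Delta_{\bm{\Upgamma}}}) = \rho(q)$.

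Next I would argue minimality with respect to ${\preceq^p}'$. By construction of $\Delta_{\bm{\Upgamma}}$, every atom in $q^{\Delta_{\bm{\Upgamma}}}$ already contains its entire FD-closure, so further dissociation of an atom $R_i^{\vec x_i^+}$ along any variable is a dissociation \emph{outside} $\vec x_i^+$, which is exactly the only kind of dissociation that ${\preceq^p}'$ distinguishes. Consequently, the preorder ${\preceq^p}'$ on dissociations of $q$ coincides with the preorder $\preceq^p$ on dissociations of $q^{\Delta_{\bm{\Upgamma}}}$. Combined with \autoref{prop:algorithm2}, this shows that the modified algorithm enumerates exactly one representative per minimal safe equivalence class of $({\preceq^p}')$, which is the minimum number needed.

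I would then verify the two boundary cases that make the claim a strict generalization of the previous algorithms. First, if $q$ is safe (which, by the cited Proposition from \cite{DBLP:conf/icde/OlteanuHK09}, happens iff $q^{\Delta_{\bm{\Upgamma}}}$ is hierarchical), then $q^{\Delta_{\bm{\Upgamma}}}$ is safe, and applying \autoref{prop:algorithm2} to it returns a single plan; this plan computes $\PPP(q^{\Delta_{\bm{\Upgamma}}}) = \PPP(q)$ exactly. Second, in the absence of FDs, $\Delta_{\bm{\Upgamma}} = \Delta_\bot$ and the third modification is vacuous, so we recover \autoref{prop:algorithm2} verbatim. Putting both halves together gives the theorem.

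The main obstacle I anticipate is making rigorous the claim that dissociation steps ``commute'' with the preprocessing $q \mapsto q^{\Delta_{\bm{\Upgamma}}}$, i.e.\ that the lattice of dissociations of $q^{\Delta_{\bm{\Upgamma}}}$ is in probability-preserving bijection with the quotient of the dissociation lattice of $q$ by $\equiv^{p'}$. This needs a careful bookkeeping argument: given a dissociation $\Delta$ of $q$, one must show that $(q^{\Delta})^{\Delta_{\bm{\Upgamma}'}}$ and $(q^{\Delta_{\bm{\Upgamma}}})^{\Delta'}$ coincide as queries (for appropriate $\Delta'$ obtained by restricting $\Delta$ to variables outside the closures), and that the probabilities agree by iterated application of \autoref{lemma:FDs1}. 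Once this commutation is in hand, transporting the minimality statement of \autoref{prop:algorithm2} from $q^{\Delta_{\bm{\Upgamma}}}$ back to $q$ is routine.
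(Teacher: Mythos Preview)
Your proposal is correct and follows essentially the same approach the paper outlines (the paper itself defers the full proof to its technical report, but the surrounding text makes the intended argument clear): first chase $q$ to $q^{\Delta_{\bm{\Upgamma}}}$ using \autoref{lemma:FDs1} to preserve probabilities, then invoke \autoref{prop:algorithm2} on the chased query, with the key observation that ${\preceq^p}'$ on dissociations of $q$ coincides with $\preceq^p$ on dissociations of $q^{\Delta_{\bm{\Upgamma}}}$. The commutation bookkeeping you flag as the main obstacle is indeed the only nontrivial step, and your plan for handling it (restrict $\Delta$ to variables outside the closures and iterate \autoref{lemma:FDs1}) is the right one.
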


It is easy to see that our modified algorithm returns one single
plan iff the query is safe, taking into account its structure,
DRs and FDs. It is thus a
\emph{strict generalization of all known safe self-join-free
  conjunctive
  queries}~\cite{DBLP:journals/vldb/DalviS07,DBLP:conf/icde/OlteanuHK09}.
In particular, we can reformulate the known safe query dichotomy~\cite{DBLP:journals/vldb/DalviS07} in our notation very succinctly:

\begin{corollary}[Dichotomy]
	$\PP{q}$ can be calculated in PTIME iff there exists a dissociation $\Delta$ of $q$ that is ($i$) hierarchical, and ($ii$) in an equivalence class with $q$ under 
	${\preceq^p}'$. 
\end{corollary}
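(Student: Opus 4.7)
The plan is to reduce the corollary to two ingredients already in hand: the Dalvi--Suciu dichotomy extended by Olteanu, Huang and Koch (i.e., the Proposition on ``Safety and FDs'') together with the dissociation lemmas for deterministic relations (\autoref{lemma:DetDissociation}) and for functional dependencies (\autoref{lemma:FDs1}). Both give sharp conditions under which dissociating an atom preserves the exact probability, and these conditions are exactly what define the equivalence classes of ${\preceq^p}'$.

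For the ``only if'' direction, assume $\PP{q}$ is in PTIME. Invoke the Proposition of Olteanu et al.\ to obtain that $q^{\Delta_{\bm{\Upgamma}}}$ is hierarchical, where $\Delta_{\bm{\Upgamma}}$ is the canonical FD-dissociation defined just before the Proposition. Taking $\Delta := \Delta_{\bm{\Upgamma}}$ gives condition $(i)$ directly. For condition $(ii)$, note that $\Delta_{\bm{\Upgamma}}$ dissociates each atom $R_i(\vec x_i)$ only on variables in $\vec x_i^+ \setminus \vec x_i$; hence the sets $\vec y_i \setminus \vec x_i^+$ are empty for every $i$, matching those of the trivial dissociation $\Delta_\bot$ of $q$. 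By the definition of ${\preceq^p}'$, this gives $\Delta_\bot \equiv^{p'} \Delta_{\bm{\Upgamma}}$, so $\Delta_{\bm{\Upgamma}}$ lies in the same equivalence class as $q$.

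For the ``if'' direction, suppose there is a dissociation $\Delta$ satisfying $(i)$ and $(ii)$. By $(i)$ and the Dichotomy theorem, $\PP{q^{\Delta}}$ is computable in PTIME. By $(ii)$, $\Delta \equiv^{p'} \Delta_\bot$, which by the definition of ${\preceq^p}'$ means that $q^\Delta$ is obtained from $q$ purely by dissociating deterministic relations (covered by \autoref{lemma:DetDissociation}) and by dissociating probabilistic relations only on variables already in the closure under $\bm{\Upgamma}$ (covered by \autoref{lemma:FDs1}). Applying both lemmas repeatedly along a chain of intermediate dissociations yields $\PP{q} = \PP{q^\Delta}$, so $\PP{q}$ is in PTIME as well.

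The only subtlety I expect is bookkeeping in the ``if'' direction: composing the two probability-preservation lemmas along a sequence of one-variable dissociations to connect $\Delta_\bot$ and $\Delta$. This is mechanical once one observes that at each step the variable being added either lies in $\vec x_i^+$ (use \autoref{lemma:FDs1}) or the affected atom is deterministic (use \autoref{lemma:DetDissociation}). No additional combinatorial argument on the lattice is needed, since ${\preceq^p}'$ is defined precisely so that its equivalence classes collect exactly these two types of ``free'' dissociations.
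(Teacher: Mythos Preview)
There is a genuine gap in your ``only if'' direction. You invoke the Olteanu--Huang--Koch proposition to conclude that $q^{\Delta_{\bm{\Upgamma}}}$ is hierarchical whenever $\PP{q}$ is in PTIME, and then take $\Delta := \Delta_{\bm{\Upgamma}}$ as your witness. But that proposition, as stated in this paper, characterizes safety via functional dependencies only; it does not account for deterministic relations. The paper's own \autoref{ex:DRsSimple} is a counterexample: $q \datarule R(x), S(x,y), T^d(y)$ with no FDs is safe, yet $\Delta_{\bm{\Upgamma}} = \Delta_\bot$ and $q^{\Delta_{\bm{\Upgamma}}} = q$ is not hierarchical. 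Your chosen $\Delta$ therefore fails condition $(i)$ in this case. The required witness here is $\Delta_2$ (dissociating $T^d$ on $x$), which lies in the ${\preceq^p}'$-equivalence class of $q$ precisely because $T^d$ is deterministic---something $\Delta_{\bm{\Upgamma}}$ cannot produce.

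The paper's explanation immediately after the corollary indicates the intended shape of the argument: first apply the FD chase $\Delta_{\bm{\Upgamma}}$, and \emph{then} still allow further dissociation of the deterministic atoms to reach a hierarchical query. In other words, the witness $\Delta$ in the ``only if'' direction must in general combine both kinds of probability-preserving dissociation---those justified by \autoref{lemma:FDs1} and those justified by \autoref{lemma:DetDissociation}---not just the former. Your ``if'' direction already handles this combination correctly; for the converse you additionally need either a stronger cited dichotomy that covers DRs and FDs simultaneously, or a direct argument that safety under the combined schema constraints forces the existence of such a hierarchical $\Delta$ within the ${\preceq^p}'$-class of $q$.
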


\noindent
To see what the corollary says, assume first that there are no FDs: Then $q$ is in PTIME iff there exists a dissociation $\Delta$ of the DRs only, such that $q^\Delta$ is hierarchical.  If there are FDs, then we first compute the full dissociation $\Delta_{\bm{\Upgamma}}$ (called ``full chase'' in \cite{DBLP:conf/icde/OlteanuHK09}), then apply the same criterion to $q^{\Delta_{\bm{\Upgamma}}}$.

\section{Multi-query Optimizations}\label{sec:optimizations}
\noindent
So far, \autoref{alg:basicAlgorithm} enumerates all minimal query plans. 
We then take the minimum score of those plans in order to calculate the propagation score $\rho(q)$.
In this section, we develop three optimizations that can considerably reduce 
the necessary calculations for evaluating all minimal query plans. Note that these three optimizations and the two optimizations from the previous section are orthogonal and can be arbitrarily combined in the obvious way.
We use the following example to illustrate the first two optimizations.

\begin{example}[Optimizations]\label{ex:optimizationExample}
Consider 
$q \datarule$ $R(x,z),S(y,u),$ $T(z), U(u), M(x,y,z,u)$.
Our default is to evaluate all 6 minimal plans returned by \specificref{Algorithm}{alg:basicAlgorithm}, 
then take the minimum score (shown in \figref{Fig_AllOptimizedQueryPlans}{a}).
\Figref{Fig_AllOptimizedQueryPlans}{b}
and \figref{Fig_AllOptimizedQueryPlans}{c}
illustrate the optimized evaluations after applying Opt.~1, or Opt.~1 and Opt.~2, respectively.
\markend
\end{example}

\subsection{Opt.\ 1: One single query plan}\label{sec:oneSinglePlan}
\noindent
Our first optimization creates one single query plan by \emph{pushing the min-operator down into the leaves}. It thus avoids calculations when it is clear that other calculations must have lower bounds. The idea is simple: Instead of creating one query subplan for each top set  $\vec y \in \TopSet(q)$ in \autoref{alg1:line12} of \autoref{alg:basicAlgorithm}, the adapted \autoref{alg:opt1Algorithm} takes the minimum score over those top sets, for each tuple of the head variables in \autoref{alg2:line11}. It thus creates one single query plan.

\begin{figure}[t]
\centering
{\includegraphics[scale=0.82]{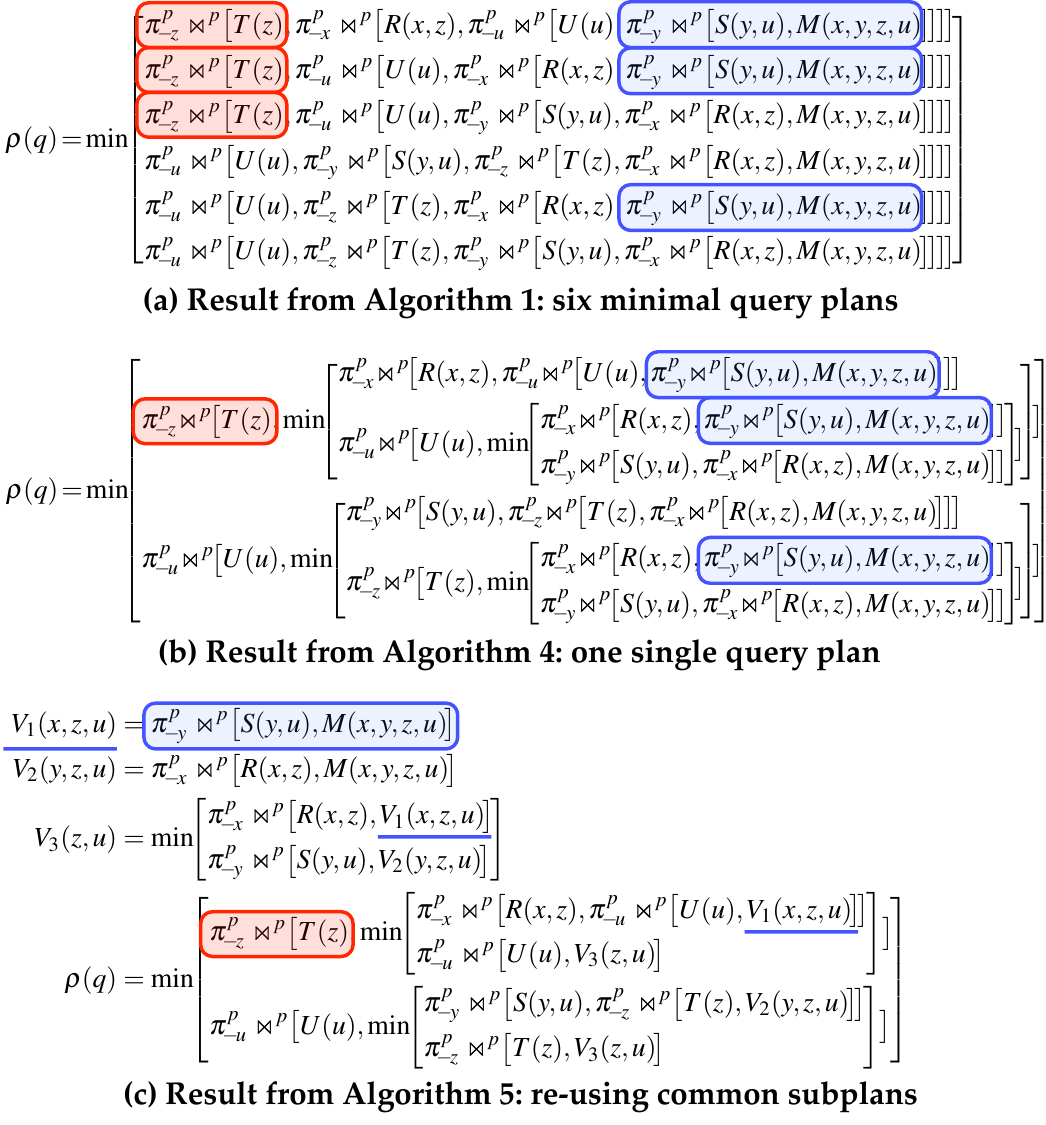}}
\vspace{-4mm}
\caption{
\Autoref{ex:optimizationExample} before and after applying optimizations 1 and 2.
  }
\label{Fig_AllOptimizedQueryPlans}
\end{figure}

\begin{algorithm2e}[t]
\scriptsize
\caption{Optimization~1 recursively pushes the min operator into the leaves and generates one single query plan.}\label{alg:opt1Algorithm}
\SetKwInput{Algorithm}{Recursive algorithm}
\SetKwFunction{SP}{SP}
\SetKwFor{ForAll}{forall}{do}{endfch}	
\LinesNumbered

\Algorithm{\FuncSty{SP (SinglePlan)}}
\KwIn{Query $q(\vec x) \datarule R_1(\vec x_1), \ldots, R_m(\vec x_m)$}	
\KwOut{Single query plan $P$}
\BlankLine
		
\lIf{$m = 1$}{$P \leftarrow  \projpd{\vec x} R_i(\vec x_i)$}
\Else{
	\uIf{$q$ is disconnected}{
		Let $q = q_1, \ldots, q_k$ be the components connected by $\EVar(q)$ \;
		Let $\HVar(q_i) \leftarrow \HVar(q) \cap \Var(q_i)$ \;
		$P \leftarrow \joinp{}{\SP (q_1 ), \ldots, \SP (q_k )}$ \;
	}
	\Else{
		Let $\TopSet(q) = \{\vec y_1, \ldots, \vec y_j\}$ \;
		Let $q'_i \leftarrow q_i$ with $\HVar(q'_i) \leftarrow \HVar(q) \cup \vec y_i$ \;
		\textbf{if} $j \!=\! 1$ \textbf{then} 
			$P \leftarrow \projp{\vec y_1} \SP (q_1')$ \;
		\textbf{else} 
			$P \leftarrow  \minp{\projp{\vec y_1} \SP (q_1'), \ldots, 
			\projp{\vec y_j} \SP (q_j')}$ \;\label{alg2:line11}
	}
}
\end{algorithm2e}

\subsection{Opt.~2: Re-using common subplans}\label{sec:reusingCommonSubplans}
\noindent
Our second optimization calculates only once, then \emph{re-uses common subplans shared between the minimal plans}.
Thus, whereas our first optimization reduces computation by combining plans at their roots, the second optimization stores and re-uses common results in the branches.
The adapted \autoref{alg:opt2Algorithm} works as follows: It first traverses the whole single query plan (\FuncSty{FindingCommonSubplans}) and remembers each subplan by the atoms used and its head variables in a HashSet \FuncSty{HS} (\autoref{alg3:line14}). If it sees a subplan twice (\autoref{alg3:line12}), it creates a new view for this subplan, mapping the subplan to a new view definition. The actual plan (\FuncSty{ViewReusingPlan}) then uses these views whenever possible (\autoref{alg3:line17}). The order in which the views are created (\autoref{alg3:line5}) assures that the algorithm also discovers and exploits \emph{nested common subexpressions}. \Figref{Fig_AllOptimizedQueryPlans}{c} illustrates for \autoref{ex:optimizationExample}, that both the main plan and the view $V_3$ re-use views $V_1$ and $V_2$.

\begin{algorithm2e}[t]
\scriptsize
\caption{Optimizations 1 \& 2 together create a query plan which re-uses several previously defined temporary views.}\label{alg:opt2Algorithm}
\SetKwInput{Algorithm}{Algorithm}
\SetKwInput{Function}{Recursive function}
\SetKwFunction{FS}{FS}
\SetKwFunction{SP}{SP}
\SetKwFunction{RP}{RP}
\SetKwFunction{HS}{HS}
\SetKwFunction{HM}{HM}
\SetKwFor{ForAll}{forall}{do}{endfch}	
\LinesNumbered

\Algorithm{\FuncSty{UsingCommonSubplans}}
\KwIn{Query $q(\vec x) \datarule R_1(\vec x_1), \ldots, R_m(\vec x_m)$}	
\KwOut{Ordered set of view definitions $\mathcal V$, final query plan $P$}
$\HS\leftarrow \emptyset$  \qquad // HashSet of all subplans\;
$\HM\leftarrow \!(\emptyset,\emptyset)$  \, // HashMap from subplans to unique view names \;
$\mathcal V  \leftarrow \emptyset$  \qquad\,\!\! // Set of view definitions \;
$\FS(q)$ \;
\ForEach{$q_i \in \HM$.keys in increasing size of $\HVar(q_i)$ and $\Var(q_i)$\label{alg3:line5}}{
	$\mathcal V \leftarrow \mathcal V \cup \{\HM.val = \FuncSty{ViewReusingPlan}(q_i) \}$ \;
}
$P = \RP(q)$ \;
\BlankLine

\BlankLine
\Function{\FuncSty{FS (FindingCommonSubplans)}}
\KwIn{Query $q(\vec x) \datarule R_1(\vec x_1), \ldots, R_m(\vec x_m)$}	
\uIf{$q$ is disconnected}{
	Let $q = q_1, \ldots, q_k$ be the components connected by $\EVar(q)$ \;
	\lForEach{$q_i$}{$\FS(q_i(\vec x_i))$}
}
\Else{
\lIf{$(m = 1 \wedge \vec x = \vec x_i) \vee \HM(q) \neq \emptyset$}{
	\Return
}
\lIf{$q \in \HS \wedge \HM(q) = \emptyset $\label{alg3:line12}}{
	$\HM(q) \leftarrow \textrm{new view name}$ 
}
	$\FuncSty{HS} \leftarrow \FuncSty{HS} \cup \{q\}$ \;\label{alg3:line14}
	\ForEach{$\vec y \in \TopSet(q)$}{
		Let $q' \leftarrow q$ with $\HVar(q') \leftarrow \HVar(q) \cup \vec y$ \;
		$\FS(q')$ \;
}
}
\BlankLine
\BlankLine
\Function{\FuncSty{RP (ViewReusingPlan)}}
\KwIn{Query $q(\vec x) \datarule R_1(\vec x_1), \ldots, R_m(\vec x_m)$}	
\KwOut{Query plan $P$ that reuses views from HashMap \FuncSty{HM}}
\textbf{if} $\HM(q) \neq \emptyset $ \textbf{then} $P \leftarrow  \HM(q)$ \;\label{alg3:line17}
\Else{
	\textit{Insert here lines 1-11 from \autoref{alg:opt1Algorithm}, replacing \SP with \RP} \;
}
\end{algorithm2e}

\subsection{Opt.~3: Deterministic semi-join reduction}\label{sec:opt3}
\noindent
The most expensive operations in probabilistic query plans are the group-bys for the probabilistic project operations. These are often applied early in the plans to tuples which are later pruned and do not contribute to the final query result. 
Our third optimization is to first apply a \emph{full semi-join reduction on the input relations} before starting the probabilistic evaluation from these \emph{reduced input relations}. 
We like to draw here an important connection to \cite{DBLP:conf/icde/OlteanuHK09}, which introduces the idea of ``lazy plans'' and shows orders of magnitude performance improvements for safe plans by computing confidences not after each join and projection, but rather at the very end of the plan.
We note that our semi-join reduction \emph{serves the same purpose} with similar performance improvements and also apply for safe queries.
The advantage of semi-join reductions, however, is that we do not require any modifications to the query engine.

\section{Experiments}\label{sec:experiments}
\noindent
We are interested in both the efficiency (``how fast?'') and the quality (``how good?'') of ranking by dissociation as compared to 
exact probabilistic inference, 
Monte Carlo simulation (MC), 
and standard deterministic query evaluation (``deterministic SQL'').

\introparagraph{Ranking quality}
We use \textit{mean average precision} (MAP) to evaluate the quality of a ranking by comparing it against the ranking from exact probabilistic inference as ground truth (GT). 
MAP rewards rankings that place relevant items earlier;
the best possible value is 1, and the worst possible 0.
We use a variant of ``Average Precision at 10'' defined as $\mathrm{AP}@10 := \frac{\sum_{k=1}^{10} \mathit{\mathrm{P}@k}}{10}$. Here, P@$k$ is the precision at the $k$th answer, i.e., the fraction of top $k$ answers according to GT that are also in the top $k$ answers returned.
Averaging over several experiments yields MAP~\cite{Manning:2008:IIR:1394399}.
We use a variant of the analytic method proposed in~\cite{McSherryN08} to  calculate $\mathrm{AP}$ in the presence of ties.
As baseline for no ranking, we assume all tuples have the same score and are thus tied for the same position. We call this baseline ``\emph{random average precision}.''

\introparagraph{Exact probabilistic inference} 
Whenever possible, we calculate GT rankings with a tool called  
SampleSearch~\cite{DBLP:conf/uai/GogateD10,SampleSearch},
which also serves to evaluate the cost of \emph{exact} probabilistic inference.
We describe the method of transforming the lineage DNF into a format that can be read by SampleSearch  in~\cite{DBLP:journals/tods/GatterbauerS14}.

\introparagraph{Monte Carlo (MC)}
We evaluate the MC simulations for different numbers of samples and write MC($x$) for $x$ samples. For example, AP for MC(10k) is the result of sampling the individual tuple scores 10\,000 times from their lineages and then evaluating AP once over the sampled scores. The MAP scores together with the standard deviations are then the average over several repetitions.

\introparagraph{Ranking by lineage size}
To evaluate the potential 
of non-probabi\-listic methods for ranking answers,
we also rank the answer tuples by decreasing size of their lineages; i.e.\ number of terms.
Intuitively, a larger lineage size should indicate that an answer tuple has more ``support'' and should thus be more important.

\introparagraph{Setup 1}
We use the {TPC-H} DBGEN data generator~\cite{tpc-h}
to generate a 1GB database to which we add a column \sql{P} for each table and store it in PostgreSQL 9.2~\cite{postgresql}.
We assign to each input tuple $i$ 
a random probability $p_i$ uniformly chosen from the interval $[0, p_{i\max}]$, resulting in an expected average input probability $\avg[p_{i}] = p_{i\max}/2$. 
By using databases with $\avg[p_{i}] < 0.5$, we can avoid output probabilities close to $1$ for queries with very large lineages. 
We use the following parameterized query:
\begin{align*}
	&Q(a)  \datarule S(\underline{s},a), \mathit{PS}(s,u), P(\underline{u},n), s \leq \$1, n \sql{ like } \$2 \\
&
\parbox{30mm}{
\vspace{-1mm}
	\sql{
		\begin{tabbing}
		\hspace{3mm}\=\hspace{3mm}\=\hspace{3mm}\=\hspace{0cm}\=\hspace{0cm}\=\kill
		select distinct s\_nationkey	
		from Supplier, Partsupp, Part	\\
		where s\_suppkey = ps\_suppkey	
		and ps\_partkey = p\_partkey	\\
		and s\_suppkey $<=$ \$1			
		and p\_name like \$2	
		\end{tabbing}
\vspace{-3mm}
}}
\notag
\end{align*}
Parameters {\$1} and {\$2} allow us to 
change the lineage size.
Tables \sql{Supplier}, \sql{Partsupp} and \sql{Part} have 10k, 800k and 200k tuples, respectively.
There are 25 different numeric attributes for \sql{nationkey} and our goal is to efficiently rank these 25 nations. 
As baseline for not ranking, we use random average precision for 25 answers, which leads to MAP@10 $\approx 0.220$.
This query has two minimal query plans and we will compare the speed-up from either evaluating both individually or performing a deterministic semi-join reduction (Optimization 3) on the input tables.

\introparagraph{Setup 2}
We compare the run times for our three optimizations against evaluation of all plans for $k$-chain queries and $k$-star queries over varying database sizes (to evaluate data complexity) and varying query sizes (to evaluate query complexity): 
\begin{align*}
	&{\textrm{$k$-chain: }}q(x_0, x_k)  \datarule R_1(x_0, x_1), R_2(x_1,x_2), \ldots, R_k(x_{k-1},x_k) \\
	&{\textrm{$k$-star: }}q('\!a')	 \datarule R_1('\!a', x_1), R_2(x_2), \ldots, R_k(x_k), R_0(x_1, \ldots, x_k)
\end{align*}

\noindent
We denote the length of the query with $k$, the number of tuples per table with $n$, and the domain size with $N$. We use integer values which we uniformly draw from the range $\{1,2, \ldots N\}$. 
Thus, the parameter $N$ determines the \emph{selectivity} and is varied as to 
keep the answer cardinality constant around 20-50 for chain queries, or the
answer probability between 0.90 and 0.95 for star queries.
For the data complexity experiments, we vary the number of tuples $n$ per table between $100$ and $10^6$. 
For the query complexity experiments, we vary $k$ between $2$ and $8$ for chain queries.
For these experiments, the optimized (and often \emph{extremely long}) SQL statements are ``calculated'' in JAVA and then sent to Microsoft SQL server 2012.
To illustrate with numbers, we have to issue 429 query plans in order to evaluate the $8$-chain query (see \autoref{table:numberMinimalQueryPlans}). Each of these plans joins 8 tables in a different order. Optimization 1 then merges those plans together into one truly gigantic single query plan.

\subsection{Run time experiments}\label{sec:exOptimizations}

\begin{question}\label{question11}
When and how much do our three query optimizations speed up query evaluation?
\end{question}

\begin{result}\label{lesson11}
Combining plans (Opt.~1) and using intermediate views (Opt.~2) almost always speeds up query times.
The semi-join reduction (Opt.~3) 
slows down queries with high selectivities, but 
considerably speeds up queries with small selectivities.
\end{result}

\noindent
\Figuresref{Fig_SyntheticChain4}{Fig_SyntheticChainQuerySize}
show the results on setup 2 for increasing database sizes or query sizes.
For example, 
\autoref{Fig_SyntheticChain7} shows the performance of computing a 
7-chain query
which has 132 safe dissociations.  
Evaluating each of these queries separately takes a long time,
while our optimization techniques 
bring evaluation time close to deterministic query evaluation.  
Especially on larger databases, where the running time is I/O bound, the
penalty of the probabilistic inference is only a factor of 2-3 in this example.
Notice here the trade-off between optimization~1,2 and optimization 1,2,3: Optimization 3 applies a full semi-join reduction on the input relations before starting the probabilistic plan evaluation from these reduced input relations. This operation imposes a rather large constant overhead, both at the query optimizer and at query execution. For larger databases (but constant selectivity), this overhead is amortized. 
In practice, this suggests that dissociation allows us a large space of optimizations depending on the query and particular database instance that can conservatively extend the space of optimizations performed today in deterministic query optimizers.

\Figuresref{Fig_VLDBJ_TPCH_timing1}{Fig_VLDBJ_TPCH_timing3}
compare the running times on setup 1 between
dissociation with two minimal query plans  (``Diss''), 
dissociation with semi-join reduction (``Diss + Opt3''), exact probabilistic inference (``SampleSearch''), 
Monte Carlo with 1000 samples (``MC(1k)''),
retrieving the lineage only (``Lineage query''), 
and deterministic query evaluation without ranking (``Standard SQL'').
We fixed $\$2 \in \{ \sql{'\%red\%green\%'}, \sql{'\%red\%'}, \sql{'\%'} \}$ and varied $\$1 \in \{500, 1000, \ldots 10k\}$.
\Autoref{Fig_VLDBJ_TPCH_timing4} combines all three previous plots and shows the times as function of the maximum lineage size (i.e. the size of the lineage for the tuple with the maximum lineage) of a query.
We see here again that the semi-join reduction speeds up evaluation considerably for small lineage sizes
(\autoref{Fig_VLDBJ_TPCH_timing1} shows speedups of up to 36).
For large lineages, however, the semi-join reduction is an unnecessary overhead, as most tuples are participating in the join anyway (\autoref{Fig_VLDBJ_TPCH_timing2} shows overhead of up to 2).

\begin{figure*}[t]
    \centering
	\vspace{5mm}
	\subfloat[$4$-chain queries]
       	{\includegraphics[scale=0.43]{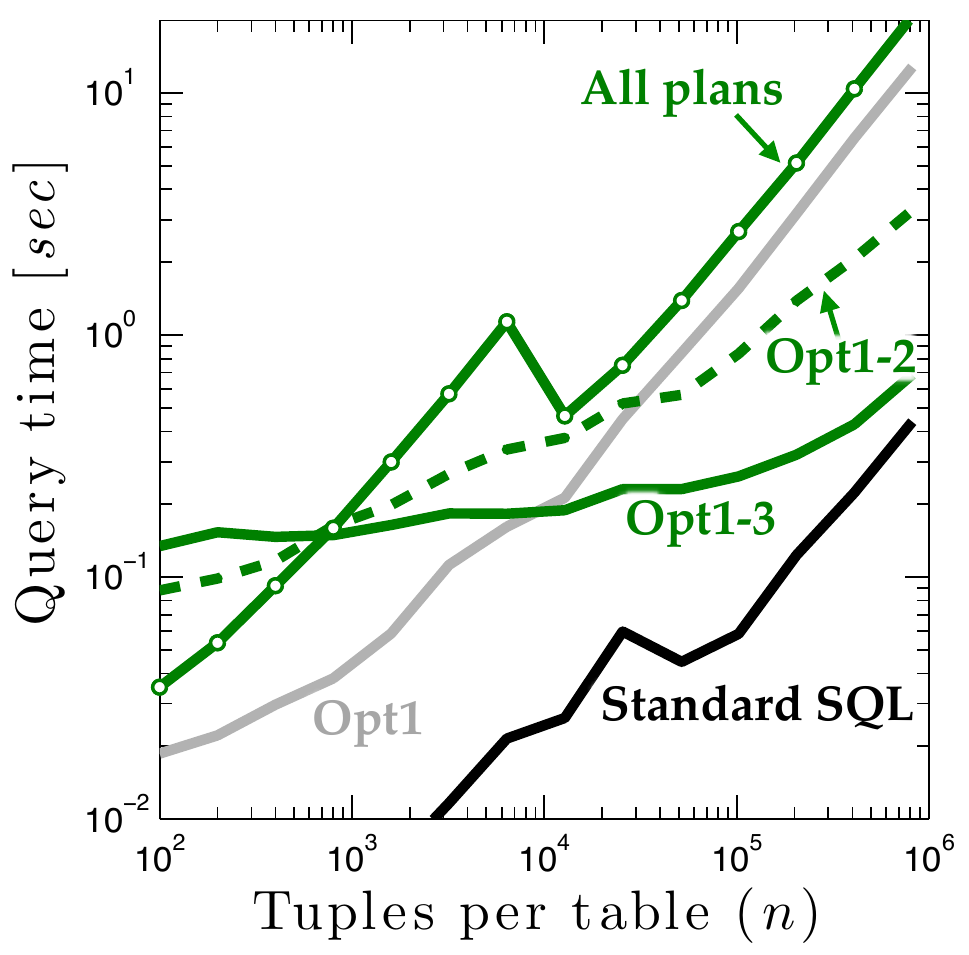}
    	\label{Fig_SyntheticChain4}}
	\hspace{-0.1mm}
	\subfloat[$7$-chain queries]
		{\includegraphics[scale=0.43]{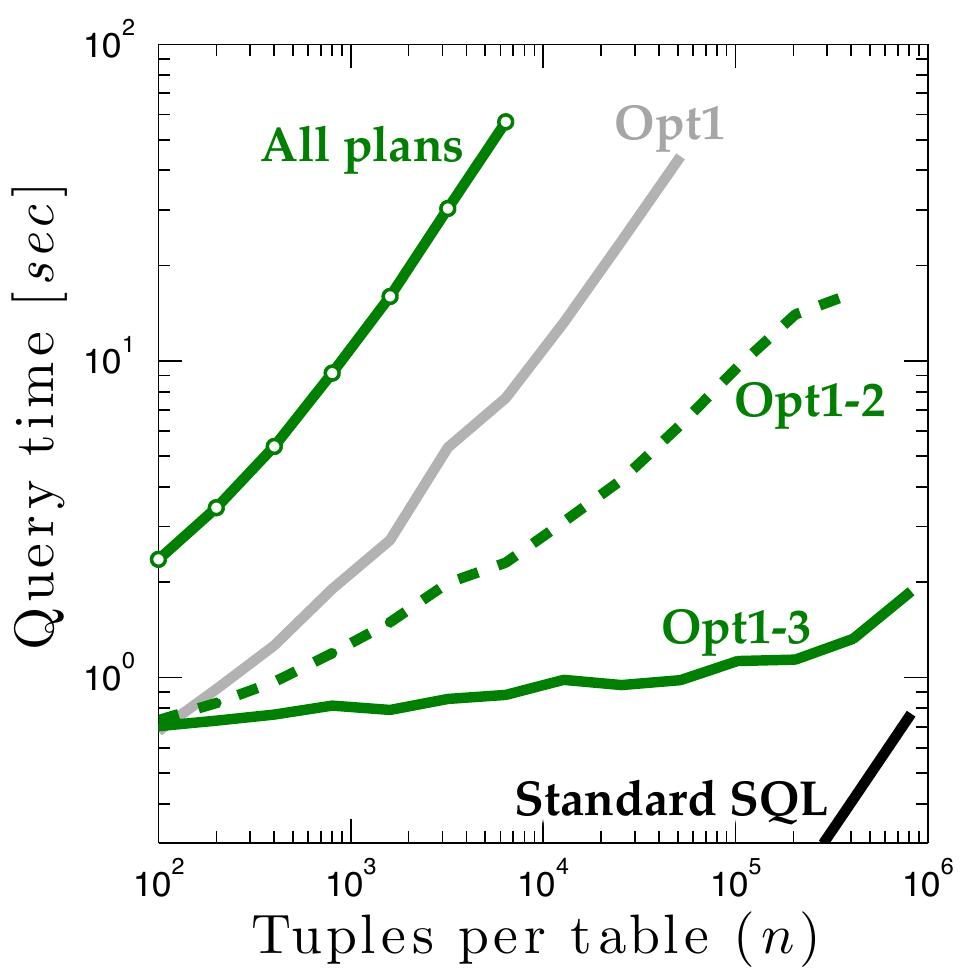}
		\label{Fig_SyntheticChain7}}
	\hspace{-0.1mm}
	\subfloat[$2$-star queries]
		{\includegraphics[scale=0.43]{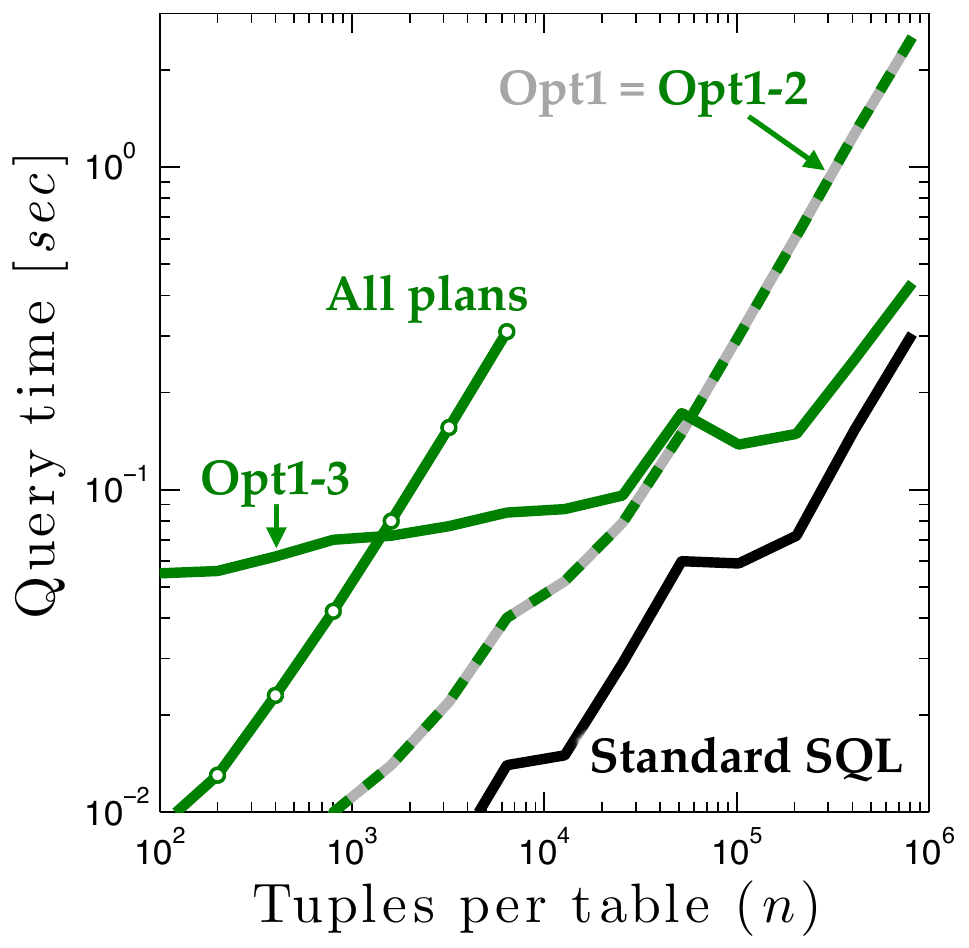}
		\label{Fig_SyntheticStar2}}
    \hspace{-0.1mm}
	\subfloat[$k$-chain queries]
       	{\includegraphics[scale=0.445]{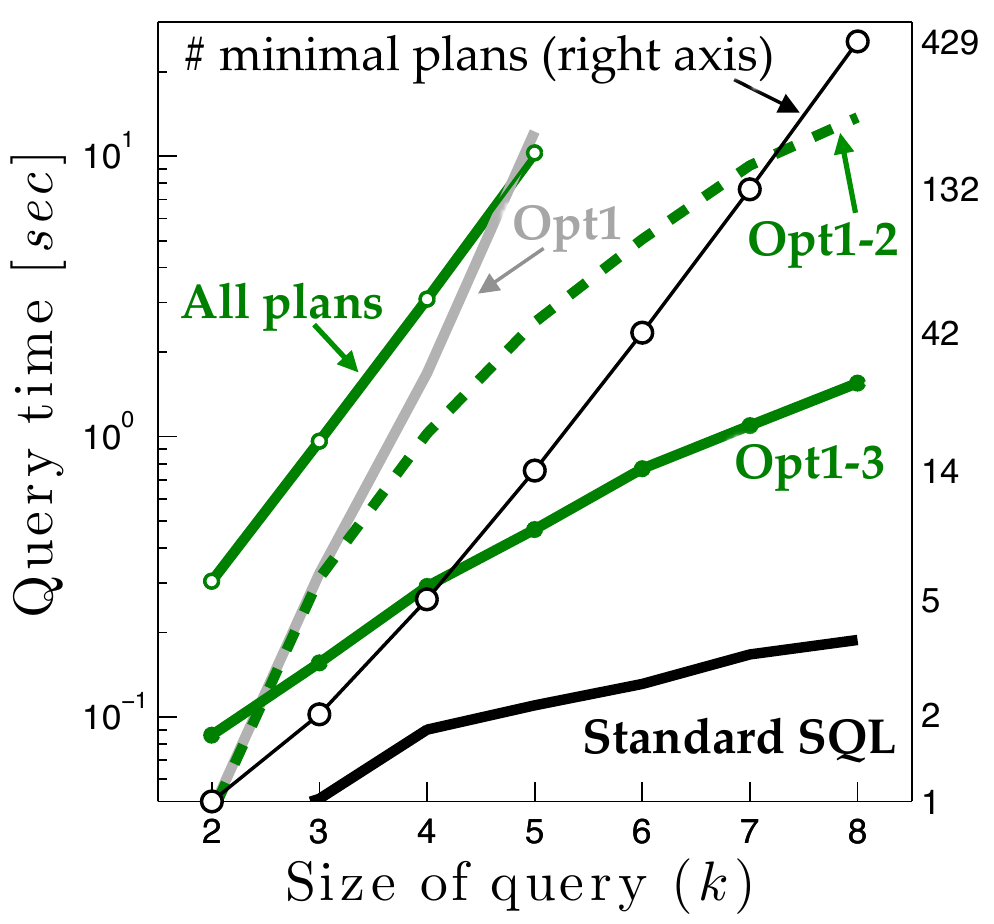}
    	\label{Fig_SyntheticChainQuerySize}}
	\hspace{1mm}
    \subfloat[\$2 = \%red\%green\%]
		{\includegraphics[scale=0.325]{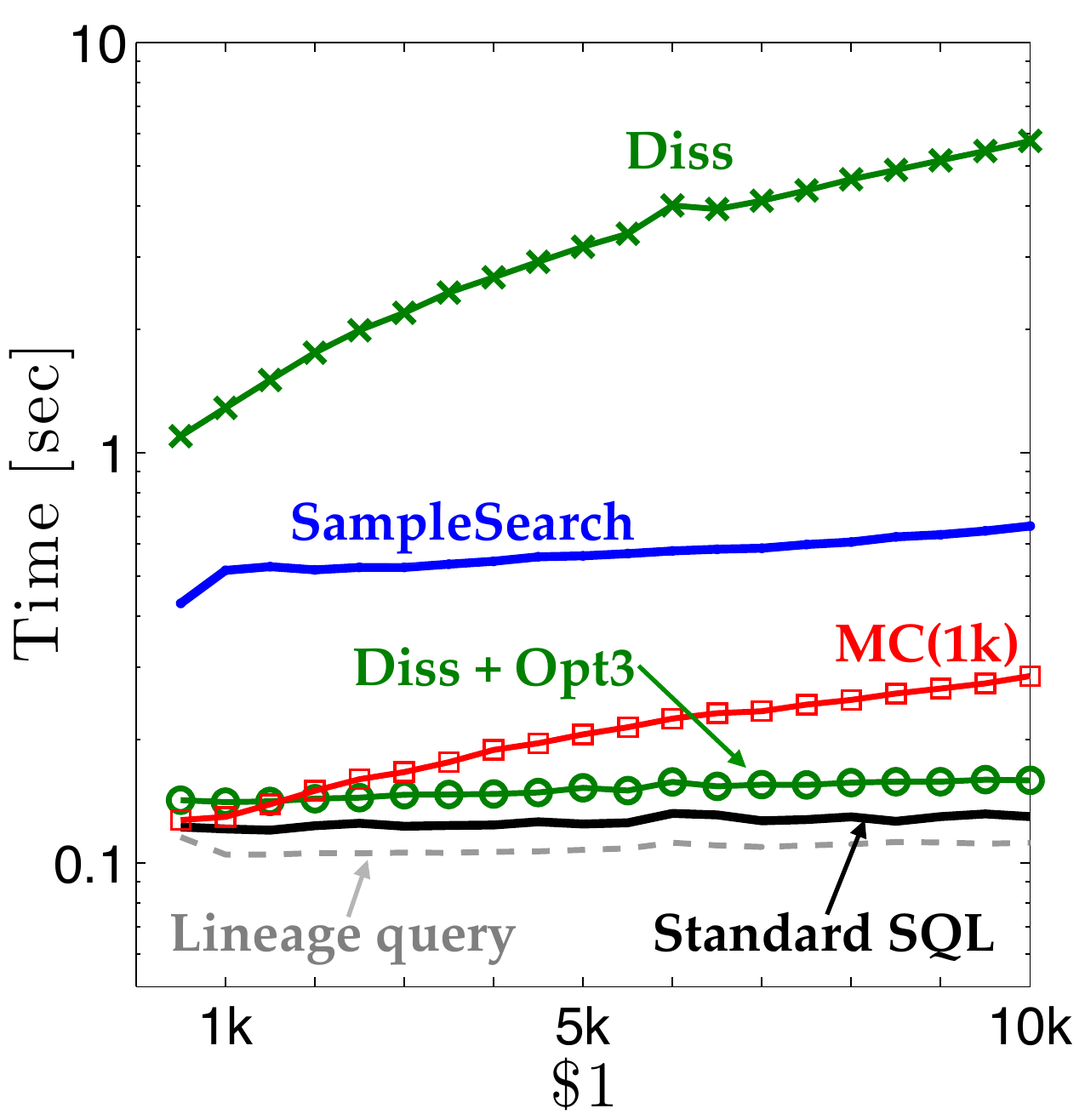}
		\label{Fig_VLDBJ_TPCH_timing1}}					
	\hspace{-0.2mm}
	\subfloat[\$2 = \%red\%]
		{\includegraphics[scale=0.325]{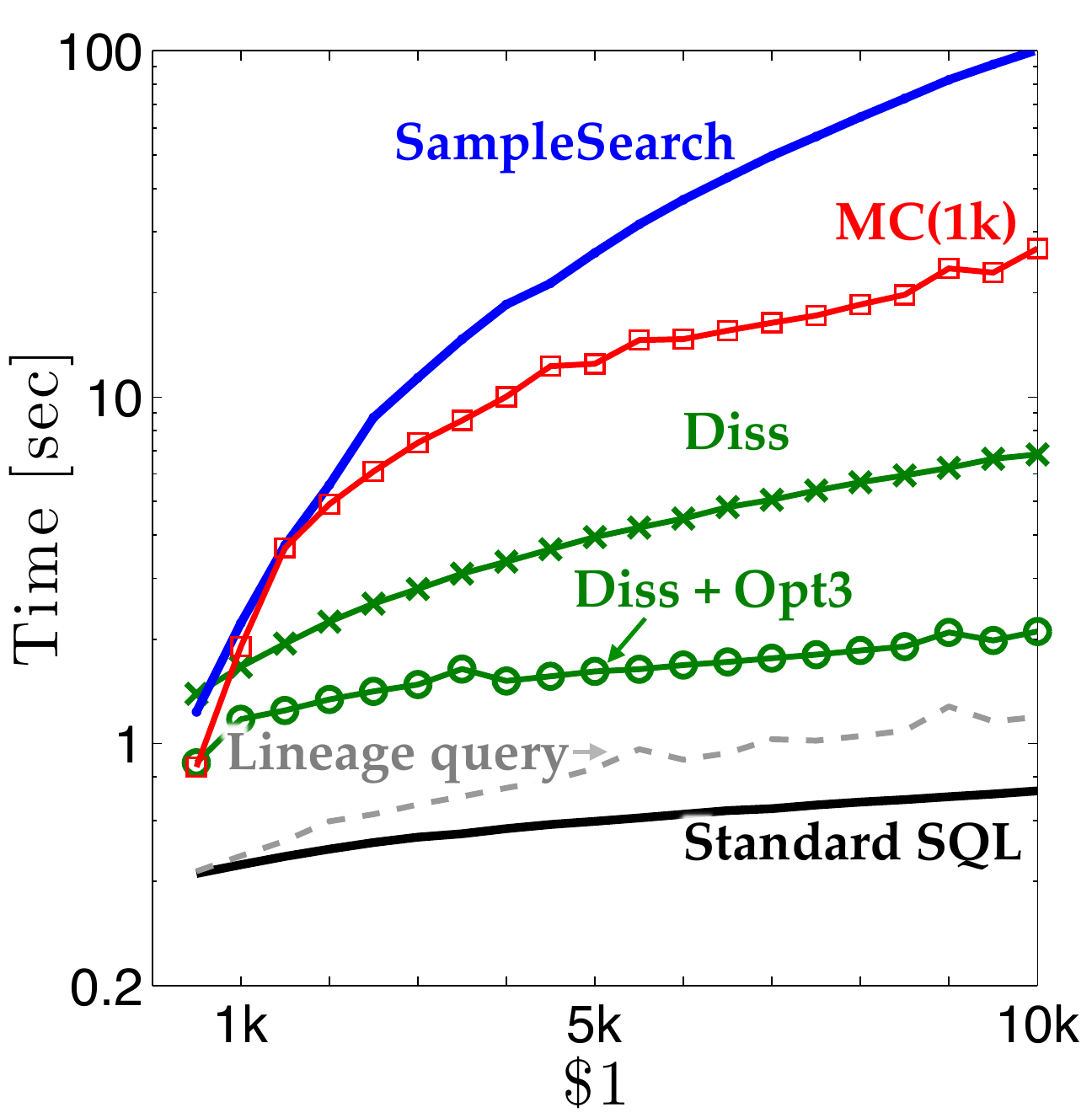}
		\label{Fig_VLDBJ_TPCH_timing2}}
	\hspace{-0.2mm}
    \subfloat[\$2 = \%]
		{\includegraphics[scale=0.325]{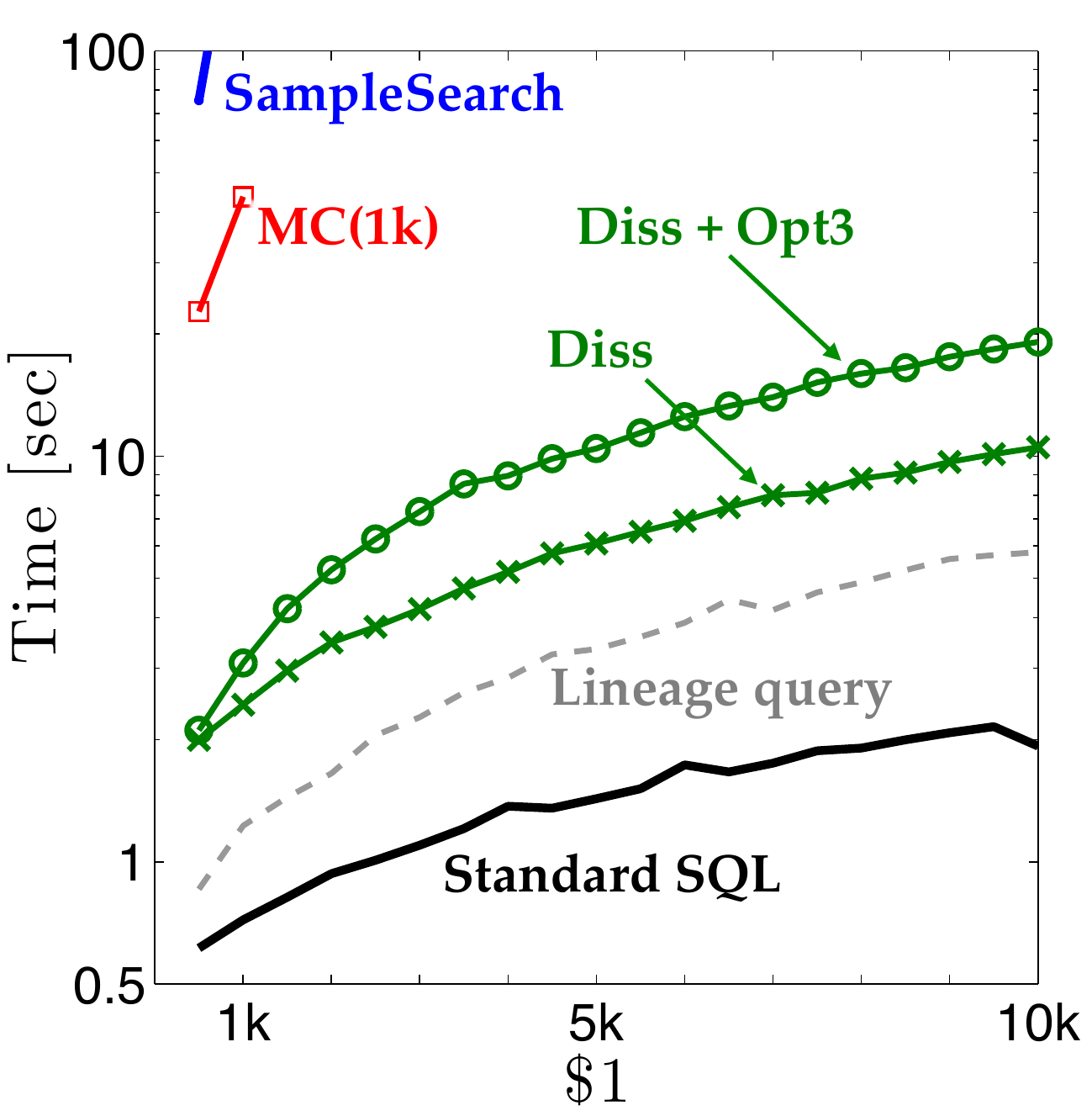}
		\label{Fig_VLDBJ_TPCH_timing3}}	
	\hspace{-0.2mm}
    \subfloat[Combining (a)-(c)]
		{\includegraphics[scale=0.325]{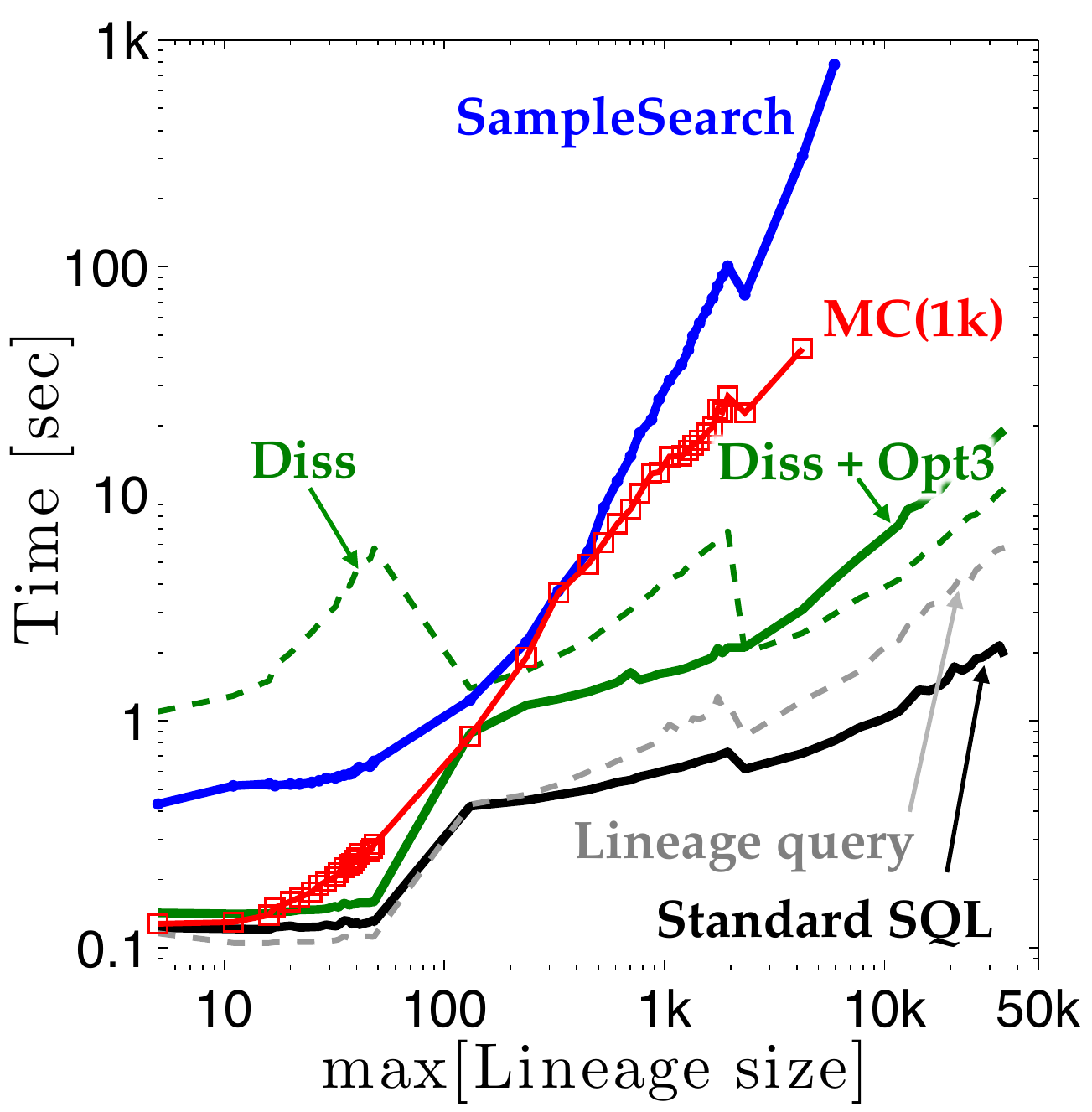}
		\label{Fig_VLDBJ_TPCH_timing4}}		
	\hspace{0.5mm}
	\subfloat[\autoref{lesson1}]
		{\includegraphics[scale=0.325]{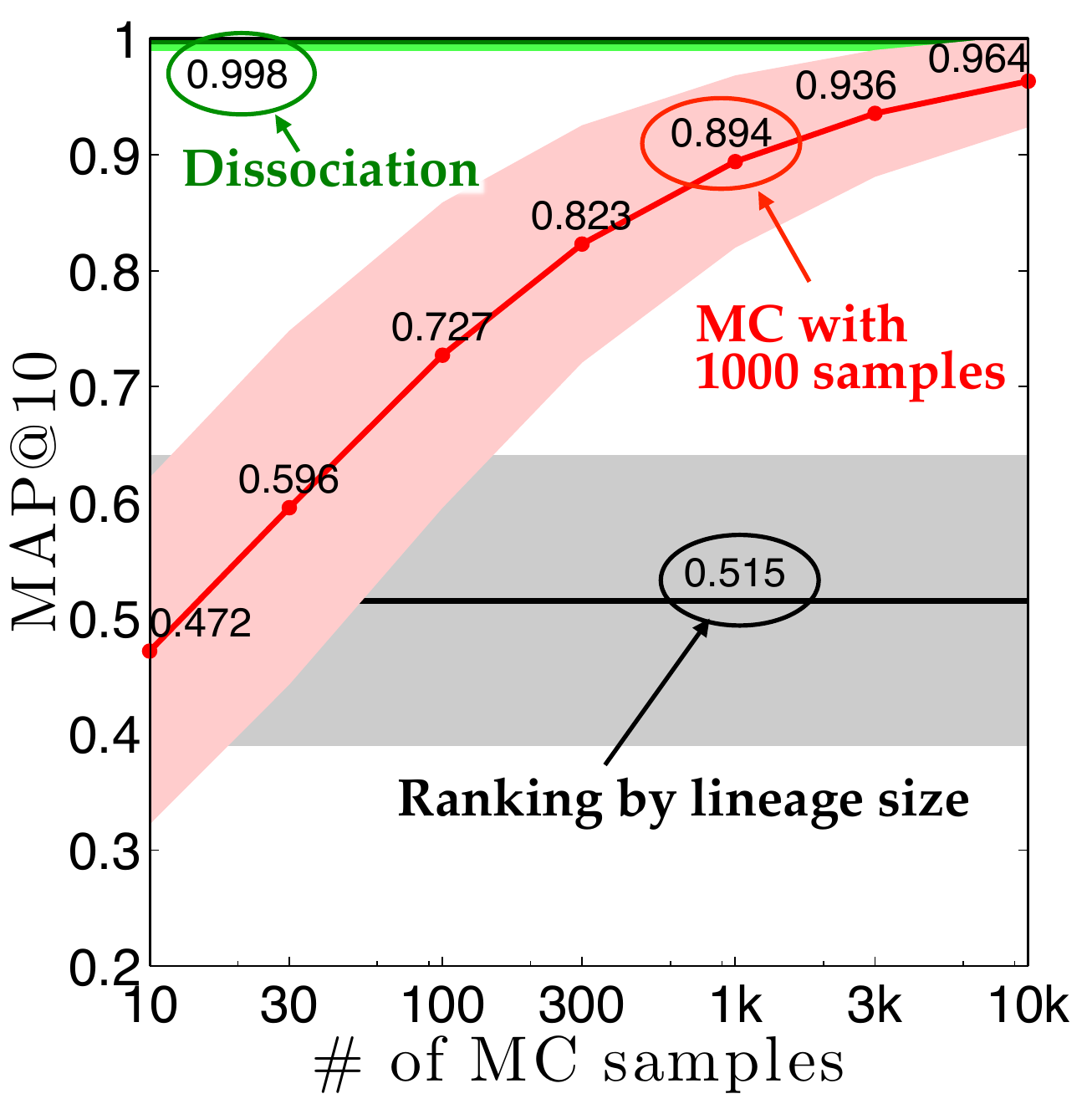}
		\label{Fig_VLDBJ_TPCH_AP_MC_aggregated_09}}
	\hspace{-1mm}
    \subfloat[\autoref{lesson2}]
		{\includegraphics[scale=0.325]{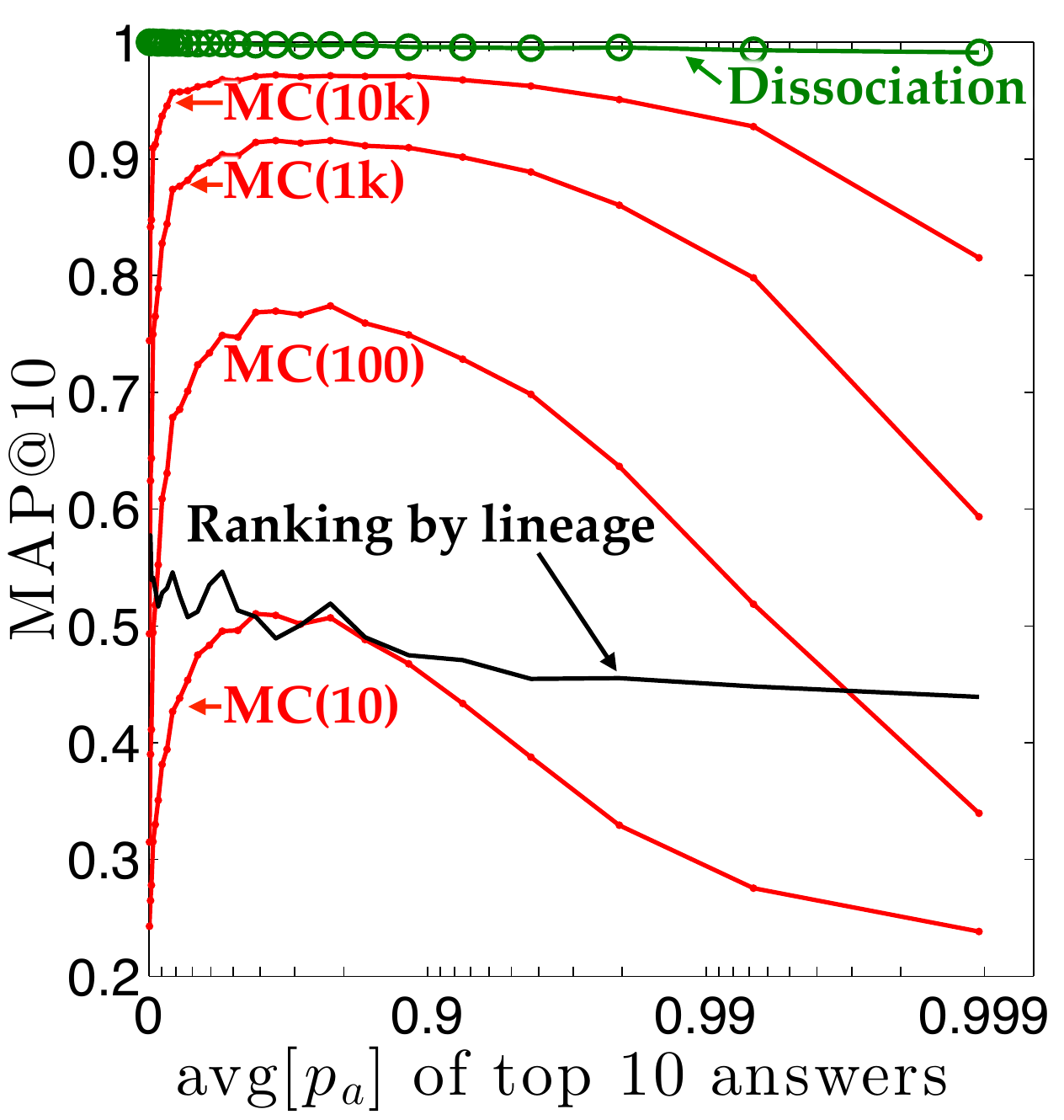}
		\label{Fig_VLDBJ_TPCH_ap_mc_random_redgreen_equalSizedBins_log}}	
	\hspace{-1mm}
    \subfloat[\autoref{lesson3}]
		{\includegraphics[scale=0.325]{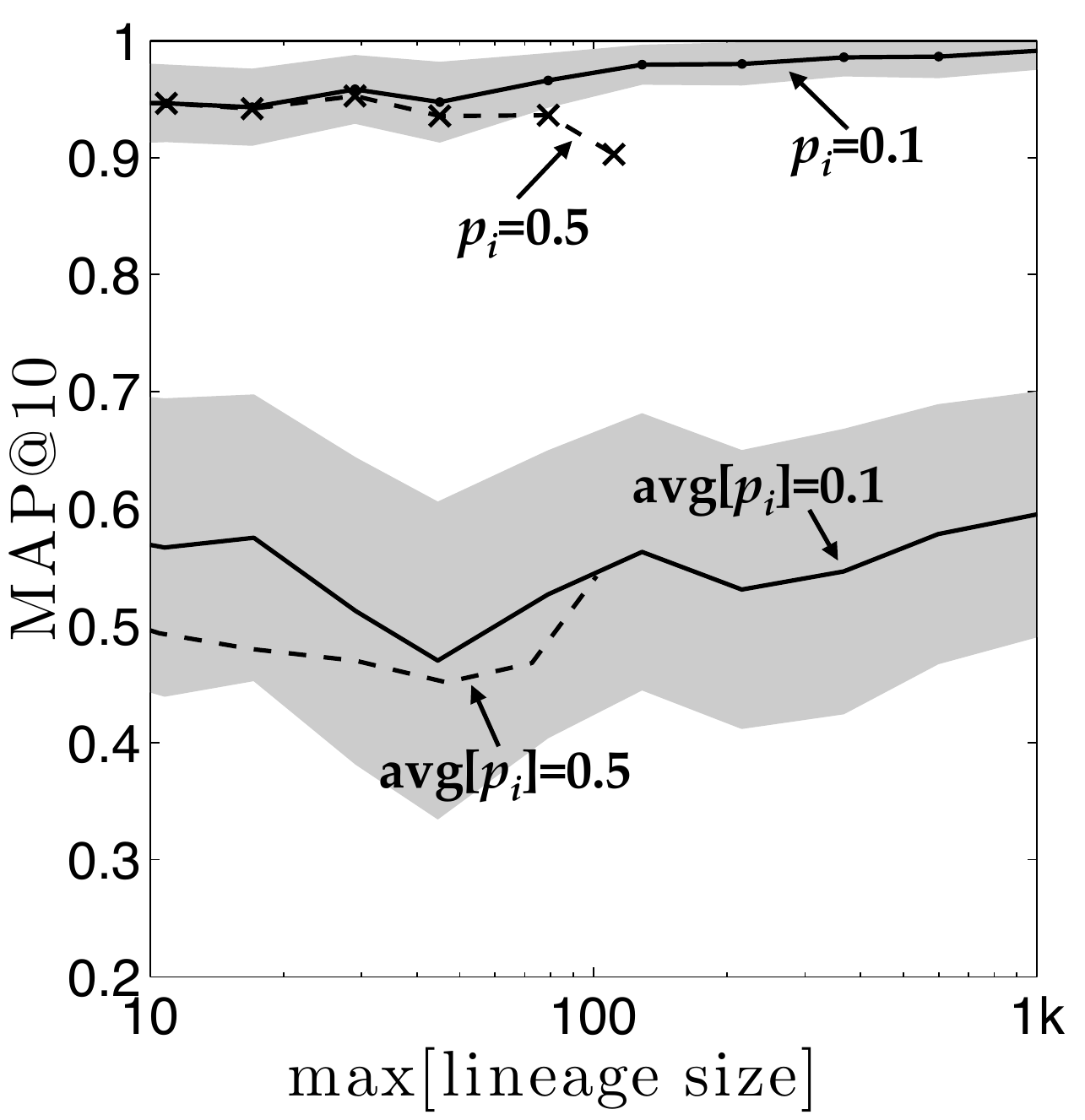}
		\label{Fig_VLDBJ_TPCH_AP_LineageRanking}}
	\hspace{-1mm}
	\subfloat[\autoref{lesson4}]
		{\includegraphics[scale=0.325]{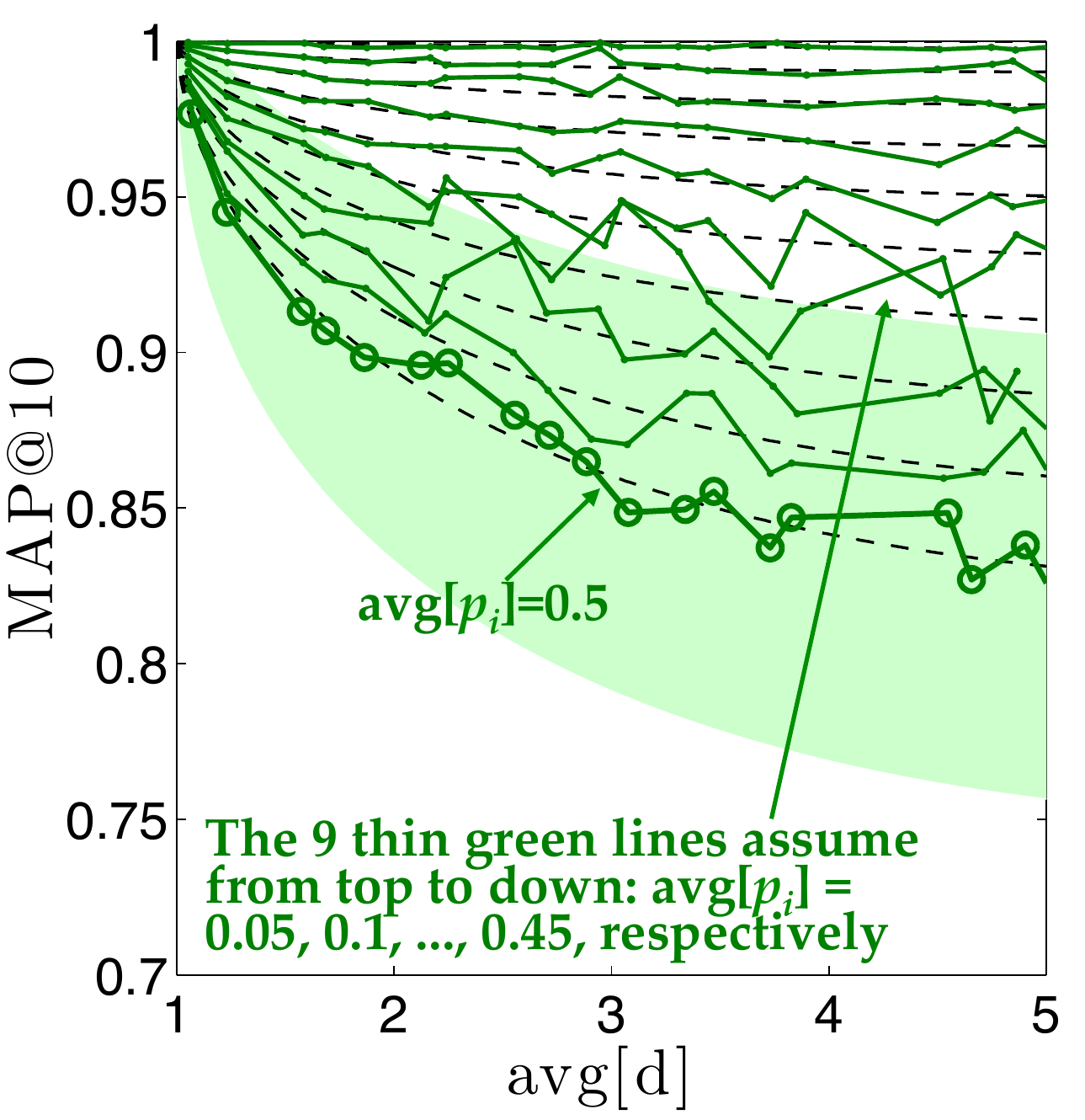}
		\label{Fig_VLDBJ_TPCH_AP_Diss_MC_Tradeoff1}}
	\hspace{-1mm}
	\subfloat[\autoref{lesson4}]
		{\includegraphics[scale=0.325]{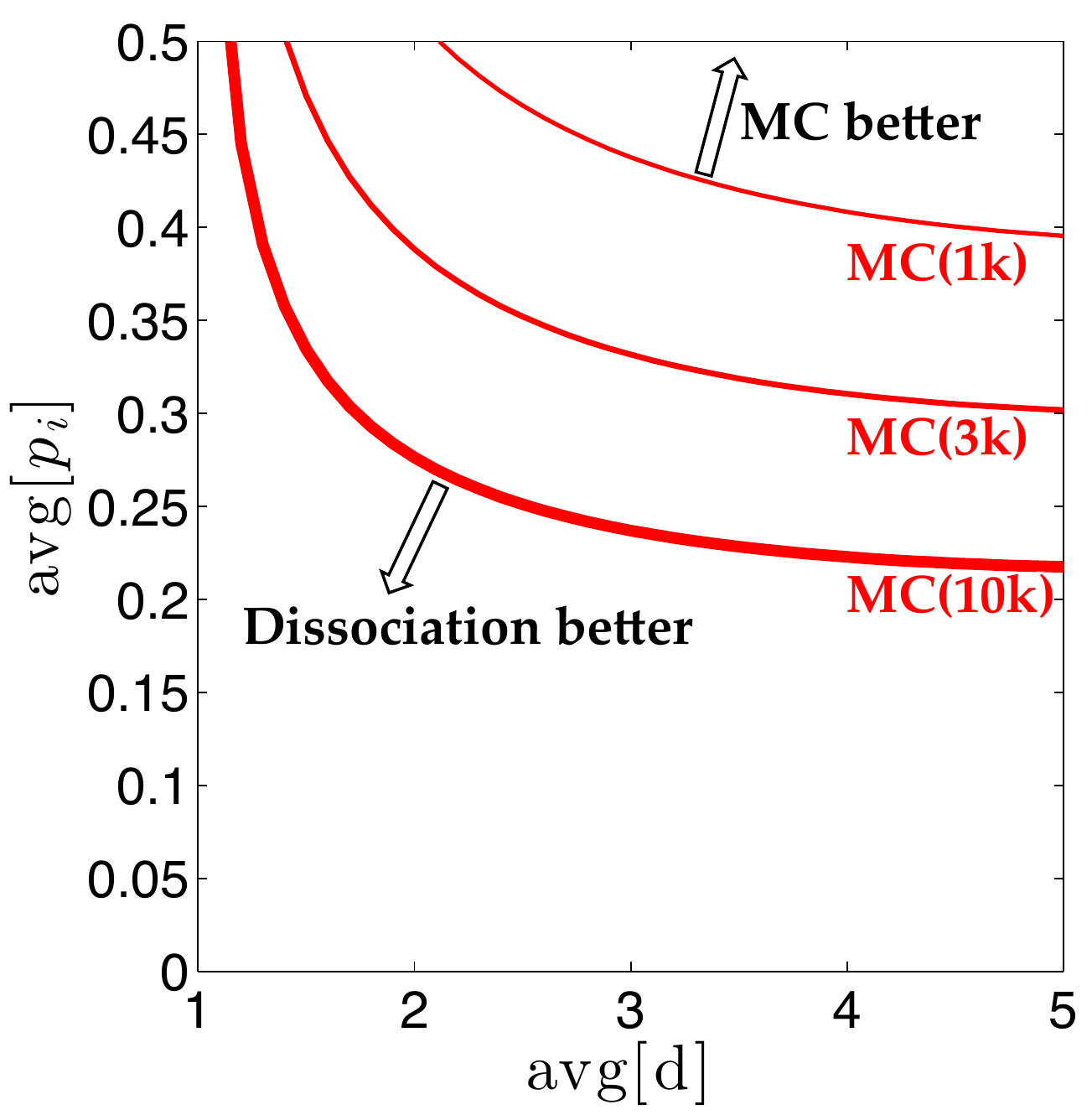}
		\label{Fig_VLDBJ_TPCH_AP_Diss_MC_Tradeoff2}}
	\hspace{-1mm} 
	\subfloat[\autoref{lesson6}]
		{\includegraphics[scale=0.325]{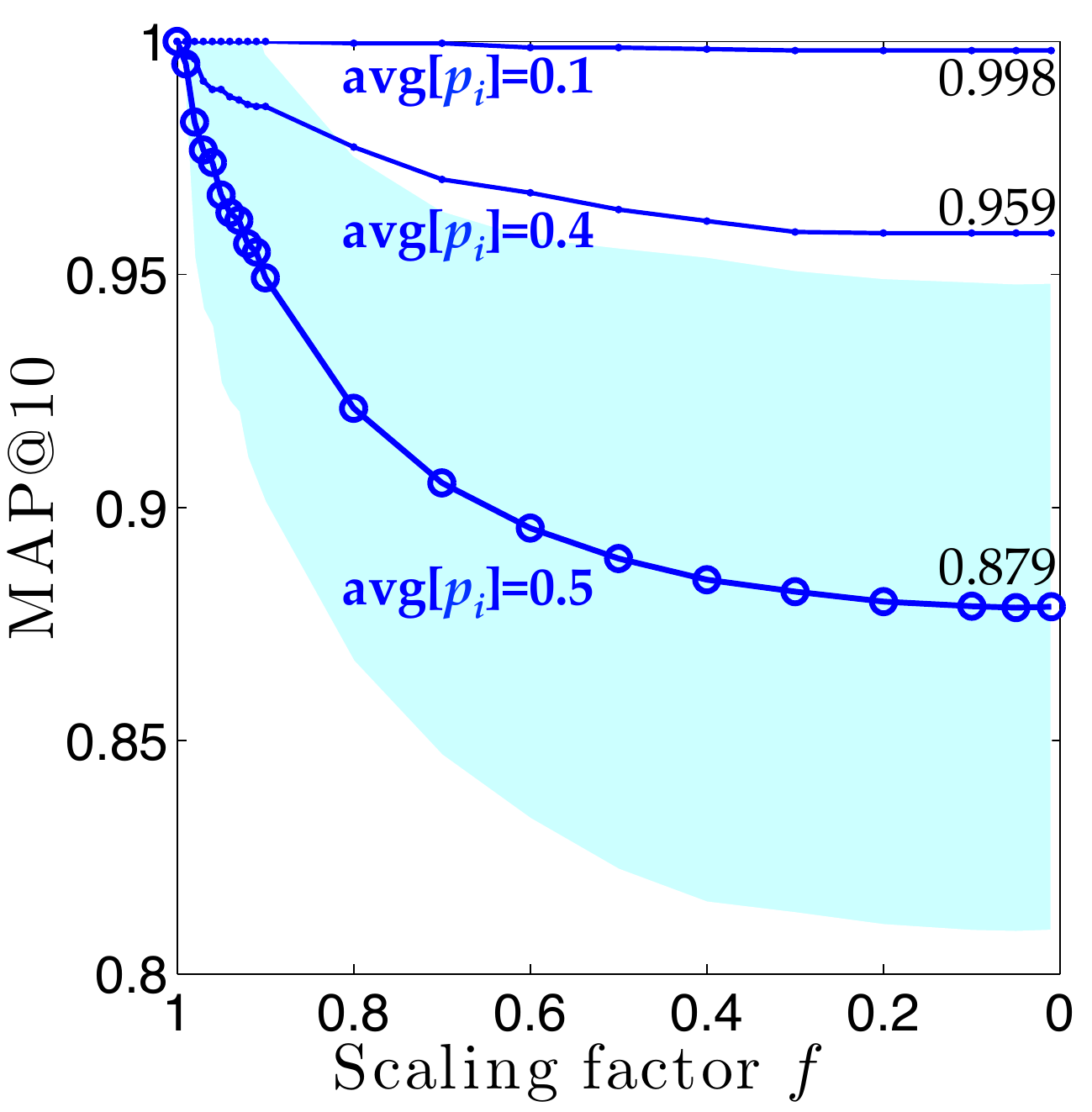}
		\label{Fig_VLDBJ_TPCH_AP_ScaledGT}}	
	\hspace{1mm} 
	\subfloat[\autoref{lesson6}]
		{\includegraphics[scale=0.38]{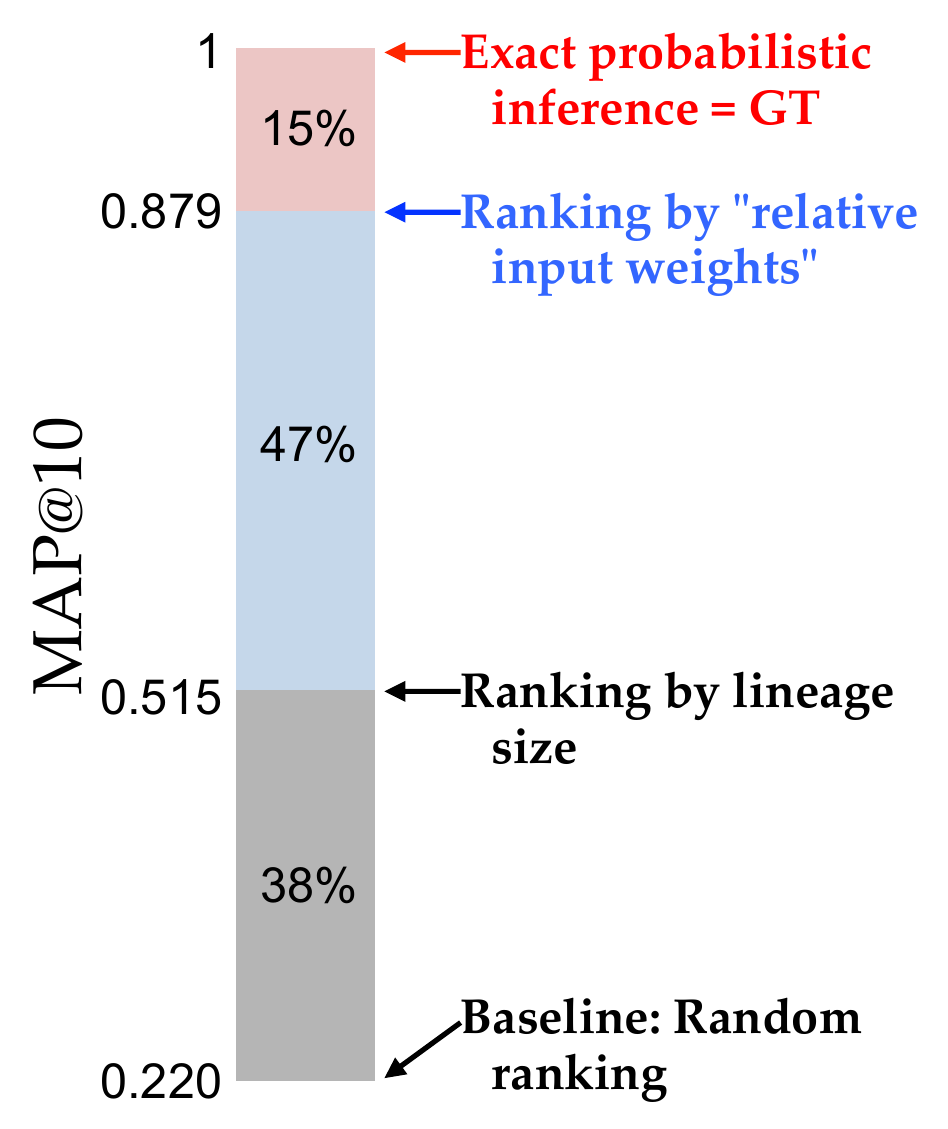}
		\label{Fig_VLDBJ_TPCH_AP_overviewBars}}	
	\hspace{3mm} 
	\subfloat[\autoref{lesson7}]
		{\includegraphics[scale=0.325]{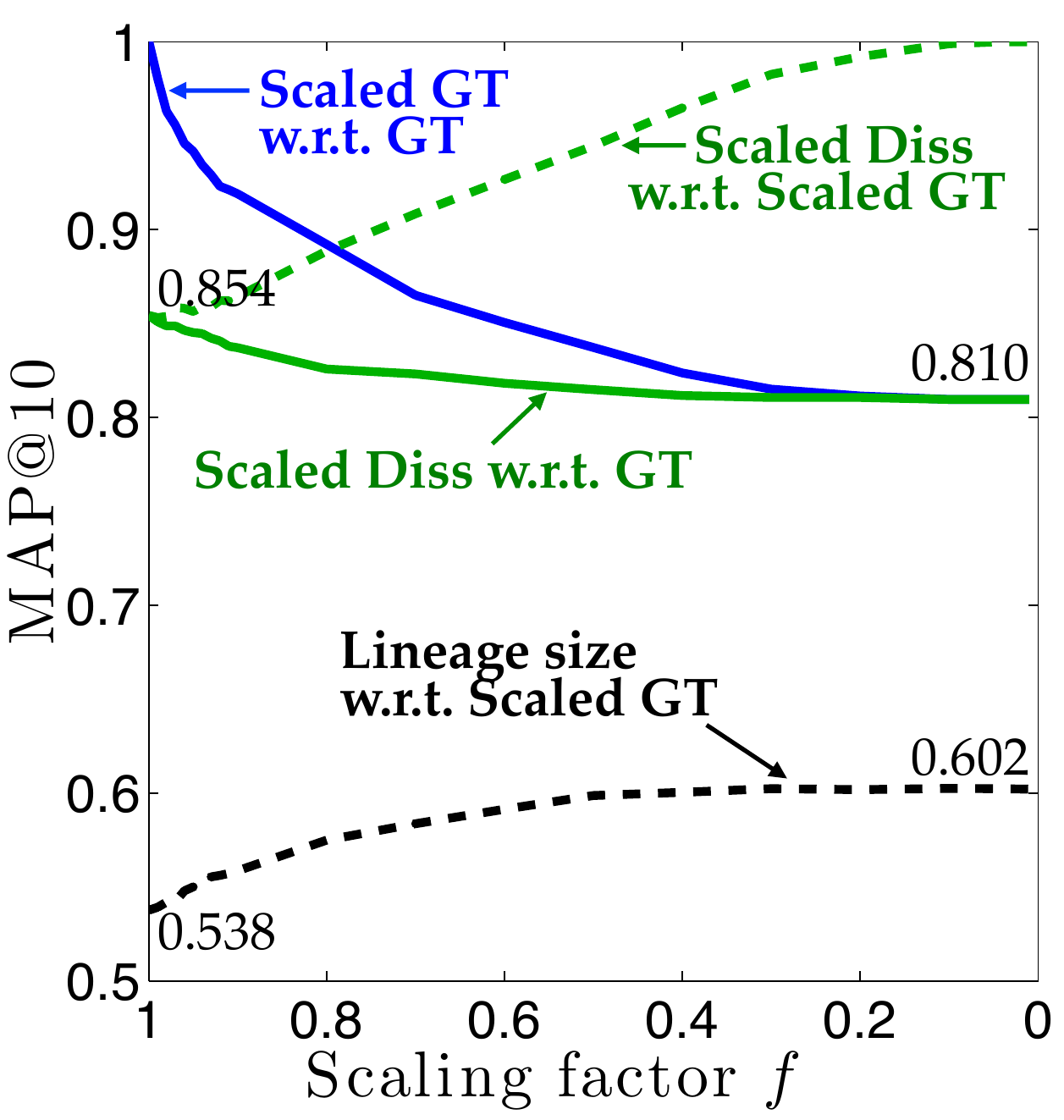}
		\label{Fig_VLDBJ_TPCH_AP_ScaledDissociation}}	
	\caption{
	Timing results: 
	(a)-(c) For increasing database sizes and constant cardinalities, our optimizations approach deterministic SQL performance.
	(d) Our optimizations can even evaluate very large number of minimal plans efficiently (here shown up to 429 for an 8-chain query).
	(e)-(h) For the TPC-H query, the best evaluation for dissociation is within a factor of $6$ of that for deterministic query evaluation.
	(i)-(p) Ranking experiments on TPC-H: Assumptions for each subfigure and conclusions that can be drawn are described in the main text in the respective result paragraph. }
    \label{Fig_VLDBJ_TPCH_AP_Diss_MC_Tradeoff}
\end{figure*}

\begin{question}\label{question8}
How does dissociation compare against other probabilistic methods and standard query evaluation?
\end{question}

\begin{result}\label{lesson8}
The best evaluation strategy for dissociation takes only a small overhead over standard SQL evaluation and is considerably faster than other probabilistic methods for large lineages.
\end{result}

\noindent
\Figuresref{Fig_SyntheticChainQuerySize}{Fig_VLDBJ_TPCH_timing4}
show that SampleSearch does not scale to larger lineages as 
the performance of exact probabilistic inference
depends on the
tree-width of the Boolean lineage formula, which generally increases with the size of the data. In contrast, dissociation is \emph{independent of the treewidth}.
For example, SampleSearch needed 780 sec for calculating the ground truth for a query with $\max[\textrm{lin}]= 5.9$k 
for which dissociation took 3.0 sec, and MC(1k) took 42 sec for a query with $\max[\textrm{lin}]= 4.2$k 
for which dissociation took 2.4 sec.
Dissociation takes only 10.5 sec for our largest query $\$2 = \sql{'\%'}$ and  $\$1 = 10k$ with  $\max[\textrm{lin}]= 35$k.
Retrieving the lineage for that query alone takes 5.8 sec, which implies that any probabilistic method 
that evaluates the probabilities outside of the database engine needs to issue this query to retrieve the DNF for each answer and would thus have to evaluate lineages of sizes around 35k in only 4.7 (= 10.5 - 5.8) sec to be faster than dissociation.\footnote{The time needed for the lineage query thus serves as minimum benchmark for \emph{any} probabilistic approximation.
The reported times for SampleSearch and MC are the sum of time for retrieving the lineage plus the actual calculations, without the time for reading and writing the input and output files for SampleSearch.}

\subsection{Ranking experiments}\label{sec:exRanking}

\noindent
For the following experiments, we are limited to those query parameters \$1 and \$2 for which we can get the ground truth (and results from MC) in acceptable time.
We systematically vary $p_{i\max}$ between $0.1$ and $1$ (and thus $\avg[p_{i}]$ between $0.05$ and $0.5$) and evaluate the rankings several times over randomly assigned input tuple probabilities.
We only keep data points (i.e.\ results of individual ranking experiments) for which the output probabilities are not too close to 1 to be meaningful ($\max[p_{a}]<0.999\,999$).

\begin{question}\label{question1}
How does ranking quality compare for our three ranking methods and which are the most important factors that determine the quality for each method?
\end{question}

\begin{result}\label{lesson1}
Dissociation performs better than MC which performs better than ranking by lineage size.
\end{result}

\noindent
\Autoref{Fig_VLDBJ_TPCH_AP_MC_aggregated_09} shows averaged results of our probabilistic methods for $\$2 = \sql{'\%red\%green\%'}$.\footnote{Results for MC with other parameters of \$2 are similar. However, the evaluation time for the experiments becomes quickly infeasible.} 
Shaded areas indicate standard deviations and the x-axis shows varying numbers of MC samples. 
We only used those data points for which $\avg[p_{a}]$ of the top 10 ranked tuples is between $0.1$ and $0.9$ according to ground truth 
($\approx 6$k data points for dissociation and lineage, $\approx 60$k data points for MC, as we repeated each MC simulation 10 times), as this is the best regime for MC, according to \autoref{lesson2}.
We also evaluated quality for dissociation and ranking by lineage for more queries by choosing parameter values for $\$2$ from a set of 28 strings, such as \sql{'\%r\%g\%r\%a\%n\%d\%'} 
and \sql{'\%re\%re\%'}.
The average MAP over all 28 choices for parameters \$2 is 0.997 for ranking by dissociation and 0.520 for ranking by lineage size ($\approx 100$k data points). Most of those queries have too large of a lineage to evaluate MC. Note that ranking by lineage always returns the same ranking for given parameters \$1 and \$2, but the GT ranking would change with different input probabilities.

\begin{result}\label{lesson2}
Ranking quality of MC increases with the number of samples and decreases when the average probability of the answer tuples $\avg[p_a]$ is close to $0$ or $1$.  
\end{result}	

\Autoref{Fig_VLDBJ_TPCH_ap_mc_random_redgreen_equalSizedBins_log}
shows the AP as a function of $\avg[p_{a}]$ of the top 10 ranked tuples according to ground truth by  logarithmic scaling of the x-axis (each point in the plot averages AP over $\approx 450$ experiments for dissociation and lineage and over $\approx 4.5$k experiments for MC). We see that MC performs increasingly poor for ranking answer tuples with probabilities close to $0$ or $1$ and even approach the quality of random ranking (MAP@10 = 0.22). 
This is so because, for these parameters, 
the probabilities of the top 10 answers are very close, and MC needs many iterations to distinguish them.
Therefore, MC performs increasingly poorly for increasing size of lineage but fixed average input probability $\avg [p_{i}] \approx 0.5$, as the average answer probabilities $\avg[p_{a}]$ will be close to $1$. 
In order not to ``bias against our competitor,'' we compared against MC
in its best regime with $0.1< \avg[p_a] <0.9$ in \autoref{Fig_VLDBJ_TPCH_AP_MC_aggregated_09}.

\begin{result}\label{lesson3}
Ranking by lineage size has good quality only when all input tuples have the same probability.
\end{result}

\Autoref{Fig_VLDBJ_TPCH_AP_LineageRanking} shows that ranking by lineage is good only when all tuples in the database have the \emph{same} probability
(labeled by $p_i = \textrm{const}$ as compared to $\avg[p_i] = \textrm{const}$). 
This is a consequence of the output probabilities depending mostly on the size of the lineages
if all probabilities are equal.
Dependence on other parameters, such as overall lineage size and magnitude of input probabilities (here shown for $p_i = 0.1$ and $p_i = 0.5$), seem to matter only slightly.

\begin{result}\label{lesson4}
The quality of dissociation decreases with the average number of dissociations per tuple $\avg[d]$ and with the average input probabilities $\avg[p_{i}]$.
Dissociation performs very well and notably better then MC(10k) if either $\avg[d]$ or $\avg[p_i]$ are small.
\end{result}

\noindent
Each answer tuple $a$ gets its score $p_a$ from one of two query plans $P_S$ and $P_P$ that dissociate tuples in tables $S$ and $P$, respectively. For example, if the lineage size for tuple $a$ is 100 and the lineage contains 20 unique suppliers from table $S$ and 50 unique parts from table $P$, then $P_S$ dissociates each tuple from $S$  into 5 tuples and $P_P$ each tuple from $P$ into $2$ tuples, on average. Most often, $P_P$ will then give the better bounds as it has fewer average dissociations. 
Let $\avg[d]$ be the mean number of dissociations for each tuple in the dissociated table of its respective optimal query plan, averaged across all top 10 ranked answer tuples. 
For all our queries (even those with $\$1 = 10k$ and $\$2 = \sql{'\%'}$), $\avg[d]$ stays below $1.1$ as, for each tuple, there is usually one plan that dissociates few variables. 
In order to \emph{understand the impact of higher numbers of dissociations} (increasing $\avg[d]$), we also measured AP for the ranking for \emph{each query plan individually}. 
Hence, for each choice of random parameters, we record two new data points -- one for ranking all answer tuples by using only $P_S$ and one for using only $P_P$ -- together with the values of $\avg[d]$ in the respective table that gets dissociated. This allows us to draw conclusions for a larger set of parameters.
\Autoref{Fig_VLDBJ_TPCH_AP_Diss_MC_Tradeoff1} plots MAP values 
as a function of $\avg[d]$ of the top 10 ranked tuples on the horizontal axis, and various values of $\avg[p_i]$ ($\avg[p_i] = 0.05, 0.10, \ldots, 0.5$).
Each plotted point averages over at least 10 data points (some have 10, other several 1000s). 
Dashed lines show a fitted parameterized curve to the data points on $\avg[p_i]$ and $\avg[d]$.
The figure also shows the standard deviations as shaded areas for $\avg[p_i] = 0.5$.
We see that the quality is very dependent on $\avg[p_i]$, as predicted by \autoref{prop:smallProbabilities}.

\Autoref{Fig_VLDBJ_TPCH_AP_Diss_MC_Tradeoff2} maps the trade-off between dissociation and MC for the two important parameters for the quality of dissociation ($\avg[d]$ and $\avg[p_i]$) and the number of samples for MC. For example, MC(1k) gives a better expected ranking than dissociation only for the small area above the thick red curve marked MC(1k). 
For MC, we used the test results from \autoref{Fig_VLDBJ_TPCH_AP_MC_aggregated_09}; i.e.\ assuming $0.1< \avg[p_a] <0.9$ for MC.
Also recall that for large lineages, having an input probability with $\avg[p_{i}] = 0.5$ will often lead to answer probabilities close to $1$ for which ranking is not possible anymore (recall \autoref{Fig_VLDBJ_TPCH_AP_LineageRanking}). Thus, for large lineages, we need small input probabilities to have meaningful interpretations. And for small input probabilities, dissociation considerably outperforms any other method.

\begin{question}\label{question6}
How much would the ranking change according to exact probabilistic inference if we scale down all input tuples?
\end{question}

\begin{result}\label{lesson6}
If the probabilities of all input tuples are already small, then scaling them further down does not affect the ranking much.
\end{result}

\noindent
Here, we repeatedly evaluated the exact ranking for 7 different parameterized queries over randomly generated databases with one query plan that has 
$\avg[d] \approx 3$, for two conditions:
first on a probabilistic database with $avg[p_{i}]$ input probabilities (we defined the resulting ranking as GT);
then again on a scaled version, where all input probabilities in the database are multiplied by the same scaling factor $f \in (0,1)$. We then compared the new ranking against GT. 
\Autoref{Fig_VLDBJ_TPCH_AP_ScaledGT} shows that if all input probabilities are already small (and dissociation already works well), then scaling has little 
effect on the ranking. 
However, for $avg[p_{i}]=0.5$ (and thus many tuples with $p_i$ close to $1$), we have a few tuples with $p_i$ close to $1$. These tuples are very influential for the final ranking, but their relative influence decreases if scaled down even slightly. 
Also note that even for $avg[p_{i}]=0.5$, scaling a database by a factor $f = 0.01$ instead of $f = 0.2$ does not make a big difference. However, the quality remains well above ranking by lineage size (!). 
This suggests that the difference between ranking by lineage size (MAP $\approx  0.529$) and the ranking on a scaled database for $f \rightarrow 0$ (MAP $\approx  0.879$) can be attributed to the relative weights of the input tuples (we thus refer to this as ``\emph{ranking by relative input weights}''). The remaining difference in quality then comes from the \emph{actual probabilities} assigned to each tuple.
Using MAP $\approx  0.220$ as baseline for random ranking, 
38\% of the ranking quality can be found by the lineage size alone vs.\ 85\% by the lineage size plus the relative weights of input tuples. The remaining 15\% come from the actual probabilities (\autoref{Fig_VLDBJ_TPCH_AP_overviewBars}).

\begin{question}\label{question7}
Does the expected ranking quality of dissociation decrease to random ranking for increasing fractions of dissociation (just like MC does for decreasing number of samples)?
\end{question}

\begin{result}\label{lesson7}
The expected performance of dissociation for increasing $\avg[d]$ for a particular query is lower bounded by the quality of ranking by relative input weights.
\end{result}

\noindent
Here, we 
use a similar setup as before and now compare various rankings against each other:
SampleSearch on the original database (``GT'');
SampleSearch on the scaled database (``Scaled GT'');
dissociation on the scaled database (``Scaled Diss''); and
ranking by lineage size (which is unaffected by scaling).
From \autoref{Fig_VLDBJ_TPCH_AP_ScaledDissociation}, we see
that the quality of Scaled Diss w.r.t.\ Scaled GT $\rightarrow 1$ for $f \rightarrow 0$ since dissociation works increasingly well for small $\avg[p_i]$ (recall \autoref{prop:smallProbabilities}). We also see that Scaled Diss w.r.t.\ GT decreases towards Scaled GT w.r.t.\ GT for $f \rightarrow 0$.
Since dissociation can always reproduce the ranking quality of ranking by relative input weights by first downscaling the database (though losing information about the actual probabilities) the expected quality of dissociation for smaller scales does not decrease to random ranking, but rather to ranking by relative weights. 
Note this result only holds for the expected MAP; any particular ranking can still be very much off.

\section{Related Work}\label{sec:relatedWork}

\introparagraph{Probabilistic databases} Current approaches to query
evaluation on probabilistic databases can be classified into three
categories: 
($i$)~\emph{incomplete approaches} identify tractable
cases either at the
query-level~\cite{DBLP:journals/vldb/DalviS07,DBLP:journals/jacm/DalviS12,FinkO:PODS2014dichotomy}
or the
data-level~\cite{DBLP:conf/sum/OlteanuH08,DBLP:conf/icdt/RoyPT11,SenDeshpandeGetoor2010:ReadOnce};
($ii$)~\emph{exact approaches}
\cite{DBLP:conf/icde/AntovaJKO08,JhaOS2010:EDBT,DBLP:conf/sum/OlteanuH08,DBLP:conf/icde/OlteanuHK09,DBLP:conf/icde/SenD07}
work well on queries with simple lineage expressions, but perform
poorly on database instances with complex lineage expressions.
($iii$)~\emph{approximate approaches} either apply general purpose
sampling
methods~\cite{DBLP:conf/sigmod/JampaniXWPJH08,DBLP:journals/vldb/JoshiJ09,DBLP:conf/icde/KennedyK10,DBLP:conf/icde/ReDS07},
or approximate the number of models of the Boolean lineage
expression~\cite{DBLP:dblp_conf/icdt/FinkO11,OlteanuHK2010:ICDE,DBLP:journals/pvldb/ReS08}.
Our work can be seen as a generalization of  several
of these techniques: Our algorithm returns the exact score if the
query is
safe~\cite{DBLP:journals/vldb/DalviS07,DBLP:conf/icde/OlteanuHK09} or
data-safe~\cite{JhaOS2010:EDBT}.

\introparagraph{Lifted and approximate inference} Lifted inference was
introduced in the AI literature as an approach to probabilistic
inference that uses the first-order formula to exploit symmetries at
the grounded level~\cite{DBLP:conf/ijcai/Poole03}.  This research
evolved independently of that on probabilistic databases, and the two
have many analogies: A formula is called \emph{domain liftable} iff
its data complexity is in polynomial time~\cite{jaeger-broeck-2012},
which is the same as a \emph{safe query} in probabilistic databases,
and the FO-d-DNNF circuits described
in~\cite{DBLP:conf/ijcai/BroeckTMDR11} correspond to the safe plans
discussed in this paper.  See~\cite{BroeckSuciu:UAI2014tutorial} for a
recent discussion on the similarities and differences.

\introparagraph{Representing Correlations} The most popular approach
to represent correlations between  tuples in a probabilistic
database is by a Markov Logic network (MLN) which is a set of
\emph{soft constraints}~\cite{DBLP:series/synthesis/2009Domingos}.
Quite remarkably, all complex correlations introduced by an MLN can be
rewritten into a query over a tuple-independent probabilistic
database~\cite{DBLP:conf/uai/GogateD11a,BroeckMD:KR2014,DBLP:journals/pvldb/JhaS12}.
In combination with such rewritings, our techniques can be also
applied to MLNs if their rewritings results in conjunctive
queries without self-joins.

\introparagraph{Dissociation} Dissociation was first introduced in the
workshop paper~\cite{GatterbauerJS2010:MUD}, presented as a way to
generalize graph propagation algorithms to hypergraphs.  Theoretical
upper and lower bounds for dissociation of Boolean formulas, including
\autoref{th:bool:dissoc}, were proven
in~\cite{DBLP:journals/tods/GatterbauerS14}.
Dissociation is related to a technique called \emph{relaxation}
for probabilistic inference in graphical
models~\cite{DBLP:conf/uai/BroeckCD12}.

\section{Conclusions and Outlook}\label{sec:conclusion}
This paper proposes to approximate probabilistic query evaluation by evaluating a fixed number of query plans, each providing an upper bound on the true probability, then taking their minimum. 
We provide an algorithm that takes into account important schema information to enumerate only the minimal necessary plans among all possible plans, and prove it to be a strict generalization of all known results of PTIME self-join-free conjunctive queries. 
We describe relational query optimization techniques that 
allow us to evaluate all minimal queries in a single query and very fast:
Our experiments show that these optimizations bring approximate probabilistic query evaluation close to standard query evaluation while providing high ranking quality.
In future work, we plan to generalize this approach to full first-order queries.
We will also make slides illustrating our algorithms available at \url{http://LaPushDB.com}.

\smallsection{Acknowledgements}
This work was supported in part by NSF grants IIS-0513877, IIS-0713576, IIS-0915054, and IIS-1115188.
We thank the reviewers for 
their careful reading 
of this manuscript 
and 
their detailed feedback. WG
would also like to thank Manfred Hauswirth for a small comment
in 2007 that was crucial for 
the development of the ideas in this paper.

\balance
\bibliography{\bibpath}

\end{document}